\newcounter{programCounter}
\crefname{programCounter}{Program}{Programs}
\newenvironment{program}{\begin{subequations}\refstepcounter{programCounter}}{\end{subequations}}
\DeclareRobustCommand{\Arrow}[1][]{%
	\check@mathfonts
	\if\relax\detokenize{#1}\relax
	\settowidth{\dimen@}{$\m@th\rightarrow$}%
	\else
	\setlength{\dimen@}{#1}%
	\fi
	\sbox\z@{\usefont{U}{lasy}{m}{n}\symbol{41}}%
	\begin{picture}(\dimen@,\ht\z@)
		\roundcap
		\put(\dimexpr\dimen@-.7\wd\z@,0){\usebox\z@}
		\put(0,\fontdimen22\textfont2){\line(1,0){\dimen@}}
	\end{picture}%
}
\newcommand{\shortto}{\hspace{.6mm}\scalebox{1}{\Arrow[.2cm]}\hspace{.6mm}}
\newcommand{\mc}{\mathcal}
\newcommand{\Reals}{\mathbb{R}}
\newcommand{\A}{\mathcal{A}}
\newcommand{\eps}{\varepsilon}
\newcommand{\baylp}{\ensuremath{\textsc{B-Lp}}}
\newcommand{\BAindR}{\A^\dagger_\mathsf{R}}
\newcommand{\None}{N_{\uparrow}}
\newcommand{\Ntwo}{N_{\downarrow}}
\newcommand{\poly}{\textsf{poly}}
\newcommand{\opt}{\textsc{Opt}}
\newcommand{\optcrc}{\ensuremath{\textsc{Opt}_\textsc{R}}}
\newcommand{\optdc}{\ensuremath{\textsc{Opt}_\textsc{D}}}
\newcommand{\boptcrc}{\ensuremath{\textsc{B-Opt}_\textsc{R}}}
\newcommand{\boptdc}{\ensuremath{\textsc{B-Opt}_\textsc{D}}}
\newcommand{\nollboptdc}{\ensuremath{\textsc{noLL-B-Opt}_\textsc{D}}}
\newcommand{\optsing}{\ensuremath{\textsc{Opt}_\textsc{S}}}
\newcommand{\lprc}{\ensuremath{\textsc{Lp}}}
\newcommand{\Aind}{\A^\circ}
\newcommand{\AindR}{\Aind_\mathsf{R}}
\newcommand{\lpaux}{\ensuremath{\overline{\textsc{Lp}}}}
\newcommand{\aone}{a_{\uparrow}}
\newcommand{\atwo}{a_{\downarrow}}
\newcommand{\lambdaone}{\lambda_{\uparrow}}
\newcommand{\lambdatwo}{\lambda_{\downarrow}}
\newcommand{\detcontr}{\ensuremath{\textsc{AffineContract}_\delta}\xspace}
\newcommand{\sw}{\mathsf{Sw}}
\newcommand{\vsw}{\mathsf{V}_\sw}
\newcommand{\pushright}[1]{\ifmeasuring@#1\else\omit\hfill$\displaystyle#1$\fi\ignorespaces}
\newcommand{\pushleft}[1]{\ifmeasuring@#1\else\omit$\displaystyle#1$\hfill\fi\ignorespaces}
\newcommand{\specialcell}[1]{\ifmeasuring@#1\else\omit$\displaystyle#1$\ignorespaces\fi}
\newtheorem{corollary}{Corollary}[section]
\newtheorem{assumption}{Assumption}[section]
\newtheorem{remark}{Remark}[section]
\newtheorem{theorem}{Theorem}[section]
\newtheorem{lemma}{Lemma}[section]
\definecolor{mygreen}{rgb}{0.0, 0.5, 0.0}
\definecolor{myorange}{rgb}{1, 0.7, 0.1}
\newcommand{\ie}{\emph{i.e.}}
\newcommand{\range}[1]{\llbracket{#1}\rrbracket}
\newcommand{\B}{\mathsf{B}}
\title{Multi-Agent Contract Design beyond Binary Actions}
\author{
	Federico Cacciamani\\
	Politecnico di Milano\\
	\texttt{federico.cacciamani@polimi.it}
	\And
	Martino Bernasconi\\
	Bocconi University\\
	\texttt{martino.bernasconi@unibocconi.it}
	\And
	Matteo Castiglioni\\
	Politecnico di Milano\\
	\texttt{matteo.castiglioni@polimi.it}
	 \And
	Nicola Gatti\\
	Politecnico di Milano\\
	\texttt{nicola.gatti@polimi.it}
}
\begin{document}
	
\maketitle


	
	

	\begin{abstract}
	We study hidden-action principal-agent problems with multiple agents. Unlike previous work, we consider a general setting in which each agent has an arbitrary number of actions, and the joint action induces outcomes according to an arbitrary distribution.
	We study two classes of mechanisms: a class of deterministic mechanisms that is the natural extension of single-agent contracts, in which the agents play a Nash equilibrium of the game induced by the contract, and a class of randomized mechanisms that is inspired by single-agent randomized contracts and correlated equilibria.
	We first analyze the effectiveness of the two classes, showing that randomized mechanisms provide an arbitrarily larger principal’s utility than deterministic ones. Then, we design polynomial-time algorithms to compute almost optimal contracts. In particular, we show that an optimal randomized mechanism does not exist, but we can find a contract arbitrarily close to the supremum in polynomial time, while an optimal deterministic contract can be computed efficiently.
	While multi-agent contracts can be computed efficiently, it remains unclear how externalities negatively affect contract design problems and the principal's utility.
	We answer this question ‘‘reducing'' multi-agent to single-agent contract design through virtual costs. To do so, we relate the utility of the multi-agent problem to a single-agent problem with combinatorial actions, essentially removing externalities. We show that the principal's utility in the multi-agent instance is at least the one of a virtual single-agent instance in which the costs are increased by an $n$ multiplicative factor, where $n$ is the number of agents.
	Leveraging this result, we obtain that, for any $\delta\ge 0$, a simple linear contract extracts a constant fraction $\delta/(1+\delta)$ of the  social welfare of a virtual  instance in which costs are increased by a $(1+\delta) n$ factor. 
	Our result is constructive and shows how to recover a multi-agent contract from a contract for the virtual single-agent instance.
	Finally, we investigate if our results carry over the more complex Bayesian setting with typed agents. We show that an almost optimal randomized contract can be computed in polynomial time.
	However, we prove that our reduction between single- and multi-agent contracts does not extend to Bayesian settings.
	Indeed, we provide an instance in which an optimal deterministic contract cannot extract any fraction of the virtual social welfare for any choice of the virtual costs.
	Nonetheless, we show that this impossibility result can be circumvented by removing limited liability.
	In particular, we design a mechanism based on affine contracts that does not guarantee limited liability but extracts exactly the same approximation of the virtual social welfare as in non-Bayesian settings.
	We conclude by showing that in Bayesian settings with single-dimensional types, this result extends to the (non-virtual) social welfare under reasonable assumptions. This generalizes the results from \citet{alon2023bayesian} to multi-agent settings.
\end{abstract}

\section{Introduction}\label{sec:intro}

Over the last years, contract design has received a growing attention as a tool to model the interaction between a \emph{principal} and one or multiple \emph{agents}. The principal’s objective is to incentivize the agents to undertake costly actions that lead to desirable outcomes.
In this work, we focus on hidden-action problems, where the principal can only observe an outcome stochastically determined by the agents’ actions, but cannot observe the actions themselves.
To steer the agents' behavior, the principal designs a \emph{contract}, that specifies outcome-dependent payments to the agents.

Principal-agent interactions are ubiquitous in real-world scenarios such as crowdsourcing platforms~\citep{ho2016adaptive}, blockchain-based smart contracts~\citep{cong2019blockchain}, and healthcare~\citep{bastani2016analysis}.
The computational study of principal-agent problems involving a single agent has been widely investigated in the literature. The vanilla version of this problem can be solved in polynomial time via multiple linear programs~ \citep{grossman1992analysis}. 
This model has been extended in many directions.  For instance, some works study the Bayesian settings in which agent have types~\cite{castiglioni2022bayesian,castiglioni2023designing, alon2021contracts, alon2023bayesian,GuruganeshPower23,guruganesh2021contracts}, while other works extend the model to include combinatorial actions~\citep{dutting2021complexity,dutting2022combinatorial,dutting2024combinatorial}.

A natural extension of the single-agent model involves a scenario where a principal has to incentivize a \emph{team} of agents~\citep{holmstrom1982moral}. This problem introduces additional challenges due to the its inherently combinatorial structure.
Thus, to circumvent such difficulties, most works focus on succinct representations, binary actions, and special structures~\citep{babaioff2012combinatorial, emek2012computing,dutting2023multi,deo2024supermodular}. 
An exception is \citep{castiglioni23multi}, which study a setting with arbitrary action sets but with limited externalities among the agents.

Surprisingly, the general problem of incentivizing multiple agents has been largely neglected. Here, we focus on the natural extension of the single-agent principal-agent problem, where each agent has an arbitrary number of actions, and the outcome is sampled from a general probability distribution determined by the joint action of the agents.
Considering this non-succinct multi-agent problem trivializes the computational challenge of computing an optimal contract. Indeed, it can be computed in polynomial time by solving a linear problem for each action profile, similarly to single-agent contracts~\citep{grossman1992analysis}, as the number of profiles is polynomial in the instance description. 

However, outside of these preliminary computational considerations, there are many other interesting questions. In this works, we focus on the following:
\begin{enumerate}[label=(\roman*)]
	\item \emph{Are there better (efficiently computable) mechanisms than standard deterministic contracts?}
	\item \emph{Which are the guarantees of  simple (e.g., linear) contracts?}
		\item \emph{How externalizes among the agents affect the principal's utility?}
\end{enumerate}

\subsection{Our Contributions}

In this paper, we delve into contract design in multi-agent principal-agent problems. Different from previous work, we consider the general setting in which each agent has an arbitrary number of actions, and the joint action induces outcomes according to an arbitrary distribution.
We examine two classes of mechanisms: deterministic contracts and randomized contracts,
Deterministic contracts are the natural extension of single-agent principal-agent problems where agents play a Nash equilibrium of the game induced by the contract.
On the other hand, randomized contracts draw inspiration from single-agent randomized contracts and  correlated equilibria. In particular, agents receive correlated payments and action recommendations.
Even though deterministic contracts proved optimal in single-agent (non-Bayesian) problems, we show that randomization is crucial for designing optimal multi-agent contracts. In particular, we show that randomized contracts provide an arbitrarily larger principal’s utility than deterministic ones. 

Next, we analyze the computational complexity of computing (almost) optimal contracts.
While the computation of optimal deterministic contracts does not pose any additional challenge compared to single-agent ones, this is not the case for randomized contracts.
We prove that an optimal randomized mechanism does not exist, but it is possible to compute in polynomial time a contract which gives utility arbitrarily close to the supremum. Our algorithm exploits a linear relaxation of the quadratic problem that defines an optimal contract. Then, it converts a solution to the relaxed problem to an arbitrarily close-to-optimal solution of the quadratic one, \emph{i.e.}, an almost optimal randomized contract.

Despite showing that optimal contracts can be computed efficiently, it remains unclear whether it is possible to achieve good approximations via simple contracts, as it is the case in single-agent principal-agent problems. We investigate whether there are similar results in the multi-agent setting. More in general, we explore how agents' externalities affect the principal's utility. To this end, we relate the principal's utility in a multi-agent instance to their utility in a single-agent instance where action set consists of the joint (combinatorial) action.\footnote{When not specified otherwise, when we refer to the the single-agent instance, we refer to the one with combinatorial actions.}
%
We formalize the intuition that incentivizing multiple agents is more challenging than incentivizing  a single agent. We achieve this objective by proving that the principal's utility in the multi-agent instance can be arbitrarily smaller than the one in the combinatorial single-agent instance.
We quantify this intuition through virtual costs. In particular, we define a virtual instance in which costs are incremented by a multiplicative factor. We refer to this modified instance as the \emph{virtual single-agent} instance. We show that transitioning from single-agent contract design to multi-agent contract design is ``no worse'' than increasing the costs by an $n$ factor, where $n$ is the number of agents. Furthermore, we show that this factor is tight.
Leveraging this result, we obtain that, for any $\delta\ge 0$, an optimal deterministic contract extracts a constant fraction $\delta/(1+\delta)$ of the social welfare of the virtual instance in which costs are increased by a $(1+\delta) n$ factor. Interestingly, this approximation is achieved by a linear contract.
Our result is constructive and shows how to recover a multi-agent contract from a single-agent contract computed on the virtual instance.
%
At the core of our analysis lies the relationship between the multi-agent equilibria induced by a contract and the best responses to the same contract in the related single-agent instance.
 
In the second part of the paper, we investigate whether our results carry over to the more complex Bayesian setting with typed agents. We show that our computational result on randomized contracts extends to this setting, and an almost optimal randomized contract can be computed in polynomial time.
This is in contrast with deterministic contracts, for which the problem is NP-Hard even in single-agent problems~\cite{castiglioni2022bayesian,guruganesh2021contracts}.
This further motivates the study of simple contracts, which can be computed efficiently.
We prove that our reduction between single- and multi-agent contracts does not extend to Bayesian settings.
Specifically, we provide an instance in which an optimal deterministic contract fails to extract any fraction of the virtual social welfare, regardless of the choice of virtual costs.
Nonetheless, we show that this impossibility result can be overcome by removing limited liability.
We remark that in single-agent non-Bayesian settings removing limited liability trivializes the contract design problem and simple contracts guarantee to extract all the social welfare. This is not the case in our setting. In particular, we show that, even without limited liability, the gap between the optimal principal's utility and the social welfare can be arbitrarily large.
Motivated by this result, we design a mechanism that does not guarantee limited liability but extracts an approximation of the virtual social welfare identical to the one obtained in non-Bayesian settings.
While this task is unattainable with linear contracts (which satisfy limited liability by construction), we prove that it is sufficient to employ affine contracts, \emph{i.e.}, contracts composed of a linear contract and an outcome-independent constant negative payment.
We conclude by showing that in Bayesian settings with single-dimensional types our result extends to the (non-virtual) social welfare under reasonable assumptions. This generalizes the results from \citet{alon2023bayesian} to multi-agents settings.

\subsection{Related Works}

We survey the most-related computational works on hidden-action principal-agent problems.

\paragraph{Works on Single-Agent Contract Design}

\citet{babaioff2014contract} study the complexity of contracts in terms of the number of different payments that they specify, while \cite{dutting2019simple} use the computational lens to analyze the efficiency (in terms of principal’s utility) of linear contracts.
Subsequent works focus on the more challenging Bayesian setting. \citet{guruganesh2021contracts, GuruganeshPower23} study the gap in power between different classes of contracts, while \citet{castiglioni2022bayesian} study the efficiency of linear contracts in Bayesian settings. \citet{alon2021contracts,alon2023bayesian} study one-dimensional typed
principal-agent problem where the cost scales linearly with the hidden type. 
 \citet{castiglioni2023designing} introduce the class of menus of randomized contracts and show that they can be computed efficiently.
Finally, \citet{gan2022optimal} study a generalization of the principal-agent problem, proving that optimal randomized mechanisms can be computed efficiently.
Other works study principal-agent problems in which the underlying structure is combinatorial. In particular, \citet{dutting2021complexity}  address settings in which the outcome space is combinatorial. \citet{dutting2022combinatorial} and~\cite{deo2024supermodular} focus on settings in which the agent can take  any subset of a given set of unobservable actions. \citet{dutting2022combinatorial} focus on submodular functions, while \cite{,dutting2024combinatorial} on supermodular functions.

\paragraph{Works on Multi-Agent Contract Design}

All the previous works on multi-agent contract design focus on non-Bayesian settings and succinct representations. \cite{babaioff2012combinatorial,emek2012computing} are the first to study a setting in which agents have binary actions, and each agent succeeds in his task with a probability that depends on whether they decide to undertake effort. Then, a function maps the success or failure of individual tasks to the success or failure of the project. \citet{dutting2023multi} consider a binary-action setting and provide a constant factor approximation  for submodular rewards, while \citet{deo2024supermodular} focus on supermodular setting providing both positive and negative results. \citet{castiglioni23multi} study a setting with arbitrary action sets but limited externalities among the agents. They focus on both scenarios with increasing and diminishing returns.
\section{Preliminaries}\label{sec:prelim}
A \emph{principal-multi-agent} problem is characterized by a tuple $(N, \Omega, (A_i)_{i\in N}, (c_i)_{i\in N}, F, r)$, where $N\coloneqq\range{n}$ is a set of $n$ agents, $\Omega$ is a finite set of $m$ outcomes, $A_i$ is a finite set of actions available to each agent.\footnote{In this work, for any $n\in\mathbb{N}$, we denote as $\range{n}$ the set $\{1,\ldots,n \}$.} Moreover, we define $\ell\coloneqq\max_{i\in N}|A_i|$ as the maximum number of actions available to an agent.
Additionally, we denote as $\A \coloneqq \times_{i\in N} A_i$, the set of joint action profiles, while we let $\A_{-i}\coloneqq \times_{j\in N\setminus\{i\}}A_j$ be the set of action profiles of all agents except the $i$-th agent. For each action profile $a\in\A$, we let $a_i \in\A_{-i}$ be the action played by agent $i$, while we let $a_{-i}$ be the action profile obtained from $a$ by removing action $a_i$. Similarly, for $a_i\in A_i$ and $a_{-i}\in\A_{-i}$, we denote as $(a_i, a_{-i})\in\A$ the action profile obtained by combining $a_i$ with $a_{-i}$.
For each agent $i$, the cost function $c_i: A_i\to[0,1]$ specifies the cost of playing action $a_i\in A_i$.
Moreover, the action profile of the agents induces an outcome. Given an action profile $a\in\A$, we denote with $F_a\in\Delta(\Omega)$ the outcome distribution induced by the action profile $a$.\footnote{For any finite set $S$ we denote with $\Delta(S)$ the set of distributions over $S$.} Then, for any $a\in\A$ and $\omega\in\Omega$, we denote with $F_{a}(\omega)$ the probability of inducing $\omega$ when the action profile $a$ is played.
Finally, each outcome $\omega\in\Omega$ is associated with a reward $r_\omega\in [0,1]$ for the principal.

\subsection{Contract Design}
We investigate principal-multi-agent problems taking the perspective of the principal who is interested in maximizing their expected utility by incentivizing the agents to play action profiles that induce profitable outcomes. To this extent, the principal commits to a \emph{contract}, which specifies a \emph{payment scheme} for each agent (specifying the payments for each outcome $\omega$), and private signals (in forms of action recommendations to each agent).


Now, we outline a high-level sketch of the interaction between the principal and the agents.
\begin{enumerate}[label=(\roman*)]
	\item The principal commits to a \emph{contract}, which includes action recommendations and payment schemes;
	\item The principal privately recommends to each agent $i \in N$ an action $a_i\in A_i$;
	\item Each agent $i \in N$ observes the action recommendation and plays an action $a_i^\prime\in A_i$, not necessarily following the recommendation, incurring in cost $c_i(a_i^\prime)$;	
	\item An outcome $\omega\in\Omega$ is sampled according to $F_{a^\prime}$, where $a':=(a_i')_{i\in N}$;
	\item The principal receives reward $r_\omega$ and pays the agents as specified by the payment scheme.
\end{enumerate}
While the general interaction is similar to previous works on (binary-actions) multi-agent contract design, this general model can accommodate complex randomized mechanisms.
This is fundamental, as we show that randomized action recommendations and payments schemes are more effective than deterministic ones.
On the other hand, deterministic  mechanisms are simpler to understand for the agents and might be more applicable. For this reason, in the following we will analyze the two extreme cases: fully deterministic and fully randomized mechanisms.


Before introducing the two classes of mechanisms, we make the following common assumptions on the interaction process described above. First, as common in the contract theory \citep{carroll2015robustness}, we assume \emph{limited liability} (LL), \ie, that the principal can only transfer a non-negative amount to the agents.
%
To ease the presentation, we assume w.l.o.g.~that there exists an action $a^\circ\in A_i$ such that $c_i(a^\circ) = 0$ for all $i\in N$. This guarantees that there always exists an action with non-negative expected utility and thus the agents are always incentivised to take part in the interaction.
Formally, the existence of action $a^\circ$, together with LL, ensures that the \emph{individually rationality} (IR) of the agents is satisfied.


In this work, we consider two main classes of contracts: \emph{randomized contracts} and \emph{deterministic contracts}. In the remaining part of this section, we describe the two classes of contracts and, for each of them, we will describe the interaction among the principal and the agents.

\paragraph{Randomized contracts.}
A randomized-contract is specified by a tuple $(\mu, \pi)$, where $\mu\in\Delta(\A)$ is a probability distribution over action profiles (\ie,~recommendations) and $\pi=(\pi_a^i)_{(i,a)\in N\times \A}$ is a tuple of payment functions $\pi_a^i:\Omega\to\Reals^+$. The value $\pi_a^i(\omega)$  specifies the payment received by agent $i$ when the principal recommends action profile $a\in\mc A$ and the outcome achieved is $\omega$.
Given a contract $(\mu, \pi)$, we are interested in $U^{\mu, \pi}_{i}(a_i\shortto a_i^\prime)$, which represents the $i$-th agent unnormalized expected utility when they are recommended to play action $a_i\in A_i$, and they play action $a_i^\prime\in A_i$. Formally, we have:
\[
	U_{i}^{\mu,\pi}(a_i\shortto a_i')\coloneqq \sum_{a_{-i}\in\A_{-i}} {\mu(a_i, a_{-i})}\left( \sum_{\omega\in\Omega} \pi^i_{(a_i, a_{-i})}(\omega) F_{(a_i^\prime, a_{-i})}(\omega) - c_i(a_i^\prime)\right).
\]
For ease of notation we will write $U_{i}^{\mu,\pi}(a_i)$ if $a_i=a_i'$, \emph{i.e.,}, when the agent follows the recommendation.

Then, an optimal randomized contract can be found as a solution to the following optimization problem: 
\begin{program}
	\label{prog:optrc}
	\begin{align}[left =\optcrc\coloneqq\empheqlbrace]
		\sup_{\mu, \pi}&\sum_{a\in\A}\sum_{\omega\in\Omega} \mu(a) F_a(\omega) \left( r_\omega - \sum_{i\in N} \pi^i_a(\omega)\right)&\textnormal{s.t.}  \label{eq:optrc_obj}\\
		 &U_{i}^{\mu,\pi}(a_i) \ge U_{i}^{\mu,\pi}(a_i\shortto a_i')  &\forall i\in N,\,\forall a_i,a_i^\prime\in A_i \label{eq:optrc_ic}\\
		&\pi^i_a(\omega)\geq 0&\forall i\in N,\,\forall a\in\A,\,\forall \omega\in\Omega \label{eq:optrc_1} \\
		&\mu\in\Delta(\A) &\label{eq:optrc_2}
	\end{align}
\end{program}
%

where the objective of \cref{prog:optrc} is the difference between the expected reward for the principal and the total expected payment given to the agents. Constraint \eqref{eq:optrc_ic} implements the incentive compatibility (IC) constraints, \emph{i.e.}, it guarantees that no agent is incentivized to deviate from the action recommendation received, assuming that all the other agents follow the recommendation. 
%

Finally, we define the set of inducible actions $\AindR$ by randomized contracts. Formally:

\[
\AindR:=\left\{a: \exists (\mu, \pi)\text{ s.t. }\mu(a)>0, (\mu,\pi)\text{ is feasible for \Cref{prog:lprc}}\right\}.
\]
Intuitively, $\AindR$ includes all the action profiles that the principal can incentivize the agent to play.

\begin{remark} Randomized payments $\pi_a^i$ are different from \emph{randomized contracts} in single-agent principal-agent problems (see, \emph{e.g.,} \citep{castiglioni2023designing,GuruganeshPower23}). Indeed, in single-agent problems, after the agent receive the recommendation, deterministic payment schemes are equivalent to a randomized ones as the agent makes decisions based only on their expected utilities. This is different from what happens in multi-agent contracts, in which the randomized payments $\pi_a^i$ are not equivalent to a the single deterministic contract obtained as expectation over all possible payments. Intuitively, this is due to the correlation between agents' actions and payment schemes.
%
\end{remark}

While randomized contracts are the most general class of mechanisms, not all agents might be willing to take part to such randomized mechanism, which require an high level of rationality (on the  agent side) and are complex to implement. This motivates our inquiry on simpler contracts, namely, \emph{deterministic contracts}.
 
\paragraph{Deterministic Contracts.} A deterministic contract is defined by a tuple $(a, p)$, where $a\in\A$ is an action profile incentivized by the principal, and $p:=(p^i)_{i\in N}$ is a payment function such that $p^i:\Omega\to\Reals^+$, where $p^i(\omega)$ specifies the payment received by agent $i$ when the outcome is $\omega$. Differently than randomized contracts, in a deterministic contracts the principal deterministically incentivizes a single action profile, and thus we can drop the dependency of $p$ from the recommended action $a$. We remark that action recommendations are not strictly required in deterministic contracts and they are only used to solve the equilibrium selection problem. Equivalently, some previous works on deterministic multi-agent contracts~\citep{dutting2023multi,deo2024supermodular} solve the equilibrium selection problem by assuming that the agent selects the best Nash equilibrium for the principal.

An optimal deterministic contract can be found as a solution to the following program:
%

\begin{program}
	\label{prog:optdc}
	\begin{align}[left =\optdc\coloneqq\empheqlbrace]
	\max_{a, p} &\sum_{\omega\in\Omega} F_a(\omega)\left(r_\omega - \sum_{i\in N} p^i(\omega)\right)&\textnormal{s.t.} \nonumber\\
	&\sum_{\omega\in\Omega} p^i(\omega)F_{a}(\omega)  -c_i(a_i)\ge \sum_{\omega\in\Omega} p^i(\omega) F_{(a_i^\prime, a_{-i})}(\omega) - c_i(a_i^\prime)&\forall i\in N,\,\forall a_i^\prime\in A\label{eq:optdc_ic} \\
	&p^i(\omega) \geq 0&\forall i\in N,\forall\omega\in\Omega\\
	&a\in\A,
	\end{align}
\end{program}

where Constraint \eqref{eq:optdc_ic} guarantees that the contract is IC, \emph{i.e.,} that no agent has incentive to deviate from the recommended action profile assuming that the others do not deviate. 
In this case, differently from randomized contracts, the payment received by each agent $i$, is stochastic only in the outcome $\omega$ and it is the natural extension of a deterministic contract for the single-agent setting. This is the class of contract commonly used in previous works on (binary-action) multi-agent contracts~\cite{dutting2023multi,deo2024supermodular}.

Given a payment scheme $p=(p^i)_{i \in N}$, we denote with $E(p)$ the set of equilibria induced by $p$. Formally,
\[ \mathsf{E}(p)= \left\{a \in \A: (a,p)\text{ is feasible for \Cref{prog:optdc}}\right\} .\]

Finally, we define the set $\A^\circ_\mathsf{D}$ the set of inducible action profiles:
%
\[
\A^\circ_\mathsf{D}:=\left\{a\in\A: \exists p\text{ s.t. } (a,p)\text{ is feasible for \Cref{prog:optdc}}\right\}.
\]

Intuitively, $\A^\circ_\mathsf{D} \subseteq \AindR$ includes all the action profiles that the principal can incentivize the agents to play with deterministic contracts.

\subsection{Combinatorial Contracts and Virtual costs}

In this work, we will show a non-trivial connection between multi-agent contract design and a single-agent contract design problem with combinatorial actions. In particular, we show that the two problems are connected through \emph{virtual} cost function. 

\paragraph{Combinatorial Contracts} Every multi-agent principal-agent problem $(N, \Omega, (A_i)_{i\in N}, (c_i)_{i\in N}, F, r)$ is related to a \emph{single-agent} principal-agent problem with \emph{combinatorial actions}. In this problem, there is a single agent whose set of actions is $\A$, and for each combinatorial action $a=(a_i)_{i\in N}\in \A$ the distribution over outcome is specified by $F_a$ while the cost is $c_a=\sum_{i \in N} c_i(a_i)$.
Here, a deterministic contract is a tuple $(a, p)$, where $a\in\A$ is an action recommendation and $p: \Omega\to\Reals^+$ is a payment function that determines the payment received by the agent for each possible outcome.\footnote{As for deterministic multi-agent contracts, action recommendations are used solely to model the assumption that ties are broken in favor of the principal.}
Notice that in single-agent problems (such as ours with combinatorial actions) deterministic contracts are optimal, and randomization is useful only in Bayesian setting with agent types.

The problem of computing optimal deterministic contracts for combinatorial actions can be written as follows:
\begin{program}\label{prog:single}
	\begin{align}[left =\optsing\coloneqq\empheqlbrace]
		\max_{a, p}&\sum_{\omega\in\Omega} F_a(\omega)\left[r_\omega - p(\omega)\right] &\textnormal{s.t.}\nonumber \\
		&\sum_{\omega\in\Omega} F_a(\omega)p(\omega) - c(a) \geq \sum_{\omega\in\Omega}F_{a^\prime}(\omega) p(\omega) - c(a^\prime)&\forall a^\prime\in\A  \\
		&p(\omega)\ge 0&\forall\omega\in\Omega\\
		&a\in\A.
	\end{align}
\end{program}
Notice that \Cref{prog:single} is incomparable with \Cref{prog:optdc}. Intuitively, this is due to the fact that the set of deviations is larger for \Cref{prog:single} (as the agent can deviate over multiple actions in the action profile), while the IC constraint in \Cref{prog:optdc} is stronger as it must hold for all agents $i\in N$.

Finally, given an payment scheme $p:\omega \rightarrow \mathbb{R}^+$ we denote with $\mathsf{B}(p)$ the set of action profiles that are best response to $p$. Formally, 
\[B(p)\coloneqq \left\{a \in \A: (a,p) \text{ is feasible for \Cref{prog:single}} \right\}.\]

\paragraph{Virtual Cost Function}

Given a function $\phi:\Reals^{+}\to \Reals^{+}$, for every action $a_i\in\A_i$, $i \in N$, we define the virtual costs of action $a_i$ as $\phi(c_i(a_i))$.
This will let us build virtual instances in which we can tune the extent of the costs.
In this paper, we will focus on linear virtual costs function, \ie, $\phi(x)=\alpha x$, and thus, with slight abuse of notation, we define $\phi(c(a))\coloneqq \sum_{i\in N}\phi(c_i(a_i))$ for the action profile $a=(a_i)_{i\in N}\in \A$.
Then, we define as $\optsing^\phi$ the value of \Cref{prog:single} in which each real cost $c(a)$ is replaced with the virtual cost $\phi(c(a))$. 
%

We conclude the section defining other quantities related to virtual costs.
In particular, we define $\mathsf{B}^\phi$ as the set of best responses  in which each real cost $c(a)$ is replaced with the virtual cost $\phi(c(a))$. 
Moreover, we define the virtual social welfare as 
\[\vsw^\phi:=\max_{a\in\A}\left[\sum_{\omega\in\Omega} F_a(\omega)r_\omega-\phi(c(a))\right].\]
Since we will focus on linear cost functions, with abuse of notation, given an $\alpha\in \mathbb{R}^+$ we define as $\optsing^\alpha \coloneqq\optsing^\phi$, $\B^\alpha\coloneqq \B^\phi $, and $\vsw^\alpha\coloneqq \vsw^\phi$, where $\phi(x)=\alpha x$.
 Finally, we define the classical (non-virtual) social welfare as $\sw\coloneqq\vsw^1$, \ie, the sum of the utilities of the principal and the agents. 
 We remark that the (virtual and non-virtual) social welfare for the single-agent and multi-agent instance are equivalent.

%
%

\section{The Power of Randomized Contracts}\label{sec:optcrc}
This section is dedicated to the study of randomized contracts. In the first part of the section, we compare the performances of deterministic and randomized contracts, showing the sub-optimality of the former with respect to the latter. Then, we derive a polynomial-time algorithm for computing an approximately optimal randomized contract.

\subsection{Randomized Versus Deterministic Contracts}

As a first result, we show that deterministic contracts constitute a more restrictive contract model for the principal, resulting in a potentially infinite gap between the utility achieved by an optimal deterministic contract and the one achieved by an optimal randomized contract.

\begin{restatable}{proposition}{propgap}\label{prop:propgap}
	There exists an instance of the principal-multi-agent problem in which $\frac{\optcrc}{\optdc} = \infty$.
\end{restatable}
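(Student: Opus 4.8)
The plan is to exhibit a small, explicit instance — ideally with $n=2$ agents, two or three outcomes, and two actions per agent — in which a randomized contract extracts strictly positive utility while \emph{every} feasible deterministic contract yields utility zero (or more precisely, yields utility bounded by something that makes the ratio diverge, e.g. by scaling a parameter). The driving intuition is the one already flagged after \Cref{prog:single}: in a deterministic contract the IC constraint must hold simultaneously for \emph{all} agents against the \emph{single} payment scheme $p^i$, whereas in a randomized contract the payments $\pi^i_a$ are correlated with the recommended profile $a$, so the principal can effectively ``threaten'' an agent with a different payment scheme depending on what the \emph{other} agents are recommended. This extra coordination power is what creates the gap.

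Concretely, I would set up an instance where there are two ``good'' outcomes, one that is reached (with decent probability) only when \emph{both} agents exert a costly effort, and where, for each agent individually, switching to the zero-cost action is attractive unless the payment on that good outcome is large. The trick is to make the numbers so that: (i) under a correlated recommendation that always recommends the effort profile $(\aone,\atwo)$ with payments tailored to that profile, each agent's one-shot deviation (holding the other at effort) is unprofitable, so constraint \eqref{eq:optrc_ic} holds and the principal nets a positive amount; but (ii) any \emph{single} pair of payment functions $(p^1,p^2)$ that makes the effort profile a Nash equilibrium forces the principal to pay out at least as much as the reward, so $\optdc \le 0$. To make the ratio literally $\infty$ rather than merely large, I would either scale the rewards/costs by a free parameter and take a limit, or — cleaner — design it so $\optdc = 0$ exactly (e.g. the only inducible deterministic profile is the all-zero-cost profile, which generates reward but the agents' IC for that profile is free, wait — more carefully, arrange that the all-zero profile gives the principal strictly less than the randomized optimum, and every other profile requires net-negative payments). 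Then $\optcrc/\optdc = \infty$ follows immediately since $\optcrc > 0 = \optdc$.

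The key steps in order: (1) write down the instance — $N=\{1,2\}$, $\Omega$ small, the distributions $F_a$ for the (at most four) action profiles, the costs, the rewards — choosing numbers by working backwards from the two desired inequalities; (2) verify the randomized contract: exhibit an explicit $(\mu,\pi)$ (e.g. $\mu$ a point mass or a careful mixture over two profiles, with $\pi^i_a$ supported on the good outcome), check feasibility of \eqref{eq:optrc_ic}--\eqref{eq:optrc_2}, and compute the strictly positive objective \eqref{eq:optrc_obj}; (3) prove the deterministic upper bound: take an arbitrary feasible $(a,p)$ for \Cref{prog:optdc}, do a short case analysis over which action profile $a$ is being incentivized, and in each case use the IC constraints \eqref{eq:optdc_ic} to lower-bound the total expected payment $\sum_\omega F_a(\omega)\sum_i p^i(\omega)$ by the expected reward $\sum_\omega F_a(\omega) r_\omega$, yielding objective $\le 0$; (4) conclude $\optcrc/\optdc = \infty$.

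I expect step (3) — the deterministic impossibility — to be the main obstacle, because it must hold against \emph{all} payment schemes, not just natural ones, so the case analysis has to be genuinely exhaustive and the algebraic inequalities tight; the delicate point is engineering the $F_a$'s so that for the ``target'' profile the two agents' IC constraints are in tension (each wants a large payment on the good outcome to deter its own deviation, but the principal pays the sum of both), while for every other profile either the reward is too small or no IC-feasible positive payment exists. A secondary subtlety is making sure the single zero-cost action $a^\circ$ required by the preliminaries doesn't accidentally give the deterministic principal positive utility; I'd handle this by making the zero-cost profile's outcome distribution put no mass (or negligible mass) on the rewarding outcome.
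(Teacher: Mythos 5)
Your plan follows essentially the same route as the paper: a small explicit instance with $n=2$ agents, two actions each, two outcomes, a zero-cost profile whose distribution puts no mass on the rewarding outcome, an explicit randomized contract with strictly positive principal utility, and an exhaustive IC-based case analysis showing $\optdc\le 0$ (the paper's numbers: $r_{\omega_1}=1$, $r_{\omega_2}=0$, $F_{(\aone,\aone)}(\omega_1)=1$, $F_{(\aone,\atwo)}(\omega_1)=F_{(\atwo,\aone)}(\omega_1)=2/5$, $F_{(\atwo,\atwo)}(\omega_1)=0$, $c_i(\aone)=2/5$, giving $\optcrc\ge 1/15>0=\optdc$).

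One caution on step (2): the option you mention of taking $\mu$ to be a point mass cannot work. If $\mu$ is concentrated on a single profile $a$, then Constraint~\eqref{eq:optrc_ic} collapses to exactly the deterministic IC constraint \eqref{eq:optdc_ic} for $(a,\pi_a)$, so the randomized contract's value is at most $\optdc$ and no gap can arise; the extra power of randomized contracts comes precisely from a non-degenerate $\mu$, under which each agent's deviation is evaluated in expectation over the opponent's recommendation and the payment can be loaded onto the recommendation profiles where the opponent plays the null action. This is what the paper does: $\mu$ is uniform over the three profiles other than $(\atwo,\atwo)$, and $\pi^i_a(\omega_1)=2$ only when $a_i=\aone$ and $a_{-i}=\atwo$, so each agent's IC holds with equality while the expected payment stays below the expected reward. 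As long as your ``careful mixture'' variant is the one you carry out, the rest of your outline (including engineering the zero-cost profile to be worthless to the principal, so that $\optdc=0$ exactly and the ratio is literally $\infty$) matches the paper's argument.
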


The result stated in \cref{prop:propgap} highlights the relevance of randomized contracts and establishes the importance of randomization in designing optimal contracts. This motivates the introduction of this more complex class of mechanisms. However, while deterministic contracts can be trivially computed in polynomial time by enumerating over all possible action profiles and finding, for each of those, the optimal payment function incentivizing that action profile, the same does not hold for randomized contracts. In particular, given the quadratic nature of \cref{prog:optrc}, it is not obvious whether it is possible to compute an optimal randomized contract efficiently. In the remaining part of this section we answer this question for the affirmative. 

\subsection{Finding Approximately Optimal Randomized Contracts}
As a first step for the derivation of an efficient algorithm for computing optimal randomized contracts, we show that, in general, no randomized contract can guarantee an expected utility for the principal equal to $\optcrc$, and thus approximate algorithms are necessary to solve the problem.

\begin{restatable}{proposition}{lemmasup}\label{le:lemmasup}
	There exists an instance of a principal-multi-agent problem in which no randomized contract achieves the supremum $\optcrc$.
\end{restatable}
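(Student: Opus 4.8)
The plan is to exhibit a small multi-agent instance where the supremum $\optcrc$ is approached only in a limit, so that no feasible randomized contract attains it. The natural mechanism for such a phenomenon is a boundary/openness obstruction: the objective strictly improves as some payment is pushed toward a value that is excluded by the incentive-compatibility constraints (or toward a distribution $\mu$ that puts zero mass somewhere), but at that limiting point feasibility breaks down. So I would look for an instance in which, for every feasible $(\mu,\pi)$, one can construct a strictly better feasible $(\mu',\pi')$, yet the sequence of such improvements has a supremum that is not realized.

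Concretely, I would try a two-agent instance with a small number of actions each (likely binary or ternary) and a couple of outcomes. I would set it up so that the principal wants to incentivize a costly ``good'' action profile, but the IC constraints force strictly positive payments to the agents on some outcome; the only way to recoup utility is to let $\mu$ place some probability on a cheaper recommendation profile whose presence in the support relaxes the effective IC burden on the good profile (because the IC constraint \eqref{eq:optrc_ic} is weighted by $\mu(a_i,a_{-i})$ over the conditioning). If the marginal benefit of shifting mass toward the cheap profile is always positive but the limit $\mu$ (all mass on the cheap profile) yields strictly lower principal utility than the supremum — e.g.\ because at that limit the good profile is no longer recommended at all — we get non-attainment. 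Alternatively, and perhaps more cleanly, I would make a payment variable want to converge to a value that makes a denominator-like quantity degenerate, exploiting that \cref{prog:optrc} is a quadratic (bilinear in $\mu$ and $\pi$) program whose feasible region need not be closed in the relevant sense once we track the objective.

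The key steps, in order: (i) write down the explicit instance — the sets $N=\{1,2\}$, $\Omega$, $A_i$, costs $c_i$, distributions $F_a$, rewards $r$; (ii) parametrize a family of feasible randomized contracts $(\mu_t,\pi_t)$ for $t$ in a half-open interval, verify that each satisfies \eqref{eq:optrc_ic}, \eqref{eq:optrc_1}, \eqref{eq:optrc_2}; (iii) compute the principal's objective \eqref{eq:optrc_obj} along this family and show it is strictly increasing in $t$ with a finite supremum $S$ as $t\to t^\ast$; (iv) show that any feasible contract achieving value $S$ would have to coincide (in the relevant coordinates) with the degenerate limit, and then check directly that the limit is either infeasible or has objective strictly less than $S$; (v) conclude $\optcrc = S$ is a supremum but not a maximum, possibly needing a short argument that no feasible point outside this one-parameter family does better — for that I would invoke structural properties of the instance (few actions/outcomes) to bound the objective of an arbitrary feasible contract by $S$ with equality only in the limit.

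The main obstacle I expect is step (v): ruling out \emph{all} feasible randomized contracts, not just the convenient parametrized family, from attaining $S$. Because \cref{prog:optrc} has many variables ($\mu$ over all of $\A$ plus a payment function $\pi_a^i$ per recommendation profile per agent), I will need either a reduction showing the optimum can be assumed to have a very restricted support/structure, or a direct case analysis leveraging that the instance is tiny. A clean way to handle this is to prove an upper bound $\optcrc \le S$ holding for every feasible contract — typically by summing the IC constraints against a cleverly chosen multiplier to obtain an inequality on the objective — together with a separate argument that equality in that bound forces a degenerate (hence infeasible) configuration. Getting that multiplier/aggregation argument right, so that it is tight exactly in the non-attainable limit, is the delicate part; the rest is bookkeeping.
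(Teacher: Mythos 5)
Your plan reproduces the architecture of the paper's proof almost exactly: a small two-agent instance, a one-parameter family of feasible contracts whose objective increases to a finite supremum as some recommendation mass vanishes and the corresponding payment inflates inversely, then a universal upper bound obtained by aggregating IC constraints (the ``multiplier'' is simply the deviation to the zero-cost action, which forces expected payment $\ge$ expected cost and hence bounds the principal's utility by the expected welfare of the recommended profiles), and finally an equality-case analysis showing that any contract matching the bound must have a degenerate support on which a direct IC computation caps the utility strictly below the supremum. That is precisely the paper's two-step argument ($\optcrc\ge 3/4$ via the family, and no contract attains $3/4$ because equality in the welfare bound forces $\mu(\atwo,\aone)=1$, whereupon IC forces a payment of at least $1/2$ on the rewarding outcome).

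The genuine gap is that the proof's substance --- the explicit construction --- is entirely deferred. For an existence statement of this kind, steps (i)--(v) are not bookkeeping around a known instance; finding an instance in which the aggregation bound is tight \emph{exactly} in the unattainable limit is the whole difficulty, and you give no candidate costs, distributions, or rewards, nor any verification that a family as in (ii)--(iii) exists. Moreover, the specific mechanism you sketch is slightly off from what actually works: in the paper's instance ($A_i=\{\aone,\atwo\}$, four outcomes, only $c_2(\aone)=1/4$ nonzero), the vanishing-mass profile $(\aone,\aone)$ is not a ``cheap'' profile that gets all the mass in the limit; it is a costly \emph{monitoring} profile whose outcome perfectly certifies agent $2$'s effort, with payment $1/(4\eps)$ scaling inversely in its probability $\eps$. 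In the limit the good profile $(\atwo,\aone)$ is still recommended with probability one --- so your envisioned failure mode ``the good profile is no longer recommended at all'' does not occur --- and non-attainment instead comes from the fact that without the monitoring profile, IC must be enforced through payments conditioned on the good profile's own outcomes, which costs the principal at least $1/2$ of the unit reward versus the supremum $3/4$. Your hedged alternative (a payment blowing up as a mass goes to zero) is the right intuition, but until it is instantiated and the instance-wide upper bound of step (v) is actually carried out, the proposal is a correct strategy rather than a proof.
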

Notice that this fundamentally differentiates multi-agent problems from single-agent ones, where an optimal randomized contract always exists in non-Bayesian instances. 

As a consequence of \cref{le:lemmasup}, we focus on designing an algorithm that finds randomized contracts guaranteeing an expected utility additively close to \optcrc.
Note that \Cref{prog:optrc} is a quadratic program and it is not clear a priori how to solve it efficiently. Here, we present a two-step process to solve it.

First, we introduce a linear program related to \Cref{prog:optrc}, in which, for all $a\in\A, i\in N, \omega\in\Omega$, we linearize the quadratic variables $\mu(a)\pi_a^i(\omega)$ with a single variable $x_a^i(\omega)$. 
Moreover, we add an extra linear constraint $x_a^i(\omega)\le M\mu(a)$ (here $M$ has to be thought of as ``large''). As we will see in the following, this constraints ensures that the second step can always carried out.
As a second step, we use the solution of this linear program to ``recover'' a approximately optimal solution to the \Cref{prog:optrc}.
%

For every $\A'\subseteq \A$ and $M>0$, let us formally define the linear program $\lprc(M, \A^\prime)$ related to  \Cref{prog:optrc} described above.  $\lprc(M, \A^\prime)$  is defined as:
\begin{program}\label{prog:lprc}
	\begin{align}
		&\max_{\mu, x} \sum_{a\in\A}\sum_{\omega\in\Omega}F_a(\omega) \left(\mu(a) r_\omega - \sum_{i\in N}x^i_a(\omega)\right) \hspace{1cm}\textnormal{s.t.}\nonumber \\
		&\sum_{a_{-i}\in\A_{-i}}\left[\sum_{\omega\in\Omega}\left(x^i_{(a_i, a_{-i})}(\omega)F_{(a_i, a_{-i})}(\omega)\right) - \mu(a_i, a_{-i})c_i(a_i)\right] \nonumber \\
		& \hspace{1.3cm}\geq \hspace{-0.1cm}\sum_{a_{-i}\in\A_{-i}}\left[\sum_{\omega\in\Omega}\left(x^i_{(a_i, a_{-i})}(\omega)F_{(a_i^\prime, a_{-i})}(\omega)\right) - \mu(a_i, a_{-i})c_i(a_i^\prime)\right] \hspace{0.5cm} \forall i\in N,\forall a_i,a_i^\prime\in A_i \label{eq:lprc_ic}\\
		& 0\le x_a^i(\omega) \leq M\mu(a) \hspace{6.65cm}\forall i\in N,\forall a\in\A,\forall\omega\in\Omega \label{eq:lprc_bound}\\
		& \mu(a) = 0  \hspace{9.85cm}\forall a\in \A\setminus\A^\prime\label{eq:lprc_restr1} \\
		& x_a^i(\omega) = 0  \hspace{7.33cm} \forall i\in N,\forall a\in\A\setminus\A^\prime,\forall\omega\in\Omega\label{eq:lprc_restr2}\\
		& \mu \in \Delta(\A)\label{eq:lprc_defmu}.
	\end{align}
\end{program}
Moreover we define $\opt_{\lprc(M, \A^\prime)}$ as the optimal value of $\lprc(M, \A^\prime)$.

\begin{remark}
We defined \Cref{prog:lprc} as $\lprc(M, \A^\prime)$ making explicit its dependency from the parameter $M$, which appears in Constraints~\eqref{eq:lprc_bound}, and from the set of action profiles $\A^\prime\subseteq\A$, which appear in Constraints~\eqref{eq:lprc_restr1}~and~\eqref{eq:lprc_restr2}.
This notational gimmick helps us compactly define  multiple linear programs. Specifically, we can choose $\A'=\AindR$ if we want to refer to the program  defined only over inducible action profiles $\AindR$ or , $\A'= \A$ if we want consider all action profiles. Similarly, we can choose an $M<+\infty$ to bound the maximum payment or
set $M=+\infty$, removing the constraints $x_a^i(\omega)\le M\mu(a)$. 
\end{remark}

Our algorithm to find approximately optimal randomized contracts simply solves $\lprc(M, \A)$ for a suitably defined $M$, and then converts it to a feasible solution of \Cref{prog:optrc}. While the algorithm is quite simple, its analysis is more involved and uses the linear program $\lprc(M, \AindR)$.
We emphasize that the solution to such a linear program is utilized solely in the analysis and not within the algorithm itself. Indeed, it remains unclear how, or whether, it is possible to efficiently compute the set $\AindR$.

In the following, with a slight abuse of notation we will write $\lprc(\A^\prime)$ instead of $\lprc(+\infty, \A^\prime)$. 
Then, at a high level, the proof of the approximate optimality of the solution to $\lprc(M, \A)$ boils down to demonstrating the following inequalities:
\begin{align}
\opt_{\lprc(M, \A)}&\ge \opt_{\lprc(M, \AindR)}\tag{$\AindR\subseteq \A$}\\ 
&\gtrapprox \opt_{\lprc(\AindR)}\tag{\Cref{le:lemmatrans}}\\
&\ge\optcrc,\tag{\Cref{le:lemmarelax}}
\end{align}
where the ``$\gtrapprox$'' symbol hides a small additive approximation that decreases with $M$.

We start showing the last inequality, \ie~that $\lprc(\AindR)$ is indeed a relaxation of \Cref{prog:optrc}. Formally we need the following result:
\begin{restatable}{lemma}{lemmarelaxation}\label{le:lemmarelax}
	It holds that $\opt_{\lprc(\AindR)} \geq \optcrc$.
\end{restatable}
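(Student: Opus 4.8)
The plan is to show that any feasible solution $(\mu,\pi)$ of \Cref{prog:optrc} can be transformed into a feasible solution $(\mu, x)$ of $\lprc(\AindR)$ (i.e.\ $\lprc(+\infty, \AindR)$) with the same objective value. Since \Cref{prog:optrc} takes a supremum, this immediately gives $\opt_{\lprc(\AindR)} \geq \optcrc$. The natural candidate is the linearization itself: given $(\mu,\pi)$, define $x_a^i(\omega) := \mu(a)\,\pi_a^i(\omega)$ for all $i\in N$, $a\in\A$, $\omega\in\Omega$.

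\textbf{Key steps.} First I would check the objective: plugging $x_a^i(\omega) = \mu(a)\pi_a^i(\omega)$ into the objective of \Cref{prog:lprc} gives $\sum_{a}\sum_{\omega} F_a(\omega)(\mu(a) r_\omega - \sum_i \mu(a)\pi_a^i(\omega))$, which is exactly the objective \eqref{eq:optrc_obj} of \Cref{prog:optrc}. Second, I would verify the IC constraints \eqref{eq:lprc_ic}: substituting $x^i_{(a_i,a_{-i})}(\omega) = \mu(a_i,a_{-i})\pi^i_{(a_i,a_{-i})}(\omega)$ into the left-hand side of \eqref{eq:lprc_ic} yields $\sum_{a_{-i}}\mu(a_i,a_{-i})\big(\sum_\omega \pi^i_{(a_i,a_{-i})}(\omega)F_{(a_i,a_{-i})}(\omega) - c_i(a_i)\big) = U_i^{\mu,\pi}(a_i)$, and the right-hand side similarly becomes $U_i^{\mu,\pi}(a_i\shortto a_i')$; so \eqref{eq:lprc_ic} is precisely \eqref{eq:optrc_ic}. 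Third, the bound constraints \eqref{eq:lprc_bound}: since $M=+\infty$ here, only $x_a^i(\omega)\geq 0$ must hold, which follows from $\mu(a)\geq 0$ and $\pi_a^i(\omega)\geq 0$ (Constraint \eqref{eq:optrc_1}). Fourth, $\mu\in\Delta(\A)$ is inherited directly.

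\textbf{The support restriction.} The only real subtlety is Constraints \eqref{eq:lprc_restr1}--\eqref{eq:lprc_restr2}, which force $\mu(a)=0$ and $x_a^i(\omega)=0$ for every $a\notin\AindR$. For $\mu$: if $\mu(a)>0$ for some feasible $(\mu,\pi)$, then by the very definition of $\AindR$ (the set of $a$ for which there exists a feasible $(\mu,\pi)$ with $\mu(a)>0$) we have $a\in\AindR$; hence $\mu$ is supported on $\AindR$ and \eqref{eq:lprc_restr1} holds. For $x$: wherever $a\notin\AindR$ we just argued $\mu(a)=0$, so $x_a^i(\omega)=\mu(a)\pi_a^i(\omega)=0$, giving \eqref{eq:lprc_restr2}. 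Thus $(\mu,x)$ is feasible for $\lprc(\AindR)$ with objective equal to that of $(\mu,\pi)$, and taking the supremum over feasible $(\mu,\pi)$ finishes the proof.

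\textbf{Main obstacle.} There is no serious obstacle — the argument is a direct substitution check. The one point requiring care is making sure the definition of $\AindR$ is invoked correctly so that the support of any feasible $\mu$ lies in $\AindR$; this is what makes the added restriction constraints vacuous for the constructed solution rather than a genuine loss. (One should also note the harmless typo that the definition of $\AindR$ in the excerpt references \Cref{prog:lprc} with an unspecified $M,\A'$; the intended reading is feasibility for \Cref{prog:optrc}, and the argument goes through with that reading.)
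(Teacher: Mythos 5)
Your proposal is correct and follows essentially the same route as the paper's proof: define $x_a^i(\omega)=\mu(a)\pi_a^i(\omega)$, check that the objective coincides, that the IC, nonnegativity, and simplex constraints carry over by direct substitution, and that the support constraints are vacuous because the definition of $\AindR$ forces $\mu(a)=0$ (hence $x_a^i(\omega)=0$) outside $\AindR$, then take the supremum over feasible $(\mu,\pi)$. Your reading of the typo in the definition of $\AindR$ (feasibility intended for the randomized-contract program, not the LP) matches the paper's intent.
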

 Intuitively, this easily follows observing that $\lprc(\AindR)$ is a relaxation of \Cref{prog:optrc}, and that the support of $\mu$ in an optimal solution to \Cref{prog:optrc} is a subset of $\AindR$ by the definition of $\AindR$.

Next, we show that, starting from any feasible solution to $\lprc(\AindR)$, it is always possible to construct a feasible solution for $\lprc(M, \AindR)$ for a suitably defined $M$, with a small loss in the value of the objective function, essentially proving that $\opt_{\lprc(M, \AindR)}\gtrapprox \opt_{\lprc(\AindR)}$.

\begin{restatable}{lemma}{lemmatransformation}\label{le:lemmatrans}
	%
	For each $\varepsilon > 0$ and for each $(\mu, x)$ feasible for $\lprc(\AindR)$, there exists $(\bar\mu, \bar x)$ feasible for $\lprc(M, \AindR)$ such that
	\[
		\sum_{a\in\A}\sum_{\omega\in\Omega}F_a(\omega)\left(\bar\mu(a)r_{\omega} - \sum_{i\in N} \bar x_a^i(\omega)\right) \geq \sum_{a\in\A}\sum_{\omega\in\Omega}F_a(\omega)\left(\mu(a)r_{\omega} - \sum_{i\in N} x_a^i(\omega)\right) - \eps,
	\]
	where $M\coloneqq M(|I|,\varepsilon,K)\in\poly(2^{|I|}, \nicefrac1\varepsilon,K)$ and $K=\max_{i\in N, a \in \A, \omega \in \Omega} x^i_a(\omega)$.
\end{restatable}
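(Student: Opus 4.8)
The plan is to leave the ``shape'' of the given solution essentially untouched and merely \emph{thicken} its support by mixing it, with a tiny weight $\gamma$, with one fixed ``well-conditioned'' feasible solution of $\lprc(\AindR)$ --- one that assigns strictly positive recommendation probability to \emph{every} profile in $\AindR$ and whose payments scale linearly with those probabilities. Because $\lprc(\AindR)$ is a linear program, its feasible set is convex, so the mixture is automatically feasible for $\lprc(\AindR)$; the only thing left to verify is the box constraint $x_a^i(\omega)\le M\mu(a)$ of $\lprc(M,\AindR)$, and for this the mixing weight gives every profile of $\AindR$ a guaranteed minimum probability, which is exactly what controls $M$.

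\textbf{Step 1: a well-conditioned feasible solution.} For each $a^\star\in\AindR$, the definition of $\AindR$ yields a feasible randomized contract $(\mu^{(a^\star)},\pi^{(a^\star)})$ with $\mu^{(a^\star)}(a^\star)>0$; substituting $x_a^i(\omega):=\mu^{(a^\star)}(a)\,\pi^{(a^\star)}_a(\omega)$ turns it into a feasible solution of $\lprc(\AindR)$ (the constraints \eqref{eq:lprc_ic} become exactly \eqref{eq:optrc_ic}, and the support lies in $\AindR$). A standard bit-complexity argument for the vertices of the polyhedra defining $\lprc(\cdot,\AindR)$ lets us pick such a witness with $\mu^{(a^\star)}(a^\star)\ge\beta$ and $\pi^{(a^\star)}_a(\omega)\le M_0$ for all $a,i,\omega$, where $\beta>0$ and $M_0$ depend only on the instance and satisfy $1/\beta,\,M_0\in\poly(2^{|I|})$. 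Averaging, $(\nu,y):=|\AindR|^{-1}\sum_{a^\star\in\AindR}\bigl(\mu^{(a^\star)},\,\mu^{(a^\star)}\pi^{(a^\star)}\bigr)$ is feasible for $\lprc(\AindR)$ by convexity, has $\nu(a)\ge\nu_{\min}:=\beta/|\AindR|>0$ for every $a\in\AindR$, and satisfies $y_a^i(\omega)\le M_0\,\nu(a)$.

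\textbf{Step 2: mixing and bookkeeping.} Given $(\mu,x)$ and $\varepsilon>0$, set $\gamma:=\min\{1,\ \varepsilon/(1+nM_0)\}$ and $(\bar\mu,\bar x):=(1-\gamma)(\mu,x)+\gamma(\nu,y)$. Convexity gives feasibility for $\lprc(\AindR)$, so only the box constraint of $\lprc(M,\AindR)$ remains: for $a\notin\AindR$ both sides vanish, and for $a\in\AindR$ we have $\bar x_a^i(\omega)\le(1-\gamma)K+\gamma M_0$ and $\bar\mu(a)\ge\gamma\nu_{\min}$, so $M:=(K+M_0)/(\gamma\nu_{\min})$ works --- plugging in $\gamma$ gives $M=(K+M_0)(1+nM_0)/(\varepsilon\,\nu_{\min})\in\poly(2^{|I|},1/\varepsilon,K)$, and $M\ge M_0$ (which also covers the degenerate case $\gamma=1$, where $(\bar\mu,\bar x)=(\nu,y)$). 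For the objective, linearity makes the loss equal to $\gamma\bigl(V(\mu,x)-V(\nu,y)\bigr)$, where $V(\cdot)$ denotes the objective of \Cref{prog:lprc}; since $r_\omega\le 1$ and $x\ge0$ we get $V(\mu,x)\le 1$, and since $y_a^i(\omega)\le M_0\nu(a)$ we get $V(\nu,y)\ge-nM_0$, so the loss is at most $\gamma(1+nM_0)\le\varepsilon$, as required.

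\textbf{Main obstacle.} Step 2 is essentially pure bookkeeping; the real content is the quantitative claim in Step 1, namely that an inducible profile can be induced by a contract whose recommendation probability on it \emph{and} whose payments are only $\poly(2^{|I|})$-bounded. This is precisely what produces the $\poly(2^{|I|})$ factor in $M$, and establishing it cleanly requires (i) observing that payments on profiles outside $\mathrm{supp}(\mu)$ can be zeroed without affecting the incentive constraints of \Cref{prog:optrc} (they enter only multiplied by the corresponding $\mu$-value), and then (ii) bounding the encoding size of vertex solutions of the resulting lower-dimensional linear program --- routine, but the place where the technical care is needed.
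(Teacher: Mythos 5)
Your proposal is correct and takes essentially the same route as the paper: the paper likewise mixes the given solution with a uniform average of per-profile witnesses (obtained, for each $\bar a\in\AindR$, from an auxiliary LP that maximizes $\mu(\bar a)$, whose vertex solutions satisfy $\mu(\bar a)\ge\tau(|I|)^{-1}$ and $\lVert x\rVert_\infty\le\tau(|I|)$ with $\tau\in 2^{\poly(\cdot)}$), picks a mixing weight of order $\eps/\poly$, and verifies the box constraint via the guaranteed lower bound on $\bar\mu(a)$ and the $O(\eps)$ objective loss exactly as you do. The only cosmetic difference is that the paper states the witness bound on the linearized variables $x$ rather than on the payments $\pi$ — which is all your Step 2 actually needs and which sidesteps the support/zeroing care you flag as the main obstacle — while your resulting $M$ still lies in $\poly(2^{|I|},\nicefrac1\eps,K)$ as required.
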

%
Then we need to show that, thanks to the clever design of $\lprc(M, \A)$, a solution $(\mu, x)$ feasible for $\lprc(M, \A)$ can always be converted into a randomized construct $(\mu, \pi)$ feasible for \Cref{prog:optrc}.
In particular, we can define $(\mu, \pi)$ as:
\begin{equation}\label{eq:backtracking}
	\pi^i_a(\omega) \coloneq \begin{cases}
		\frac{x^i_a(\omega)}{\mu(a)} & \textnormal{if } \mu(a) > 0 \\
		0 & \textnormal{otherwise}
	\end{cases}\quad\forall i\in N,\,\forall a\in\A,\,\forall\omega\in\Omega.
\end{equation}
Notice that is is never the case that $x^i_a(\omega)>0$ and $\mu(a)=0$  thanks to constraint \eqref{eq:lprc_bound}. Hence, we never set $\pi^i_a(\omega)=0$ when $x^i_a(\omega)>0$. Formally, we can guarantee that $x^i_a(\omega)=\mu(a)\pi^i_a(\omega) $, and hence $(\mu, \pi)$ is essentially equivalent to $(\mu, x)$. 
Putting all together we can prove the prove the guarantees our algorithm.
%

\begin{restatable}{theorem}{theoremRandom}
	For any $\varepsilon>0$, there exists an algorithm which finds a $\varepsilon$-optimal randomized contract in time polynomial in the instance size and $\log(1/\varepsilon)$.
\end{restatable}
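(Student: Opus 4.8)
The plan is to instantiate the chain of inequalities sketched immediately before the statement and to carefully track the additive losses and the running time. The algorithm itself is: pick $M \coloneqq M(|I|, \varepsilon, K)$ as in \Cref{le:lemmatrans} (with $K$ a crude a~priori bound on the optimal payments, derived below), solve the linear program $\lprc(M, \A)$, obtain an optimal solution $(\mu, x)$, and output the randomized contract $(\mu, \pi)$ defined by the backtracking formula \eqref{eq:backtracking}. First I would argue feasibility and optimality of this output: by Constraint \eqref{eq:lprc_bound} we never have $x_a^i(\omega) > 0$ while $\mu(a) = 0$, so $\pi_a^i(\omega) = x_a^i(\omega)/\mu(a)$ is well defined whenever it matters and $x_a^i(\omega) = \mu(a)\pi_a^i(\omega)$ identically. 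Substituting this identity into the IC constraints \eqref{eq:lprc_ic} and into the objective of \Cref{prog:lprc} shows, term by term, that $(\mu, \pi)$ satisfies \eqref{eq:optrc_ic}, \eqref{eq:optrc_1}, \eqref{eq:optrc_2} and that its principal's utility equals $\opt_{\lprc(M, \A)}$. Hence the output is a feasible randomized contract whose value is exactly $\opt_{\lprc(M, \A)}$.

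Next I would lower-bound $\opt_{\lprc(M, \A)}$ by $\optcrc - \varepsilon$. Since $\AindR \subseteq \A$, every solution feasible for $\lprc(M, \AindR)$ is feasible for $\lprc(M, \A)$ with the same objective value, so $\opt_{\lprc(M, \A)} \ge \opt_{\lprc(M, \AindR)}$. By \Cref{le:lemmarelax}, $\opt_{\lprc(\AindR)} \ge \optcrc$, so there is a feasible solution $(\mu, x)$ of $\lprc(\AindR)$ (namely, the one obtained from an optimal — or near-optimal, invoking \Cref{le:lemmasup} — randomized contract) with objective at least $\optcrc - \varepsilon/2$; here one must be careful, because an optimal randomized contract need not exist, so I take a contract within $\varepsilon/2$ of the supremum, map it into $\lprc(\AindR)$ via $x_a^i(\omega) = \mu(a)\pi_a^i(\omega)$, and note its value is $\ge \optcrc - \varepsilon/2$. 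Then \Cref{le:lemmatrans}, applied with tolerance $\varepsilon/2$ to this solution, yields $(\bar\mu, \bar x)$ feasible for $\lprc(M, \AindR)$ with objective $\ge \optcrc - \varepsilon$, provided $M$ is chosen as $M(|I|, \varepsilon/2, K)$ with $K = \max_{i,a,\omega} x_a^i(\omega)$. Chaining, $\opt_{\lprc(M,\A)} \ge \opt_{\lprc(M,\AindR)} \ge \optcrc - \varepsilon$, so the output contract is $\varepsilon$-optimal.

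For the running time, the main points are: $\lprc(M, \A)$ is a linear program with $\poly$ many variables and constraints in the instance size (the number of action profiles $|\A| = \prod_i |A_i|$ is polynomial in the description of the non-succinct instance), so it is solvable in time polynomial in its bit-size; the only subtlety is that $M$ appears in the coefficients, so I must check that $\log M$ is polynomially bounded. The parameter $K$ is an a~priori bound on the payments of a near-optimal randomized contract: limited liability together with IR (guaranteed by the cost-zero action $a^\circ$) and bounded rewards $r_\omega \in [0,1]$ imply that one may restrict to contracts whose payments are at most some $\poly$-size quantity — intuitively, paying more than the total reward is never profitable — so $\log K = \poly(|I|)$. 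Then $M = M(|I|, \varepsilon/2, K) \in \poly(2^{|I|}, 1/\varepsilon, K)$ gives $\log M \in \poly(|I|, \log(1/\varepsilon))$, and the LP has bit-size $\poly(|I|, \log(1/\varepsilon))$, yielding the claimed bound.

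\textbf{Main obstacle.} The delicate step is not any single inequality but the bookkeeping around \Cref{le:lemmasup}: because no optimal randomized contract exists, every place where I would like to ``take an optimal solution'' must instead take an $\eta$-optimal one and absorb $\eta$ into the overall $\varepsilon$, and I must make sure the a~priori bound $K$ can be taken uniform over this family of near-optimal contracts (so that $M$ does not blow up). Establishing that $\log K$ is polynomially bounded — i.e.\ that truncating payments to a $\poly$-size range costs at most $\varepsilon/2$ in the principal's utility — is the real content here; everything else is a direct assembly of \Cref{le:lemmarelax}, \Cref{le:lemmatrans}, and the exact correspondence \eqref{eq:backtracking} between $\lprc(M,\A)$ and \Cref{prog:optrc}.
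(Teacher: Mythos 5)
Your overall architecture matches the paper's (solve $\lprc(M,\A)$, chain $\opt_{\lprc(M,\A)}\ge\opt_{\lprc(M,\AindR)}\gtrapprox\opt_{\lprc(\AindR)}\ge\optcrc$ via \Cref{le:lemmatrans} and \Cref{le:lemmarelax}, then recover $(\mu,\pi)$ through \eqref{eq:backtracking}), but there is a genuine gap exactly at the point you yourself flag as ``the real content'': the bound on $K$. You apply \Cref{le:lemmatrans} to the solution $x^i_a(\omega)=\mu(a)\pi^i_a(\omega)$ obtained from an $\varepsilon/2$-optimal randomized contract, so $K$ is a payment-type bound for near-optimal contracts, and you justify it with the intuition that ``paying more than the total reward is never profitable.'' That intuition is false in this model: only \emph{expected} payments are controlled by the principal's utility, and pointwise payments on low-probability recommendation/outcome pairs can (and sometimes must) be huge — the paper's own construction in \Cref{le:lemmasup} uses payments $\tfrac{1}{4\varepsilon}$ in contracts approaching the supremum, precisely because near-optimal contracts may require unboundedly large payments. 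Turning your sketch into a proof would require a separate argument that some $\varepsilon/2$-optimal contract has $\mu(a)\pi^i_a(\omega)$ bounded by an a priori computable quantity of polynomial bit-size (e.g.\ via zeroing payments on outcomes with $F_a(\omega)=0$ and bounding $1/F_a(\omega)$ by the input encoding), and none of this is in your proposal. Moreover, since $M$ must be fixed \emph{before} solving the LP, an existential bound on $K$ is not enough; you need one computable from the instance.

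The paper sidesteps all of this: it never touches near-optimal contracts. Instead it applies \Cref{le:lemmatrans} to an \emph{optimal} solution of the LP $\lprc(\AindR)$, which exists (the program is a feasible, bounded LP, so your $\varepsilon/2$-bookkeeping around \Cref{le:lemmasup} is unnecessary — non-attainment afflicts \Cref{prog:optrc}, not its relaxation) and, by standard LP theory \citep{bertsimas1997introduction}, can be taken with $\lVert x\rVert_\infty\le\tau(|I|)\in 2^{\poly(|I|)}$. Setting $K=\tau(|I|)$ gives $\log M=\poly(|I|,\log(1/\varepsilon))$ immediately, yielding both the approximation chain with loss $\varepsilon$ and the claimed running time. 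If you replace your $K$-derivation with this vertex-solution bound on $\lprc(\AindR)$, the rest of your argument (feasibility of the recovered contract, equality of its value with $\opt_{\lprc(M,\A)}$, and the runtime accounting) goes through as written.
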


\section{Multi- to Single-Agent Contract Design through Virtual Costs }\label{sec:combinatorial}

In this section, we investigate the relationship between multi-agent and single-agent problems. 
Our main result is to show that any instance of the principal-multi-agent problem is related to a single-agent instance with combinatorial actions in which costs are increased through a virtual cost function. 
%
Intuitively, this quantifies the increased difficulty in incentivizing multiple agents compared to incentivizing a single one, which performs the joint action profile.
In particular, we show that moving from the single-agent problem to a multi-agent one is ‘‘no worse'' than increasing the costs by a multiplicative factor of $n$. 
Formally, we show that in any instance it holds that $\optdc\ge \optsing^n$, and that this result is tight. We show that this result also implies that we can extract a constant fraction $\delta/(1+\delta)$ of the $n(1+\delta)$-virtual social welfare $\vsw^{n(1+\delta)}$.
We emphasize that this result is constructive, as it allows for the efficient construction of a symmetric linear contract that guarantees to the principal utility at least $\frac{\delta}{1+\delta}\vsw^{n(1+\delta)}$.
Moreover, at the end of the section, we provide some corollaries to this result, that concern the approximation of the (non-virtual) social welfare.

We start relating the set of best responses induced by a contract in the single-agent problem with virtual costs to the Nash equilibria induced by a deterministic contract in the multi-agent one.
In particular, we show that for any payment scheme $p$ and any $a \in  \B^n(p)$ for the single-agent problem, the contract that offers $p_i=p/n$ to every agent $i\in N$ induces the joint action $a$ as an equilibrium, \emph{i.e.}, $a \in \mathsf{E}(\bar p)$. Formally:
%
%
\begin{restatable}{lemma}{lemmavirtual}\label{le:lemmavirtual}
		For any instance, and any $p: \Omega\to\Reals^+$, it holds that $\B^n(p)\subseteq E(\bar p)$, where $\bar p = (\bar p^i)_{i\in N}$ is such that $\bar p^i(\omega) = p^i(\omega)/n$ for all $i\in N$ and for all $\omega\in\Omega$.
\end{restatable}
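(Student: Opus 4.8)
The plan is to take an action profile $a \in \B^n(p)$ and show directly that $(a, \bar p)$ satisfies the IC constraints of \Cref{prog:optdc}, which is exactly the statement that $a \in \mathsf{E}(\bar p)$. So fix an agent $i \in N$ and an arbitrary deviation $a_i' \in A_i$. I need to show
\[
\sum_{\omega\in\Omega} \bar p^i(\omega) F_a(\omega) - c_i(a_i) \ge \sum_{\omega\in\Omega} \bar p^i(\omega) F_{(a_i', a_{-i})}(\omega) - c_i(a_i').
\]
Since $\bar p^i = p/n$, multiplying through by $n$ this is equivalent to
\[
\sum_{\omega\in\Omega} p(\omega) F_a(\omega) - n\,c_i(a_i) \ge \sum_{\omega\in\Omega} p(\omega) F_{(a_i', a_{-i})}(\omega) - n\,c_i(a_i').
\]

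The key idea is to feed the single-agent deviation $a' := (a_i', a_{-i})$ into the best-response condition defining $\B^n(p)$. Since $a \in \B^n(p)$, the action profile $a$ is a best response in \Cref{prog:single} with costs scaled by $n$, hence in particular it beats $a'$:
\[
\sum_{\omega\in\Omega} F_a(\omega) p(\omega) - n\,c(a) \ge \sum_{\omega\in\Omega} F_{a'}(\omega) p(\omega) - n\,c(a'),
\]
where $c(a) = \sum_{j\in N} c_j(a_j)$. Now I expand $n\,c(a) = n\sum_{j\in N} c_j(a_j)$ and $n\,c(a') = n\,c_i(a_i') + n\sum_{j \ne i} c_j(a_j)$. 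The terms $n\sum_{j\ne i} c_j(a_j)$ are identical on both sides and cancel, leaving precisely
\[
\sum_{\omega\in\Omega} F_a(\omega) p(\omega) - n\,c_i(a_i) \ge \sum_{\omega\in\Omega} F_{a'}(\omega) p(\omega) - n\,c_i(a_i'),
\]
which is the rescaled IC inequality I needed. Dividing back by $n$ recovers the constraint \eqref{eq:optdc_ic} for $(a,\bar p)$. Since $i$ and $a_i'$ were arbitrary, $a \in \mathsf{E}(\bar p)$.

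The only remaining point is that $\bar p$ is a valid payment scheme, i.e.\ $\bar p^i(\omega) \ge 0$; this is immediate since $p(\omega) \ge 0$ in \Cref{prog:single} and division by $n>0$ preserves nonnegativity, so $(a,\bar p)$ is feasible for \Cref{prog:optdc}. I do not expect a genuine obstacle here: the whole content is the observation that a \emph{single-agent} deviation that changes only coordinate $i$ is a legal deviation in the combinatorial single-agent problem, and that scaling the costs by exactly $n$ is what makes the $n$-fold split $p/n$ of the payment absorb the per-agent cost correctly. The one thing to be careful about is bookkeeping the $n$ factor consistently — that the virtual cost $n\,c(a)$ distributes so that the $j\ne i$ terms cancel cleanly — but this is routine.
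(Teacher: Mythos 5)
Your proof is correct and follows essentially the same route as the paper: instantiate the best-response condition of $\B^n(p)$ at the single-coordinate deviation $(a_i',a_{-i})$, cancel the common $n\sum_{j\neq i}c_j(a_j)$ terms, and divide by $n$ to obtain the IC constraint for $(a,\bar p)$ in \Cref{prog:optdc}. No gaps.
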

The result of \cref{le:lemmavirtual} can be used to relate the utility of an optimal deterministic contract in a multi-agent problem to the one of an optimal single-agent contract. Intuitively, the result follows observing that thanks to \Cref{le:lemmavirtual} the cumulative payment doesn't change, while the virtual cost increases by an $n$ factor.

\begin{restatable}{theorem}{thvirtual}\label{th:thvirtual}
	For any instance, it holds that $\optdc\geq\optsing^n$.
\end{restatable}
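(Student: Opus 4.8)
The plan is to exhibit, for an arbitrary feasible solution $(a,p)$ of \Cref{prog:single} with virtual costs scaled by $n$ (i.e., feasible for the program defining $\optsing^n$), a corresponding feasible solution of \Cref{prog:optdc} achieving at least the same principal's utility. The natural candidate, suggested by \Cref{le:lemmavirtual}, is to split the single-agent payment equally: set $\bar p^i(\omega) := p(\omega)/n$ for every agent $i\in N$ and every outcome $\omega$, and recommend the same action profile $a$. So the proof is essentially a two-line reduction that invokes \Cref{le:lemmavirtual} for feasibility and then checks the objective.

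First I would take an optimal (or near-optimal) solution $(a,p)$ for $\optsing^n$; by definition $a\in\B^n(p)$, i.e., $a$ is a best response to $p$ in the combinatorial single-agent instance with costs $\phi(c(a'))=n\,c(a')$. By \Cref{le:lemmavirtual}, the profile $\bar p=(\bar p^i)_{i\in N}$ with $\bar p^i(\omega)=p(\omega)/n$ satisfies $a\in E(\bar p)$, so $(a,\bar p)$ is feasible for \Cref{prog:optdc}; limited liability is immediate since $p(\omega)\ge 0$ implies $\bar p^i(\omega)\ge 0$. It then remains to compare objectives. The principal's utility of $(a,\bar p)$ in \Cref{prog:optdc} is
\[
\sum_{\omega\in\Omega} F_a(\omega)\Bigl(r_\omega - \sum_{i\in N}\bar p^i(\omega)\Bigr) = \sum_{\omega\in\Omega} F_a(\omega)\Bigl(r_\omega - n\cdot \tfrac{p(\omega)}{n}\Bigr) = \sum_{\omega\in\Omega} F_a(\omega)\bigl(r_\omega - p(\omega)\bigr),
\]
which is exactly the objective value of $(a,p)$ in \Cref{prog:single}. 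Taking $(a,p)$ to be optimal for $\optsing^n$ (or, since an optimum may not be attained, passing to a supremizing sequence) yields $\optdc \ge \optsing^n$.

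The only real content is \Cref{le:lemmavirtual}, which is assumed; the rest is a bookkeeping check that the total payment $\sum_i \bar p^i = p$ is preserved while the cost term in the single-agent program was already inflated by the factor $n$ to match the $n$ agents. I do not expect a genuine obstacle here — the subtlety, if any, is purely notational: making sure the "costs increased by $n$" in $\optsing^n$ lines up with the per-agent cost $c_i(a_i)$ appearing in the IC constraints of \Cref{prog:optdc}, and handling the case where $\optsing^n$ is a supremum rather than a maximum (in which case one argues with an $\varepsilon$-optimal solution and lets $\varepsilon\to 0$). I would also remark, as the paragraph preceding the theorem does, that the contract $\bar p$ is \emph{symmetric} (identical across agents) and \emph{linear} whenever $p$ is, which makes the reduction constructive.
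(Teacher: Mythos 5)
Your proposal is correct and follows essentially the same route as the paper: split the optimal single-agent payment as $\bar p^i=p/n$, invoke \Cref{le:lemmavirtual} for feasibility in \Cref{prog:optdc}, and observe that the total payment (hence the principal's objective) is unchanged. The extra caveat about a supremizing sequence is unnecessary since \Cref{prog:single} attains its maximum, but it does no harm.
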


This theorem shows that, given any principal-multi-agent problem, we can consider the $n$ agents as a single agent with combinatorial actions. This has the negative effect of increasing the virtual costs by a factor $n$. As hinted in \Cref{le:lemmavirtual}, the proof of \cref{th:thvirtual} is constructive, meaning that any contract for the single-agent problem with virtual costs can be modified into a contract for the multi-agent problem dividing equally the payment among the agents.

We proceed to show that the increase in the virtual costs stated by \cref{th:thvirtual} is tight. In particular, we show that for any sub-linear (in the number of agents $n$) virtual costs, the utility in the multi-agent instance is arbitrary smaller than the one in the virtual single-agent one.

\begin{restatable}{proposition}{propvirtual}\label{prop:propvirtual}
	For any $\alpha > 0$ and $\eps > 0$, there exists an instance such that 
	\[
	\frac{\optdc}{\optsing^{n^{1-\alpha}}} \leq \eps.
	\]
\end{restatable}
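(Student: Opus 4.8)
The plan is to construct, for each $\alpha>0$ and $\eps>0$, an explicit symmetric instance in which the multi-agent IC constraints are much more demanding than the (virtual) single-agent ones, so that $\optdc$ collapses while $\optsing^{n^{1-\alpha}}$ stays bounded away from zero. First I would fix $n$ large enough (as a function of $\alpha$ and $\eps$), take $\Omega=\{\omega_0,\omega_1\}$ with rewards $r_{\omega_1}=1$, $r_{\omega_0}=0$, and give each agent two actions, a null action $a^\circ$ with cost $0$ and an effortful action with a small cost $c$. The outcome distribution should be an "AND"-type (or threshold) technology: $\omega_1$ (the good outcome) is produced with non-negligible probability only when \emph{all} agents exert effort, and with probability (essentially) $0$ otherwise — e.g.\ $F_a(\omega_1)=q$ if $a=(\text{effort})^n$ and $F_a(\omega_1)=0$ otherwise, or a smoothed version if one wants full support. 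This is exactly the regime where externalities bite hardest: a single agent's deviation destroys almost all the value.

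Next I would compute the two sides. In the multi-agent problem, to induce the all-effort profile as an equilibrium, each agent $i$ must be compensated for cost $c$ by the \emph{marginal} effect of his own effort on his payment, which is $q\cdot p^i(\omega_1)$ (since if he alone deviates the good outcome probability drops from $q$ to $0$). Hence the IC constraint forces $p^i(\omega_1)\ge c/q$ for every $i$, so the total payment in the good outcome is at least $n c/q$, and the principal's utility is at most $q(1 - nc/q) = q - nc$. Choosing $c$ so that $nc$ is a constant fraction of $q$ — say $c=\Theta(q/n)$ — makes $\optdc$ of order $q$; more precisely I will pick the parameters so that $\optdc$ is driven below any target. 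On the single-agent side with virtual cost multiplier $\beta:=n^{1-\alpha}$, the combinatorial agent can take the all-effort action at virtual cost $\beta\cdot n c = n^{2-\alpha}c$; since $c=\Theta(q/n)$ this virtual cost is $\Theta(q\, n^{1-\alpha})$, which is \emph{smaller} than $q$ once — wait, this needs care: I actually want $\optsing^{n^{1-\alpha}}$ to stay large, so I must instead choose $c$ small relative to $q/n^{2-\alpha}$ so that the virtual all-effort cost $n^{2-\alpha}c \ll q$, while simultaneously making $\optdc$ small. The correct calibration is $c \asymp q / n^{2-\alpha/2}$ (or any intermediate scale): then the total multi-agent payment floor is $nc/q \asymp n^{-1+\alpha/2}\to\infty$?? — no. Let me restate the calibration cleanly in the writeup: choose $c = q\,n^{-(2-\alpha/2)}$ so that (i) $\beta n c = q\, n^{-\alpha/2}\to 0$, hence the combinatorial agent can be induced to exert full effort at negligible virtual cost and $\optsing^{n^{1-\alpha}} \ge q - o(1) = \Theta(q)$ using a contract paying $\approx$ the virtual cost on $\omega_1$; while (ii) the multi-agent payment floor $n c/q = n^{-1-\alpha/2}$ is \emph{tiny}, so $\optdc$ is \emph{not} small — this is the wrong direction again.

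The resolution — and the step I expect to be the real obstacle — is that the AND-technology alone does not separate the two quantities, because small per-agent costs are easy to cover in \emph{both} models; what one actually needs is a technology where the \emph{single-agent} deviations are weak (so $\B^{\,n^{1-\alpha}}$ contains a profitable action profile) but the \emph{per-agent} deviations are strong, and this requires the marginal value of one agent's effort to be large while the marginal virtual cost of the \emph{aggregate} deviation is comparatively small — i.e.\ one should use the construction underlying the tightness of Theorem~\ref{th:thvirtual} but pushed further. Concretely I would use $\Omega$ with $n$ "individual-success" outcomes plus a jackpot: agent $i$'s effort moves probability mass to outcome $\omega_i$ (worth little) unless \emph{all} effort, in which case the jackpot $\omega^\star$ (worth $1$) occurs. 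Then inducing all-effort multi-agent-style requires paying each agent on $\omega^\star$ an amount $\ge c_i/(\text{prob gap}) $, forcing total payments $\ge n\cdot(\text{something of order }1)$ once costs are chosen $\Theta(1/n)$ each, killing $\optdc$; whereas the single agent with virtual costs summing to $n^{1-\alpha}\cdot\Theta(1) = \Theta(n^{1-\alpha})$ still — this only helps if rewards scale, so I will instead let the jackpot reward be $R := n^{1-\alpha/2}$ and costs $c_i = 1$: then $\optdc \le qR - n\cdot 1 \cdot(\text{payment floor})$, and with the floor of order $1$ per agent the principal's utility is $\le qR - n = q n^{1-\alpha/2} - n < 0$ for large $n$ (so $\optdc = 0$, achieved by the null profile), while $\optsing^{n^{1-\alpha}}$ can take the all-effort action at virtual cost $n^{1-\alpha}\cdot n = n^{2-\alpha}$ — which exceeds $qR=qn^{1-\alpha/2}$, again failing. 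I will therefore finalize the construction in the writeup by balancing three quantities — jackpot reward $R$, per-agent cost $c$, and success probability $q$ — via $R = n$, $c = q/ n^{1-\alpha/2}$: this gives $\optsing^{n^{1-\alpha}} \ge qR - n^{1-\alpha}\cdot nc = qn - q n^{1-\alpha/2} = qn(1-n^{-\alpha/2}) = \Theta(qn)$, while the multi-agent payment floor is $nc/q\cdot 1 = n^{\alpha/2}$ per unit... and $\optdc \le qR - q\cdot n\cdot (c/q) = qn - nc\cdot$... — the bookkeeping here is exactly the delicate part, so in the paper I will (a) fix $R,c,q$ as explicit functions of $n,\alpha$, (b) verify $\optsing^{n^{1-\alpha}} = \Theta(qn)$ by exhibiting the explicit linear contract, (c) bound $\optdc$ from above by summing the $n$ per-agent IC inequalities (each of the form $q\,p^i(\omega^\star) \ge c$) to get total payment $\ge nc/q$ on the jackpot, hence $\optdc \le q(R - nc/q) = qR - nc$, and finally (d) choose the scales so that $\optdc/\optsing^{n^{1-\alpha}} \le \eps$ for all $n \ge N(\alpha,\eps)$, which pins down $c$ and $R$ so that $nc/(qR)$ is bounded below by a constant (forcing the numerator down) while $n^{2-\alpha}c/(qR)\to 0$ (keeping the denominator up); these two asymptotic requirements are compatible precisely because $2-\alpha > 1$, i.e.\ the exponent separation $\alpha$ is what creates the gap, and making this compatibility explicit is the crux of the argument.
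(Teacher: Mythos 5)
You correctly diagnose that a plain AND technology cannot work, but the jackpot construction you settle on has exactly the same defect, so the proposal is missing the key idea. In any technology where the good outcome requires \emph{all} agents to work (jackpot probability $q$ under full effort, essentially $0$ otherwise), each agent is pivotal: when the others work, agent $i$'s own deviation moves the jackpot probability by the full gap $q$, so the per-agent IC constraint only forces $p^i(\omega^\star)\ge c/q$, i.e.\ a total \emph{expected} payment of about $nc$ --- the raw total cost, with no amplification. The virtual single agent, by contrast, must be compensated for the virtual cost $n^{1-\alpha}\cdot nc\ge nc$, so in every such instance one has, up to lower-order terms, $\optdc \ge qR-nc \ge qR-n^{2-\alpha}c \approx \optsing^{n^{1-\alpha}}$: the separation goes in the \emph{wrong} direction, and no calibration of $R,c,q$ can fix this. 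Your final step (d) makes this concrete: you require $nc/(qR)$ bounded below by a constant while $n^{2-\alpha}c/(qR)\to 0$, but the second quantity is $n^{1-\alpha}$ times the first, so for $\alpha<1$ (the relevant regime) the two requirements are mutually inconsistent; the claim that they are ``compatible precisely because $2-\alpha>1$'' is where the argument breaks. Relatedly, your guiding principle ``the marginal value of one agent's effort should be large'' is inverted: a large marginal impact of a unilateral deviation makes that deviation \emph{cheap} to deter.

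What the proof actually needs is the opposite of pivotality: a technology in which each agent's unilateral deviation shifts the outcome distribution by only a $1/n$ fraction of the total, so that the per-agent IC amplifies the effective cost by a factor of $n$, while the virtual single agent only suffers the factor $n^{1-\alpha}$. The paper's instance has two outcomes with $F_a(\omega_1)=1-|\{i:a_i=\atwo\}|/n$, rewards $r_{\omega_1}=1$, $r_{\omega_2}=0$, and per-agent cost $c_i(\aone)=\tfrac{1}{2n^{2-\alpha}}$, with $n=\lceil\eps^{-1/\alpha}\rceil$. There, incentivizing effort forces $p^i(\omega_1)\ge \tfrac{1}{2n^{1-\alpha}}$ (cost divided by the $1/n$ marginal probability), and optimizing over how many agents to incentivize yields $\optdc\le \tfrac{1}{2n^\alpha}$; on the other side, the linear contract $p(\omega_1)=\tfrac12$ makes the all-effort profile a best response under virtual costs scaled by $n^{1-\alpha}$ (its total virtual cost is exactly $\tfrac12$), so $\optsing^{n^{1-\alpha}}\ge\tfrac12$, giving ratio at most $n^{-\alpha}\le\eps$. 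Your verbal diagnosis (``single-agent deviations weak, per-agent deviations strong'') points toward this, but the construction you propose realizes the opposite, and the bookkeeping you defer to the write-up cannot be completed within it.
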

In the first part of the section, we related the optimal principal's utility in the multi- and single-agent problem.
In the remaining part of this section, we relate the principal's utility with the the stronger benchmark of the ‘‘first-best'' contract, namely the one that maximizes the social welfare.
In particular, we show how to extend \cref{th:thvirtual} to provide some approximation results for the $n$ virtual social welfare $\vsw^n$. Additionally, we recover supplementary results on the approximation of the (non-virtual) social welfare $\sw$.

 Our first result follows directly from \cref{th:thvirtual}, and shows that we can convert any (instance-dependent) approximation result between $\optsing^n$ and  $\vsw^n$ into an (instance-dependent)
 approximation between $\optdc$ and  $\vsw^n$.
 %
 %
 In particular, we assume the existence of a function $x: I\rightarrow [0,1]$ that lowerbounds the ratio between the optimal contract and the social welfare of any instance $I$. 
%
\begin{corollary}\label{cor:insdepvsw}
	Assume there exists a function $x$ that for all instances $I = (N, \Omega, (A_i)_{i\in N}, (c_i)_{i\in N}, F, r)$, guarantees that $\optsing \geq x(I) \vsw$. Then, it holds that
	\[
		\optdc \geq x(I) \vsw^n.
	\]
\end{corollary}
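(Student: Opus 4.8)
The plan is to apply Theorem~\ref{th:thvirtual} twice, composing it with the hypothesised instance-dependent bound. First I would recall what we have at our disposal: Theorem~\ref{th:thvirtual} gives $\optdc \ge \optsing^n$ in every instance, and this inequality is stated for the actual costs $c$. The point is that it holds \emph{verbatim} if we replace the cost function $c$ by any other cost function, in particular by the scaled cost $\phi(x) = n x$; what matters for the proof of Theorem~\ref{th:thvirtual} is only that costs are non-negative and that $c_a = \sum_{i\in N} c_i(a_i)$, which is preserved under coordinate-wise scaling. Applying Theorem~\ref{th:thvirtual} to the instance $I'$ whose costs are $n$ times those of $I$ therefore yields $\optdc^{n} \ge (\optsing^{n})^{n} = \optsing^{n^2}$ for the original instance --- but this is not quite what we need, so let me instead argue directly.

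The cleaner route: fix an instance $I = (N, \Omega, (A_i)_{i\in N}, (c_i)_{i\in N}, F, r)$. Consider the virtual single-agent instance $I^n$ obtained from the combinatorial single-agent problem associated to $I$ by replacing each cost $c(a)$ with $n\, c(a)$. By construction, $\optsing$ evaluated on $I^n$ equals $\optsing^n$ for $I$, and $\vsw$ evaluated on $I^n$ equals $\vsw^n$ for $I$ (this is just unfolding the definitions of $\optsing^\phi$ and $\vsw^\phi$ with $\phi(x) = nx$). Now apply the hypothesised function $x$ to the instance $I^n$: since $x$ is assumed to lower-bound $\optsing / \vsw$ on \emph{all} instances, we get $\optsing^n = \optsing(I^n) \ge x(I^n)\,\vsw(I^n) = x(I^n)\,\vsw^n$. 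The subtlety is that the bound comes out with $x(I^n)$, not $x(I)$ --- so to get the statement as written, I would note that the corollary as stated is most naturally read with $x(I)$ being shorthand for the guarantee applied to the relevant (virtual) instance, or else observe that if $x$ is a uniform constant (the typical case, e.g. $x \equiv \delta/(1+\delta)$) then $x(I^n) = x(I)$ and there is nothing to worry about. I would state this carefully to match the paper's intent.

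Finally, chain the two inequalities: $\optdc \ge \optsing^n \ge x(I)\,\vsw^n$, where the first inequality is Theorem~\ref{th:thvirtual} and the second is the instance-dependent hypothesis applied to the virtual instance. This is essentially a two-line argument once the bookkeeping about which instance $x$ is evaluated on is settled. The only real obstacle --- and it is a presentational one rather than a mathematical one --- is being precise about the domain of $x$: the hypothesis "$\optsing \ge x(I)\vsw$ for all instances $I$" must be invoked on the \emph{virtual} instance $I^n$, and one should either restrict attention to cost-scaling-invariant bounds $x$ (which all the bounds actually used in the paper are) or carry the $x(I^n)$ notation through. I would add a one-sentence remark to this effect so the reader is not confused when the corollary is later instantiated with $x = \delta/(1+\delta)$, for which the distinction evaporates.
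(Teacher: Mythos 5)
Your proof is correct and follows essentially the same route as the paper, whose entire argument is the one-line chain $\optdc\geq\optsing^n\ge x(I)\,\vsw^n$, i.e., \cref{th:thvirtual} composed with the hypothesised bound applied to the virtual costs. Your additional care about whether the hypothesis delivers $x(I^n)$ or $x(I)$ is a fair observation that the paper silently glosses over (its intended instantiations, such as $x(I)=\nicefrac{1}{\Pi_{i\in N}|A_i|}$, are invariant under cost scaling), but it does not alter the argument.
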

\begin{proof}
	The result follows trivially from \cref{th:thvirtual}, since $\optdc\geq\optsing^n\ge  x(I) \vsw^n $.
\end{proof}

We stated the above general result in order to accommodate different instantiations from the literature on single agent contracts.
For instance, a classical example of such an approximation function $x(\cdot)$ is $x(I) =\nicefrac 1{\Pi_{i\in N}|A_i|}$, but other approximation factors can be obtained, for instance, depending on the range of rewards or costs (see \citep{dutting2019simple} for further details). 
We remark that most of these approximations were proven via simple linear contracts.

Additionally, we complement the result of \cref{cor:insdepvsw} with an instance-independent lower bound.
In particular, we show that there always exists a linear contract that provides a $\delta/(1+\delta)$ approximation of the $n(1-\delta)$-virtual social welfare $\vsw^{n(1+\delta)}$.

\begin{restatable}{theorem}{corollaryApprinsind}\label{cor:apprinsind}
	For any instance, and any $\delta > 0$, it holds that
	\[
		\optdc \geq \frac{\delta}{1 + \delta} \vsw^{n(1+\delta)}.
	\]
	Moreover, there exists a linear contract that achieves such an approximation. 
\end{restatable}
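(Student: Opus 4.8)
The plan is to reduce to the combinatorial single-agent instance via \Cref{th:thvirtual} and then exhibit an explicit linear contract on the $n$-virtual single-agent instance that already extracts the desired fraction of $\vsw^{n(1+\delta)}$. Throughout, write $R(a):=\sum_{\omega\in\Omega}F_a(\omega)r_\omega$ for the expected reward of action profile $a\in\A$ and recall $c(a)=\sum_{i\in N}c_i(a_i)$, so that $\vsw^{\beta}=\max_{a\in\A}\bigl(R(a)-\beta c(a)\bigr)$ and $\optsing^n$ is the value of \Cref{prog:single} with each cost $c(a)$ replaced by $n\,c(a)$. Since \Cref{th:thvirtual} gives $\optdc\ge\optsing^n$, it suffices to prove $\optsing^n\ge\frac{\delta}{1+\delta}\vsw^{n(1+\delta)}$ by means of a linear contract.

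First I would fix $\alpha:=\tfrac{1}{1+\delta}\in(0,1)$ and consider the linear payment scheme $p(\omega):=\alpha\, r_\omega\ge 0$ (so limited liability holds). Let $a_\alpha\in\B^n(p)$ be a best response to $p$ in the $n$-virtual single-agent instance, and let $a^\star$ attain $\vsw^{n(1+\delta)}$. By optimality of $a_\alpha$ we have $\alpha R(a_\alpha)-n\,c(a_\alpha)\ge\alpha R(a^\star)-n\,c(a^\star)$; discarding the nonnegative term $n\,c(a_\alpha)$ on the left and using $\alpha=\tfrac{1}{1+\delta}$ yields
\[
\alpha R(a_\alpha)\;\ge\;\alpha R(a^\star)-n\,c(a^\star)\;=\;\tfrac{1}{1+\delta}\bigl(R(a^\star)-n(1+\delta)\,c(a^\star)\bigr)\;=\;\tfrac{1}{1+\delta}\,\vsw^{n(1+\delta)}.
\]
Multiplying through by the positive factor $\tfrac{1-\alpha}{\alpha}=\delta$, the principal's utility under this linear contract is
\[
(1-\alpha)R(a_\alpha)\;\ge\;\delta\cdot\tfrac{1}{1+\delta}\,\vsw^{n(1+\delta)}\;=\;\tfrac{\delta}{1+\delta}\,\vsw^{n(1+\delta)}.
\]
Since $(a_\alpha,p)$ is feasible for \Cref{prog:single} with scaled costs, this shows $\optsing^n\ge\frac{\delta}{1+\delta}\vsw^{n(1+\delta)}$, and combining with $\optdc\ge\optsing^n$ proves the claimed inequality.

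For the ``moreover'' part I would recover a multi-agent linear contract from $p$ using \Cref{le:lemmavirtual}: the symmetric scheme $\bar p=(\bar p^i)_{i\in N}$ with $\bar p^i(\omega)=p(\omega)/n=\tfrac{\alpha}{n}r_\omega$ induces $a_\alpha\in\B^n(p)\subseteq\mathsf{E}(\bar p)$ as an equilibrium; this $\bar p$ is linear (each agent's payment is proportional to the realized reward, with aggregate slope $\alpha$), and because ties in \Cref{prog:optdc} are broken in the principal's favor we may select $a_\alpha$, so the principal's utility is $\sum_{\omega\in\Omega}F_{a_\alpha}(\omega)\bigl(r_\omega-\sum_{i\in N}\bar p^i(\omega)\bigr)=(1-\alpha)R(a_\alpha)\ge\frac{\delta}{1+\delta}\vsw^{n(1+\delta)}$. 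The argument is essentially elementary; the only points needing care are the direction of the best-response inequality together with the legitimacy of dropping $n\,c(a_\alpha)\ge 0$, the check that $\alpha\le 1$ so that the linear contract is well defined under limited liability, and the equilibrium-selection subtlety in the conversion to the multi-agent setting — \Cref{le:lemmavirtual} only guarantees that $a_\alpha$ is \emph{some} equilibrium of $\bar p$, so we rely on the paper's convention that ties are resolved in the principal's favor (equivalently, attach the recommendation $a_\alpha$). The case $\vsw^{n(1+\delta)}\le 0$ is automatically covered since $R(a_\alpha)\ge 0$.
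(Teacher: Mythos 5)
Your proposal is correct and follows essentially the same route as the paper: the linear contract $p(\omega)=\frac{1}{1+\delta}r_\omega$ on the $n$-virtual single-agent instance, the best-response inequality with the nonnegative cost term dropped to lower-bound the retained reward by $\frac{\delta}{1+\delta}\vsw^{n(1+\delta)}$, and then \Cref{th:thvirtual} (via the $\bar p^i=p/n$ construction of \Cref{le:lemmavirtual}) to transfer the bound to $\optdc$. The extra care you take about equilibrium selection and the $\vsw^{n(1+\delta)}\le 0$ case is consistent with the paper's conventions and does not change the argument.
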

The previous results establish a connection between the principal's utility and the \emph{virtual} social welfare. 
In the following, we shift our focus to comparing the principal's utility with the (non-virtual) social welfare. 
As indirectly proved in \cref{prop:propvirtual}, it is generally impossible to recover any constant approximation of the social welfare. 
Nevertheless, we show that, under reasonable conditions, there exists a linear contract that guarantees a good approximation of the social welfare. 

First, we show that, when the all joint action have a small cost with respect to their welfare, then the virtual social welfare is close to the non-virtual one.

\begin{restatable}{lemma}{leapprsw}\label{le:leapprsw}
	For any instance $(N, \Omega, (A_i)_{i\in N}, (c_i)_{i\in N}, F, r)$ and any $\alpha \ge 1$, it holds that
	\[
		\vsw^\alpha \geq (1 - \beta(\alpha - 1))\sw,\quad\quad\textnormal{where}\quad \quad\beta\coloneqq \max_{a\in\A}\left\{\frac{c(a)}{\sum_{\omega\in\Omega}F_a(\omega)r_\omega - c(a)}\right\}.
	\]
\end{restatable}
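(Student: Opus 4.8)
The plan is to prove the bound pointwise in the action profile and then take the maximum. For a fixed action profile $a\in\A$, we want to compare the virtual welfare term $\sum_{\omega\in\Omega}F_a(\omega)r_\omega - \alpha c(a)$ with $(1-\beta(\alpha-1))$ times the non-virtual welfare $\sum_{\omega\in\Omega}F_a(\omega)r_\omega - c(a)$. Writing $R_a\coloneqq \sum_{\omega\in\Omega}F_a(\omega)r_\omega$ and $W_a \coloneqq R_a - c(a)$ for the (non-virtual) welfare of $a$, the virtual welfare of $a$ is exactly $W_a - (\alpha-1)c(a)$.

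The first step is to dispose of the degenerate cases. If $\sw \le 0$ the statement is vacuous or follows trivially since $\vsw^\alpha \ge \vsw^1 = \sw$ is false in general — actually here I need to be a little careful, so instead I would observe that $\vsw^\alpha = \max_a (W_a - (\alpha-1)c(a))$, and it suffices to exhibit one action profile achieving the bound; take $a^\star$ to be a welfare-maximizing profile, so $W_{a^\star} = \sw$. If $\sw \le 0$ then $\beta$ as defined may be problematic (the denominator could be nonpositive), so the intended regime is $\sw>0$ and $W_a>0$ at the maximizer; I would state the lemma under the implicit assumption that the relevant welfare quantities are positive, matching how $\beta$ is written, and handle $\sw>0$ as the main case.

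The core step is the pointwise inequality at $a^\star$: since $\alpha\ge 1$ and $c(a^\star)\ge 0$,
\[
\vsw^\alpha \ge W_{a^\star} - (\alpha-1)c(a^\star) = \sw - (\alpha-1)c(a^\star) = \left(1 - (\alpha-1)\frac{c(a^\star)}{\sw}\right)\sw.
\]
Now by the definition of $\beta$ we have $\frac{c(a^\star)}{W_{a^\star}} = \frac{c(a^\star)}{\sw} \le \beta$ (this is precisely the ratio appearing in the max defining $\beta$, evaluated at $a^\star$). Substituting gives $\vsw^\alpha \ge (1-\beta(\alpha-1))\sw$, which is the claim.

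The main obstacle — really the only subtlety — is sign bookkeeping: the substitution $\frac{c(a^\star)}{\sw}\le\beta$ requires $\sw>0$ (otherwise dividing flips or breaks the inequality), and one also wants $W_{a^\star}=\sw$ to be the quantity in the denominator of $\beta$'s maximand at $a^\star$, which it is by construction. I would therefore make the positivity of welfare explicit, note that $\beta\ge 0$ in that regime, and remark that when $\alpha=1$ the bound is the trivial identity $\vsw^1=\sw$. Everything else is the one-line algebraic manipulation above; no optimization over $\pi$ or appeal to earlier lemmas is needed, since $\vsw^\alpha$ is defined as a plain maximum over action profiles.
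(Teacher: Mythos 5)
Your argument is correct and matches the paper's proof essentially line for line: both evaluate the virtual objective at a welfare-maximizing profile $a^\star$, rewrite it as $\left(1-(\alpha-1)\frac{c(a^\star)}{\sw}\right)\sw$, and bound the ratio by $\beta$ via its definition. Your explicit remark about needing $\sw>0$ for the final substitution is a reasonable sign-bookkeeping caveat that the paper leaves implicit, but it does not change the argument.
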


This result is intuitive: when each action gives an expected welfare far above its costs ($\beta\approx 0$), then we can increase the costs in the virtual instance, and still have good guarantees on the virtual social welfare.
We can now combine the results of \cref{cor:apprinsind} and \cref{le:leapprsw} to show that when each action has small cost with respect to the social welfare that it yields, there exists a linear contract providing a good approximation of $\sw$.

\begin{corollary}
	For any instance $(N, \Omega, (A_i)_{i\in N}, (c_i)_{i\in N}, F, r)$, it holds that 
	\[
		\optdc \geq \frac{\delta}{1+\delta}\left(1 - \beta n(1+\delta)\right)\sw,
	\]
	where $\beta$ is defined as per \cref{le:leapprsw}. Moreover, there exists a linear contract that guarantees such an approximation.
\end{corollary}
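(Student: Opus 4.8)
The plan is to obtain the corollary by simply chaining the two facts we have already proved: \Cref{cor:apprinsind}, which lower-bounds $\optdc$ by $\tfrac{\delta}{1+\delta}\vsw^{n(1+\delta)}$ via an explicit linear contract, and \Cref{le:leapprsw}, which lower-bounds the virtual social welfare in terms of the non-virtual one. Concretely, I would instantiate \Cref{le:leapprsw} with $\alpha = n(1+\delta)$; since $n\ge 1$ and $\delta>0$ we have $\alpha>1$, so the lemma applies and gives $\vsw^{n(1+\delta)}\ge\bigl(1-\beta(n(1+\delta)-1)\bigr)\sw$, with $\beta$ as in the statement of that lemma.

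Combining this with \Cref{cor:apprinsind} yields
\[
\optdc \;\ge\; \frac{\delta}{1+\delta}\,\vsw^{n(1+\delta)} \;\ge\; \frac{\delta}{1+\delta}\bigl(1-\beta(n(1+\delta)-1)\bigr)\sw .
\]
To pass to the slightly weaker constant in the statement, I would observe that $\beta\ge 0$ implies $1-\beta(n(1+\delta)-1)\ge 1-\beta n(1+\delta)$, and that $\sw\ge 0$ (the joint profile playing $a^\circ$ everywhere has zero cost and non-negative expected reward under limited liability). Hence multiplying the previous inequality by the non-negative quantity $\tfrac{\delta}{1+\delta}\sw$ preserves its direction and produces exactly $\optdc \ge \tfrac{\delta}{1+\delta}\bigl(1-\beta n(1+\delta)\bigr)\sw$. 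In the regime $1-\beta n(1+\delta)<0$ the asserted inequality is vacuous, since $\optdc\ge 0$ again by limited liability, so no further argument is needed there.

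The ``moreover'' clause comes for free: \Cref{cor:apprinsind} already exhibits a concrete linear contract whose principal's utility is at least $\tfrac{\delta}{1+\delta}\vsw^{n(1+\delta)}$, and the chain above only lower-bounds that same number; the identical linear contract therefore witnesses the claimed approximation of $\sw$. I do not expect any genuine obstacle here — the statement is essentially a one-line corollary — and the only points demanding a moment of care are the off-by-one between the cost-inflation factors $n(1+\delta)-1$ and $n(1+\delta)$ (absorbed using $\beta\ge 0$) and the sign bookkeeping once the welfare coefficient becomes negative (absorbed using $\optdc\ge 0$).
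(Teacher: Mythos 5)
Your proposal is correct and follows exactly the paper's argument: chain \cref{cor:apprinsind} with \cref{le:leapprsw} instantiated at $\alpha=n(1+\delta)$, then weaken $1-\beta(n(1+\delta)-1)$ to $1-\beta n(1+\delta)$ using $\beta\ge 0$ and $\sw\ge 0$, with the same linear contract from \cref{cor:apprinsind} witnessing the bound. Your extra remarks (the justification of $\sw\ge 0$ and the vacuous regime) are merely more explicit than the paper's one-line chain, not a different route.
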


\begin{proof}
		The result follows trivially from \cref{cor:apprinsind} and~\cref{le:leapprsw} since:
		\[\optdc\geq\frac{\delta}{1 + \delta} \vsw^{n(1+\delta)}\ge  \frac{\delta}{1+\delta}\left(1 - \beta\left(n(1+\delta)- 1\right)\right)\sw\ge \frac{\delta}{1+\delta}\left(1 - \beta n(1+\delta)\right)\sw.\]
\end{proof}

An extreme case arises when the costs of actions are close to $0$. In this scenario, there are de facto no externalizes  (\ie,~$\beta\approx0$) and we can extract the entire social welfare by setting $\delta\rightarrow \infty$. On the other hand, if actions are very costly (\ie,~$\beta\gg 1$) then we can extract only a zero fraction of $\sw$. The results above interpolates between such extremes.
%
\section{The Bayesian Principal-Multi-Agent Problem}\label{sec:bayesian}
A \emph{Bayesian principal-multi-agent} problem is defined by a tuple $(N, \Omega, (\Lambda_i)_{i\in N}, G, (A_i)_{i\in N}, c, F, r)$, where  $N$, $\Omega$, $(A_i)_{i\in N}$, and $r$ are defined in the non-Bayesian setting. 
 Additionally, the problem instance is enhanced with a set $\Lambda_i \subset \Reals$ of types for each agent $i$, which determine both the cost functions of the agents and the outcome probabilities. In particular, when the agents have type profile $\lambda\in\Lambda\coloneqq \times_{i\in N}\Lambda_i$, the outcome distribution induced by each action profile $a\in\A$ is denoted as $F_a^\lambda\in\Delta(\Omega)$, while the cost function of each agent is denoted as $c_i^{\lambda_i}: A_i\to [0,1]$ (notice that the cost of each agent depends exclusively on their type). The agents' types are drawn from some publicly-known distribution $G$.

\subsection{Contract Design}
In the Bayesian setting, the interaction protocol between the principal and the agents is augmented with a preliminary \emph{type-reporting} stage, in which each agent privately observes their type and report it (possibly untruthfully) to the principal. More in detail, the interaction goes as follows: 
\begin{enumerate}[label=(\roman*)]
	\item The principal commits to a contract, which determines action recommendations and payments for each possible type profile communicated by the agents;
	\item A type profile $\lambda\in\Lambda$ is drawn from $G$, each agent privately observes their type $\lambda_i$ and communicates a type $\lambda_i^\prime\in\Lambda_i$ to the principal (possibly lying, \ie,  $\lambda'_i\neq\lambda_i$);
	\item The principal observes the tuple of types $\lambda^\prime$ reported by the agents and privately recommends to each agent $i\in N$ an action $a_i\in A_i$; 
	\item Each agent observes the action recommendation and plays $a_i^\prime\in A_i$, incurring cost $c_i^{\lambda_i}(a_i^\prime)$. Also in this case, the agent can deviate from the action recommendation;
	\item An outcome $\omega\in\Omega$ is sampled according to $F_{a^\prime}^{\lambda}$;
	\item The principal receives reward $r_\omega$ and pays the agent according to the payment scheme for the reported type profile $\lambda^\prime$. 
\end{enumerate}

%
Similarly to the previous setting, we will focus on two main classes of contracts, \emph{i.e.,} \emph{randomized contracts} and \emph{deterministic contracts}. 
\paragraph{Randomized contract.} A randomized contract is a tuple $(\mu, \pi)$, where $\mu=\{\mu^{\lambda}\in\Delta(\A):\lambda\in\Lambda\}$ is a set of of probability distributions over action recommendations for each tuple of types $\lambda$, and $\pi = \left\{\pi^{\lambda} :\lambda\in\Lambda\right\}$ is a set of tuples $\pi^\lambda = (\pi^{\lambda, i}_a)_{(i,a)\in N\times\A}$, with $\pi^{\lambda, i}_a: \Omega\to\Reals^+$. In particular, the value $\pi^{\lambda, i}_a(\omega)$, specifies the payment for agent $i$ in outcome $\omega\in\Omega$, when the types reported are $\lambda$ and the action profile recommended is $a\in\A$. Moreover, given any randomized contract $(\mu, \pi)$, we define the quantity $U_i^{\mu, \pi}(\lambda_i\to\lambda_i^\prime, a_i\to a_i^\prime\vert \lambda_{-i})$, which models the (unnormalized) expected utility for agent $i$ when the true type profile is $(\lambda_i, \lambda_{-i})\in\Lambda$, they report type $\lambda_i^\prime\in\Lambda_i^\prime$ to the principal, receive recommendation $a_i\in A_i$ and play action $a_i^\prime\in A_i$. Formally, 
\[
	U_i^{\mu, \pi}(\lambda_i\to\lambda_i^\prime, a_i\to a_i^\prime\vert \lambda_{-i})\coloneqq \sum_{a_{-i}\in\A_{-i}} \mu^{(\lambda_i^\prime, \lambda_{-i})}(a_i, a_{-i}) \left(\sum_{\omega\in\Omega}\pi^{(\lambda_i^\prime, \lambda_{-i}), i}_{(a_i, a_{-i})}(\omega)F_{(a_i^\prime, a_{-i})}^{(\lambda_i, \lambda_{-i})}(\omega) - c_i^{\lambda_i}(a_i^\prime)\right).
\]
For notational convenience, we will write $U_i^{\mu, \pi}(\lambda_i, a_i \vert \lambda_{-i})$ instead of $U_i^{\mu, \pi}(\lambda_i\to\lambda_i, a_i\to a_i \vert \lambda_{-i})$. 

An optimal randomized contract for the Bayesian setting can be found as a solution to the following optimization problem: 
\begin{subequations}\label{progr:bay-optrc}
	\begin{align}[left =\boptcrc\coloneqq\empheqlbrace]
		&\sup_{\mu, \pi}\, \mathbb{E}_{\lambda\sim G} \left[\sum_{a\in\A}\sum_{\omega\in\Omega} \mu^\lambda(a)F_a^\lambda(\omega)\left(r_\omega - \sum_{i\in N} \pi^{\lambda, i}_a(\omega)\right)\right]\hspace{1cm} \textnormal{s.t.}\hspace{2.25cm}\, \nonumber\\
		&\hspace{.2cm}\sum_{a_i\in A_i}U_i^{\mu, \pi}\left(\lambda_i, a_i\vert\lambda_{-i}\right)	\nonumber \\ 
		&\specialcell{\hspace{.6cm}\geq\hspace{-2mm} \sum_{a_i\in A_i} \max_{a_i^\prime\in A_i} U_i^{\mu, \pi}\hspace{-1mm}\left(\lambda_i\to\lambda_i^\prime, a_i\to a_i^\prime\vert\lambda_{-i}\right) \hfill\forall i\in N,\forall a_i\in A_i,\forall \lambda \in\Lambda,\forall\lambda_i^\prime\in\Lambda_i}\label{eq:bay-optrc_ic}\\
		&\specialcell{\hspace{2mm}\pi^{\lambda, i}_a(\omega)\geq 0\hfill\forall i\in N,\forall a\in\A,\forall\lambda\in\Lambda,\forall\omega\in\Omega} \\
		&\specialcell{\hspace{2mm}\mu^{\lambda}\in\Delta(\A)\hfill\forall\lambda\in\Lambda},
	\end{align}
\end{subequations}
where the Constraint \eqref{eq:bay-optrc_ic} restricts the feasibility set to randomized contracts that are \emph{dominant-strategy incentive compatible} (DSIC).

\paragraph{Deterministic contract.} A deterministic contract is a tuple $(\mathfrak{a}, p)$, where $ \mathfrak{a}=\{a^\lambda\in\A:\lambda\in\Lambda\}$ is a set of one action recommendation for each type profile, and $p=\{p^{\lambda}: \lambda\in\Lambda\}$ is a set of tuples $p^\lambda =(p^{\lambda, i})_{i\in N}$, with $p^{\lambda, i}:\Omega\to\Reals^+$. Here, the value $p^{\lambda, i}(\omega)$ denotes the payment received by agent $i$ when the outcome is $\omega\in\Omega$ and the type profile reported by the agents is $\lambda\in\Lambda$. An optimal deterministic contract can be found as a solution to the following optimization problem:
\begin{subequations}\label{progr:bay-optdc}
	\begin{align}[left =\boptdc\coloneqq\empheqlbrace]
		&\sup_{\mathfrak{a}, p}\,\mathbb{E}_{\lambda\sim G} \left[F_{a^\lambda}^\lambda(\omega)\left(r_\omega - \sum_{i\in N} p^{i, \lambda}(\omega)\right)\right]\hspace{1cm}\textnormal{s.t.} \hspace{4cm}\,\nonumber\\
		&\hspace{.2cm} \sum_{\omega\in\Omega} F^{\lambda}_{a^\lambda}(\omega) p^{\lambda, i}(\omega) - c_i^{\lambda_i}(a_i^{\lambda}) \nonumber \\
		&\specialcell{\hspace{.4cm}\geq \sum_{\omega\in\Omega}F^{\lambda}_{(a_i^\prime, a^{(\lambda_i^\prime,\lambda_{-i})}_{-i})}(\omega) p^{(\lambda_i^\prime, \lambda_{-i}), i}(\omega) - c_i^{\lambda_i}(a_i^\prime) \hfill\forall i\in N,\lambda\in\Lambda,\lambda_i^\prime\in\Lambda_i, a_i^\prime\in A_i} \label{eq:bayoptdc_ic}\\
		&\specialcell{\hspace{.2cm}p^{\lambda, i}(\omega)\geq 0\hfill\forall i\in N, \lambda\in\Lambda,\omega\in\Omega}\label{eq:bayoptdc_ll}\\
		&\specialcell{\hspace{.2cm}a^\lambda \in \A\hfill\forall\lambda\in\Lambda}.
	\end{align}
\end{subequations}

\subsection{Single-dimensional Types}
As a particular problem instance, we will consider the case in which the agents' types are \emph{single-dimensional}, \emph{i.e.,} they only affect the costs. More in detail, with single-dimensional types, the cost functions are defined such that $c_i^{\lambda_i}(a_i) = \lambda_i c_i(a_i)$, while the outcome distribution is independent from the types of the agents, \emph{i.e.} it is such that $F_a^\lambda =F_a$ for all $a\in\A$ and for all $\lambda\in\Lambda$.

\subsection{(Virtual) Social Welfare}
Throughout our analysis of the Bayesian setting, we will be interested in providing approximation results for the social welfare and virtual social welfare. For any type profile $\lambda\in\Lambda$ and virtual cost function $\phi:\Reals^+\to\Reals^+$, we let $\vsw^\phi(\lambda)$ be the virtual social welfare of the instance $(N, \Omega, (A_i)_{i\in N}, (c_i^{\lambda_i})_{i\in N}, F^{\lambda}, r)$ defined by the type profile $\lambda$. The virtual social welfare of the Bayesian instance is defined as $\vsw^{\phi} \coloneqq \mathbb{E}_{\lambda\sim G}\vsw^{\phi}(\lambda)$. Similarly, we denote as $\sw(\lambda) = \vsw^1(\lambda)$ the social welfare of the instance defined by the types $\lambda$, and as $\sw \coloneqq \mathbb{E}_{\lambda\sim G}\sw(\lambda)$ the social welfare of the Bayesian instance.    

\section{Finding Approximately Optimal Randomized Contracts in the Bayesian Setting}\label{sec:boptcrc}

In this section we investigate the problem of finding an approximately optimal randomized contract for the Bayesian principal-multi-agent problem. For the sake of this objective, we focus our attention to cases in which the sets of types $\Lambda_i$ are discrete. Moreover, we will make no further assumption on the problem instances and consider the general case in which the types are \emph{multi-dimensional}.

First, notice that the result of \cref{le:lemmasup} for the non-Bayesian case trivially carries over to the Bayesian setting. Thus, also in this case, our objective is to find a randomized contract that guarantees an expected utility close to $\boptcrc$ up to a small additive approximation error.

Hence, similarly to what we did in \cref{sec:optcrc}, we introduce a linear program which relates to Program \ref{progr:bay-optrc}. In this case, together with the linearization of the quadratic variables $\mu^\lambda(a)\pi_a^{\lambda, i}(\omega)$ by means of $x_a^{\lambda, i}(\omega)$, and the introduction of an additional constraint $x_a^{\lambda, i}(\omega)\leq M\mu^\lambda(a)$ for each $i\in N$, $a\in \A$, $\omega\in\Omega$ and $\lambda\in\Lambda$, we introduce an additional auxiliary variable $z$ to replace the $\max$ in constraint \eqref{eq:bay-optrc_ic}. All these modifications will prove crucial in order to recover an approximately optimal randomized contract from a suitably parameterized version of such a linear program. 
Formally, for any $M>0$, and any tuple $\A^\prime = (\A^\prime(\lambda))_{\lambda\in\Lambda}$, where $\A^\prime(\lambda)\subseteq\A$ for each $\lambda\in\Lambda$, we denote as $\baylp(M, \A^\prime)$, the following linear program: 
\begin{subequations}\label{prog:baylprc}
	\begin{align}
		\max_{\mu, x, z}& \hspace{.2cm}\sum_{\lambda\in\Lambda} G(\lambda)\sum_{a\in\A}\sum_{\omega\in\Omega}F_a(\omega) \left(\mu^\lambda(a) r_\omega - \sum_{i\in N}x^{\lambda,i}_a(\omega)\right) \hspace{1cm}\textnormal{s.t.}\hspace{3.7cm}\,\nonumber \\
		& \specialcell{\sum_{a\in\A} x_a^{\lambda, i}(\omega)F_a^\lambda(\omega) - c_i^{\lambda_i}(a_i) \geq \sum_{a_i\in A_i} z^i(\lambda, \lambda_i^\prime, a_i)\hfill\forall i\in N,\,\forall \lambda\in\Lambda,\,\forall\lambda_i^\prime\in\Lambda_i}\label{eq:baylprc_ic}\\
		& z^i(\lambda, \lambda_i^\prime, a_i) \nonumber\\
		&\specialcell{\hspace{.4cm} \geq\hspace{-.3cm}\sum_{a_{-i}\in\A_{-i}} \hspace{-.3mm}\sum_{\omega\in\Omega} \hspace{-.4mm} x^{(\lambda_i^\prime, \lambda_{-i}), i}_{(a_i, a_{-i})}\hspace{-.5mm}(\omega) F^{\lambda}_{(a_i^\prime, a_i)}\hspace{-.5mm}(\omega) - c_i^{\lambda_i}\hspace{-.5mm}(a_i^\prime)\hfill\forall i\in N,\forall\lambda\in\Lambda,\forall\lambda_i^\prime\in\Lambda_i,\forall a_i, a_i^\prime\in A_i} \label{eq:baylprc_max}\\
		& \specialcell{0\le x_a^{\lambda, i}(\omega) \leq M\mu^\lambda(a) \hfill\forall i\in N,\forall\lambda\in\Lambda,\forall a\in\A,\forall\omega\in\Omega} \label{eq:baylprc_bound}\\
		& \specialcell{\mu^\lambda(a) = 0  \hfill\forall\lambda\in\Lambda,\forall a\in \A\setminus\A^\prime(\lambda)}\label{eq:baylprc_restr1} \\
		& \specialcell{x_a^{i,\lambda}(\omega) = 0  \hfill \forall i\in N,\forall\lambda\in\Lambda,\forall a\in\A\setminus\A^\prime(\lambda),\forall\omega\in\Omega}\label{eq:baylprc_restr2}\\
		& \specialcell{\mu^{\lambda} \in \Delta(\A)\hfill\forall\lambda\in\Lambda}\label{eq:baylprc_defmu}.
	\end{align}
\end{subequations}
We denote the optimal value of the above linear program as $\opt_{\baylp(M, \A^\prime)}$. Additionally, we will use $\baylp(\A^\prime)$ as a shorthand for $\baylp(+\infty, \A^\prime)$, and we will write $\baylp(M, \A) = \baylp(M, \A^\prime)$, when $\A^\prime(\lambda) = \A$ for all $\lambda\in\Lambda$. 

Our algorithm to find approximately optimal randomized contracts solves $\lprc(M, \A)$ for a suitably defined $M$, and then converts it to a feasible solution of Program~\ref{progr:bay-optrc}.
To do so, starting from an optimal $(\mu, x, z)$ for $\baylp(M, \A)$, we can recover a randomized contract $(\mu, \pi)$ such that 
\begin{equation}\label{eq:bayconversion}
	\pi^{\lambda, i}_a(\omega) = \begin{cases}
		\frac{x^{\lambda, i}_a(\omega)}{\mu^\lambda(a)} &\textnormal{if } \mu^\lambda(a) > 0 \\
		0 &\textnormal{otherwise} 
	\end{cases}\quad\quad\forall i\in N,\forall \lambda\in\Lambda,\forall a\in\A,\forall\omega\in\Omega.
\end{equation}
We remark that, thanks to the definition of the linear program, and in particular thanks to constraints \eqref{eq:baylprc_ic} and \eqref{eq:baylprc_max}, such a contract is guaranteed to be DSIC. Similarly, constraint \eqref{eq:baylprc_bound}, guarantees that the conversion procedure described in \cref{eq:bayconversion} can be carried out safely.

Following similar steps to the one described in \cref{sec:optcrc}, it is possible to show that such a $(\mu, \pi)$ guarantees to the principal an expected utility approximately equal to $\boptcrc$, thus yielding the following result. 

\begin{restatable}{theorem}{thbayoptcrc}\label{th:thbayoptcrc}
	For any $\eps > 0$, there exists an algorithm which finds an $\eps$-optimal randomized contract in any instance of the Bayesian principal-multi-agent problem in time polynomial in the instance size and $\log(1/\eps)$.
\end{restatable}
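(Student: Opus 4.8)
The plan is to mirror the three–step argument developed for the non-Bayesian case in \Cref{sec:optcrc}, now applied to $\baylp(M,\A)$, $\baylp(M,\BAindR)$, and $\baylp(\BAindR)$, where $\BAindR$ denotes the (per–type–profile) sets of action profiles inducible by some feasible Bayesian randomized contract. Concretely, I would first establish the analogue of \Cref{le:lemmarelax}: $\baylp(\BAindR)$ is a relaxation of Program~\ref{progr:bay-optrc}. Given a feasible $(\mu,\pi)$ for Program~\ref{progr:bay-optrc}, set $x_a^{\lambda,i}(\omega)=\mu^\lambda(a)\pi_a^{\lambda,i}(\omega)$ and set $z^i(\lambda,\lambda_i',a_i)$ to the value of the inner $\max$ in \eqref{eq:bay-optrc_ic}; the DSIC constraint \eqref{eq:bay-optrc_ic} then splits exactly into \eqref{eq:baylprc_ic} and \eqref{eq:baylprc_max} (the latter being the defining inequalities that force $z$ to dominate each deviation), the support condition holds because the support of each $\mu^\lambda$ lies in $\BAindR(\lambda)$ by definition, and the objective is unchanged. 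Hence $\opt_{\baylp(\BAindR)}\ge\boptcrc$.

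Second, I would prove the Bayesian analogue of \Cref{le:lemmatrans}: from any $(\mu,x,z)$ feasible for $\baylp(\BAindR)$ one can build $(\bar\mu,\bar x,\bar z)$ feasible for $\baylp(M,\BAindR)$ with objective loss at most $\eps$, for a suitable $M=\poly(2^{|I|},1/\eps,K)$. The construction is the same perturbation idea: mix $\mu$ with a small amount of a distribution that puts mass on every inducible profile (using the individually–rational action $a^\circ$ to keep feasibility), which makes every $\bar\mu^\lambda(a)$ with $a\in\BAindR(\lambda)$ bounded below, so that the cap $\bar x_a^{\lambda,i}(\omega)\le M\bar\mu^\lambda(a)$ in \eqref{eq:baylprc_bound} becomes non-binding once $M$ is large enough; one then rescales $x$ and $z$ accordingly and checks \eqref{eq:baylprc_ic}--\eqref{eq:baylprc_max} still hold. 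The extra auxiliary variable $z$ only needs to be perturbed linearly with $x$, so it adds nothing essential. Combining this with the trivial monotonicity $\opt_{\baylp(M,\A)}\ge\opt_{\baylp(M,\BAindR)}$ (since $\BAindR(\lambda)\subseteq\A$) and the first step yields $\opt_{\baylp(M,\A)}\gtrapprox\boptcrc$.

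Third, I would close the loop by showing the conversion in \eqref{eq:bayconversion} turns any $(\mu,x,z)$ feasible for $\baylp(M,\A)$ into a feasible $(\mu,\pi)$ for Program~\ref{progr:bay-optrc} with the same objective value: constraint \eqref{eq:baylprc_bound} guarantees $x_a^{\lambda,i}(\omega)=0$ whenever $\mu^\lambda(a)=0$, so $x_a^{\lambda,i}(\omega)=\mu^\lambda(a)\pi_a^{\lambda,i}(\omega)$ identically; then \eqref{eq:baylprc_ic} together with \eqref{eq:baylprc_max} reassembles into the DSIC constraint \eqref{eq:bay-optrc_ic} (the $z$ variable upper-bounds exactly the $\max$ over $a_i'$). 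Since $\opt_{\baylp(M,\A)}$ is computable in time polynomial in the instance size and $M$, and $\log(1/\eps)$ enters only through $\log M$, the overall running time is polynomial in the instance size and $\log(1/\eps)$, giving the claimed $\eps$-optimal algorithm.

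The main obstacle I expect is the second step: verifying that the perturbation preserves DSIC feasibility is more delicate here than in the non-Bayesian case, because the incentive constraints now couple different type reports $\lambda_i'$ and because the auxiliary $z$ variables must be re-chosen consistently after rescaling $x$; one has to be careful that mixing in the $a^\circ$-based distribution does not violate \eqref{eq:baylprc_max} for some deviation, and that the polynomial bound on $M$ genuinely depends only on $2^{|I|}$, $1/\eps$, and $K$ and not, say, on $1/G(\lambda)$. Handling the $z$-variables and the per-type support sets $\BAindR(\lambda)$ carefully is where the bulk of the (otherwise routine) work lies.
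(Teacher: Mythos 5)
Your overall architecture is exactly the paper's: the same chain $\opt_{\baylp(M,\A)}\ge\opt_{\baylp(M,\BAindR)}\gtrapprox\opt_{\baylp(\BAindR)}\ge\boptcrc$, the same relaxation lemma obtained by setting $x_a^{\lambda,i}(\omega)=\mu^\lambda(a)\pi_a^{\lambda,i}(\omega)$ and $z$ equal to the inner maximum, and the same conversion \eqref{eq:bayconversion} back to a DSIC contract using constraint \eqref{eq:baylprc_bound}. Steps one and three therefore match the paper's Lemmas and final argument and are fine.

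The gap is in your second step, which is where the actual work lies. Mixing $\mu$ with ``a distribution that puts mass on every inducible profile'' is not by itself a legal move in $\baylp$: constraints \eqref{eq:baylprc_ic}--\eqref{eq:baylprc_max} couple $\mu$ with $x$ and $z$, so whatever you mix in must come with companion payment variables that keep the incentive constraints satisfied. The action $a^\circ$ only makes the all-$a^\circ$ profile inducible at zero pay; it gives you no way to place mass on an arbitrary $a\in\BAindR(\lambda)$ without also supplying the payments that incentivize it. Moreover, to obtain $\log M=\poly(|I|)+\log(1/\eps)$, and hence the claimed running time, you need a \emph{quantitative} lower bound on how much mass can be placed on each inducible profile, not merely positivity. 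The paper supplies both ingredients at once (\Cref{le:lemmatransbay}): for every $\bar\lambda\in\Lambda$ and every $\bar a\in\BAindR(\bar\lambda)$ it solves an auxiliary LP $\lpaux(\bar a,\bar\lambda)$ maximizing $\mu^{\bar\lambda}(\bar a)$ subject to the same incentive constraints; inducibility makes its value positive, and standard LP bit-complexity bounds yield a solution with $\mu^{(\bar\lambda,\bar a),\bar\lambda}(\bar a)\ge 1/\tau(|I|)$ and $\lVert x^{(\bar\lambda,\bar a)}\rVert_\infty\le\tau(|I|)$ with $\tau(x)\in 2^{\poly(x)}$. The perturbed point is the convex combination, with weight $1-\alpha=\Theta\bigl(\eps/(|\A|\,n\,\tau(|I|))\bigr)$, of your $(\mu,x,z)$ with the uniform average of these auxiliary solutions; feasibility is automatic because the constraints are linear (no rescaling of $x$ or $z$ is needed), the ratio $\bar x/\bar\mu$ is bounded by an explicit $M\in\poly(2^{|I|},1/\eps,K)$, and the objective loss is at most $\eps$. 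Without this (or an equivalent) device your step two does not go through and the bound on $M$ is unsupported. A final small point: the transformation must be applied to an optimal vertex solution of $\baylp(\BAindR)$ so that $K=\tau(|I|)$ is itself bounded, which your sketch leaves implicit.
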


Due to space constraints, we defer to \cref{app:app_bayoptcrc} a more thorough description of the technical details behind our algorithm for finding an approximately optimal randomized contract in Bayesian principal-multi-agent problems.

\section{Approximation results for the (Virtual) Social Welfare 
	}\label{sec:approx}
In this section, we investigate whether the approximation result in Section~\ref{sec:combinatorial} extends to Bayesian settings. 
Our first result is negative, as we show that no deterministic contract can obtain any approximation of the virtual social welfare for any choice of the virtual cost function.


\begin{restatable}{proposition}{propbaysw}\label{prop:propbaysw}
	For any $\alpha \in (0,1)$, there exists an instance of the Bayesian principal-multi-agent problem in which $\frac{\vsw^{1/\alpha}}{\boptdc}= \infty$. 
\end{restatable}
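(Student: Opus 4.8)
The plan is to construct an explicit Bayesian principal-multi-agent instance where limited liability (the constraint $p^{\lambda,i}(\omega)\ge 0$ in \eqref{eq:bayoptdc_ll}) prevents any deterministic contract from extracting positive utility, while the virtual social welfare $\vsw^{1/\alpha}$ stays bounded away from zero. Since the ratio is $\infty$, it suffices to force $\boptdc = 0$ (or, more carefully, to force the supremum in \eqref{progr:bay-optdc} to be $0$, achieved by the trivial contract that pays nothing and recommends the zero-cost action $a^\circ$) while exhibiting a single action profile whose expected reward exceeds its $(1/\alpha)$-scaled cost, so that $\vsw^{1/\alpha} > 0$.

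I would take the simplest possible setting: one agent ($n=1$), two types $\Lambda_1 = \{\lambda_L, \lambda_H\}$ drawn with some probability under $G$, two actions (the null action $a^\circ$ with cost $0$ and a productive action $a^\star$), and two outcomes (a ``success'' $\omega_1$ with reward $r_{\omega_1}=1$ and a ``failure'' $\omega_0$ with reward $0$). The key to killing deterministic contracts is a tension between the type-reporting (DSIC) constraint \eqref{eq:bayoptdc_ic} and limited liability: I would arrange the costs so that $c^{\lambda_L}(a^\star)$ is tiny but $c^{\lambda_H}(a^\star)$ is large, and the outcome distributions $F^{\lambda_L}, F^{\lambda_H}$ so that the low-type is ``productive'' (high success probability under $a^\star$) but the high-type is essentially unproductive. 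Then, to get the high type to truthfully report and play $a^\circ$ while the low type truthfully reports and plays $a^\star$, the menu must make the low-type's payment scheme attractive enough that the high type does not want to claim to be low — but since payments cannot be negative, the only lever the principal has to deter that misreport is to shrink the low type's payments, which in turn violates the low type's own IC/IR for $a^\star$. Pushing the parameters to an extreme forces the principal to offer the empty contract, giving $\boptdc = 0$. Meanwhile, because $c^{\lambda_L}(a^\star)$ is chosen small enough (e.g., of order $\alpha^2$ relative to the success probability gap), the instance defined by type $\lambda_L$ satisfies $\sum_\omega F^{\lambda_L}_{a^\star}(\omega) r_\omega - (1/\alpha) c^{\lambda_L}(a^\star) > 0$, so $\vsw^{1/\alpha}(\lambda_L) > 0$ and hence $\vsw^{1/\alpha} = \mathbb{E}_{\lambda\sim G}\vsw^{1/\alpha}(\lambda) > 0$.

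The main obstacle — and the part requiring real care — is verifying that $\boptdc = 0$ rather than merely small: I need to show that \emph{every} feasible deterministic menu (not just linear or symmetric ones) yields principal utility $\le 0$. This is a case analysis over which action the menu recommends to each reported type. If the contract recommends $a^\circ$ to both types, the principal pays only for the recommended action's outcome distribution but gets no surplus beyond what LL allows, so utility is at most the (zero) reward of $a^\circ$ minus nonnegative payments. If it recommends $a^\star$ to some type, I exploit the DSIC constraint against the other type's report combined with $p\ge 0$ to derive that the required payments on the productive outcome must be so large that the principal's expected net is nonpositive — essentially the standard ``information rent eats the entire surplus'' argument, but amplified by the adversarial choice of the high type's cost and distribution. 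I would set up the two or three relevant inequalities from \eqref{eq:bayoptdc_ic} and \eqref{eq:bayoptdc_ll}, add them, and read off that the objective is $\le 0$; then I would double-check that the extremal parameter choices (letting the high-type cost $\to\infty$ or its productivity $\to 0$) make the argument go through for every fixed $\alpha\in(0,1)$, with the low-type parameters chosen afterward as a function of $\alpha$ to keep $\vsw^{1/\alpha}>0$.
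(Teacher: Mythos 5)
There is a fatal gap: your construction uses a single agent ($n=1$), and for a single agent the statement you are trying to prove is false, so no choice of parameters can make your case analysis close. Indeed, consider the type-independent linear contract $p(\omega)=\alpha r_\omega$, which satisfies limited liability and is trivially DSIC (the payment does not depend on the report, and with $n=1$ a misreport changes nothing else), with the recommended action $a^\lambda$ chosen as a best response for each type. For any type $\lambda$ the agent's best response $a^*$ satisfies $\alpha\sum_\omega F^\lambda_{a^*}(\omega)r_\omega - c^\lambda(a^*) \ge \alpha\sum_\omega F^\lambda_{a}(\omega)r_\omega - c^\lambda(a)$ for all $a$, hence $\sum_\omega F^\lambda_{a^*}(\omega)r_\omega \ge \max_a\bigl(\sum_\omega F^\lambda_a(\omega)r_\omega - c^\lambda(a)/\alpha\bigr) = \vsw^{1/\alpha}(\lambda)$, and the principal's utility on that type is $(1-\alpha)\sum_\omega F^\lambda_{a^*}(\omega)r_\omega \ge (1-\alpha)\vsw^{1/\alpha}(\lambda)$. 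Taking expectations over $G$ gives $\boptdc \ge (1-\alpha)\vsw^{1/\alpha}$ in every single-agent instance, so whenever you arrange $\vsw^{1/\alpha}>0$ you automatically get $\boptdc>0$ and a finite ratio; in particular your intended conclusion ``every feasible deterministic menu yields utility $\le 0$'' is refuted by this very contract, and the ``information rent eats the entire surplus'' step cannot be made to work. This is exactly why the proposition is interesting: it isolates a genuinely multi-agent Bayesian phenomenon that has no single-agent analogue for virtual factors $1/\alpha$ with $\alpha\in(0,1)$.

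The paper's construction accordingly uses two agents, two types per agent, and an outcome technology with cross-agent externalities ($F_{(\aone,\aone)}(\omega_1)=1$, $F_{(\aone,\atwo)}(\omega_1)=F_{(\atwo,\aone)}(\omega_1)=\alpha/3$, $F_{(\atwo,\atwo)}(\omega_1)=0$), with costs scaled so that only the (low,low) type profile has positive welfare while all mixed and (high,high) profiles have welfare exactly zero. Incentivizing the productive profile at (low,low) forces a payment of order $\alpha/(3-\alpha)$ per agent on the good outcome, and the type-reporting (DSIC) constraint then forces the principal to concede a strictly positive rent to a high-type agent in the mixed type profiles, where no surplus exists to cover it; summing over the type distribution, these strictly negative terms cancel the positive (low,low) term and give $\boptdc\le 0$, hence $\boptdc=0$, while $\vsw^{1/\alpha}\ge G(\lambdaone,\lambdaone)/3>0$. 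If you want to salvage your write-up, you must move to at least two agents and make the rent be paid in type profiles whose welfare is zero; the single-agent tension between DSIC and limited liability alone can shrink the principal's utility relative to welfare, but it can never drive it to zero while $\vsw^{1/\alpha}$ stays positive.
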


The result of \cref{prop:propbaysw} highlights the additional challenges in designing contracts within Bayesian settings. As a direct implication of this result, we obtain that, unlike in non-Bayesian settings, here it is not possible to achieve good approximations of the (virtual) social welfare by means of linear contracts.
 
In the following, we circumvent this negative result, showing that, if we drop the limited liability assumption, then it is possible to obtain deterministic contracts that achieve a good approximation factor of the $n$ virtual social welfare. To this extent, we denote as $\nollboptdc$ the optimal value of Program \ref{progr:bay-optdc}, in which we drop the LL constraint \eqref{eq:bayoptdc_ll} and add the following IR constraint (notice that, without LL, IR is no longer implied by the IC constraint \eqref{eq:bayoptdc_ic} and thus we need to add an explicit IR constraint):
\begin{equation}\label{eq:bayoptdc_ir}
	\sum_{\omega\in\Omega}F_{a^\lambda}(\omega)p^{\lambda, i}(\omega) \ge c_i^{\lambda_i}(a_i^\lambda)\quad\quad\forall \lambda\in\Lambda,\,\forall i\in N.
\end{equation}

We will show that, for any $\delta>0$, $\nollboptdc$ guarantees an instance-independent constant approximation factor of the $n(1+\delta)$-virtual social welfare $\vsw^{n(1+\delta)}$. In order to obtain such a result, for any $\delta>0$ we consider the following deterministic and affine contract $(\mathfrak{a}_\delta, p_\delta)$, which we denote as $\detcontr$:
\begin{equation}\label{eq:deterministicnoll}
	\begin{split}
		&a^\lambda_\delta \in arg\max_{a\in\A}\left\{\frac{1}{n(1+\delta)} \sum_{\omega\in\Omega} F_{a}(\omega)r_\omega - \sum_{i\in N} c_i^{\lambda_i}(a_i)\right\} \hspace{4.5cm}\forall \lambda\in\Lambda \\
		&\specialcell{p_\delta^{\lambda, i}(\omega) = \frac{1}{n(1+\delta)}r_\omega - \sum_{j\in N\setminus\{i\}} c_j^{\lambda_j}(a^\lambda_j)\hfill\forall \lambda\in\Lambda,\forall i\in N,\forall\omega\in\Omega}.
	\end{split}
\end{equation}
Intuitively, the above contract incentivizes, for each profile of types $\lambda\in\Lambda$, the action profile which maximizes the $n(1+\delta)$-virtual social welfare $\vsw^{n(1+\delta)}(\lambda)$. Moreover, we can highlight two components defining the affine payment functions: a linear contract, which gives to each agent a suitably defined fraction of the principal reward, and a second component which, for each set of types $\lambda$, penalizes each agent with an amount that is equal to the sum of the other players' costs for the incentivized actions. We remark that a payment function defined in this way may not satisfy the limited liability property. As a first step, we show how the peculiar definition of $\detcontr$ guarantees that such a deterministic contract satisfies the DSIC and IR constraints. Intuitively, this can be shown considering that the action recommendations are designed so to maximize the $n(1+\delta)$-virtual social welfare $\vsw^{n(1+\delta)}(\lambda)$. In a similar way, the payment function is designed so that each agent $i$ ‘‘pays'' the cost of the other players' actions, thus guaranteeing to $i$ an expected utility that corresponds to a fraction of the virtual social welfare. This guarantees that each agent is incentivized to act truthfully, since this would maximize the virtual social welfare (and hence also their utility). The following lemma formalizes this argument.

\begin{restatable}{lemma}{lemmadsicbay}\label{le:lemmadsicbay}
	For any $\delta>0$, $\detcontr$ is DSIC and IR, while it may not satisfy LL.
\end{restatable}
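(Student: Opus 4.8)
The plan is to verify the three properties of $\detcontr$ directly from its definition in \eqref{eq:deterministicnoll}, exploiting the key observation that, under this contract, each agent's expected utility equals a fixed fraction of the $n(1+\delta)$-virtual social welfare of the \emph{true} instance whenever the others report truthfully. Concretely, fix a type profile $\lambda\in\Lambda$ and an agent $i$. First I would compute, for the truthful report $\lambda_i'=\lambda_i$ and truthful play $a_i'=a_i^\lambda$, the quantity $\sum_{\omega}F_{a^\lambda}(\omega)p_\delta^{\lambda,i}(\omega)-c_i^{\lambda_i}(a_i^\lambda)$. Substituting the payment from \eqref{eq:deterministicnoll} gives $\frac{1}{n(1+\delta)}\sum_{\omega}F_{a^\lambda}(\omega)r_\omega-\sum_{j\ne i}c_j^{\lambda_j}(a_j^\lambda)-c_i^{\lambda_i}(a_i^\lambda)=\frac{1}{n(1+\delta)}\sum_{\omega}F_{a^\lambda}(\omega)r_\omega-\sum_{j\in N}c_j^{\lambda_j}(a_j^\lambda)$, which is exactly the objective maximized in the definition of $a^\lambda_\delta$, i.e.\ it equals $\vsw^{n(1+\delta)}(\lambda)/1$ scaled appropriately — more precisely it equals $\max_{a\in\A}\{\frac{1}{n(1+\delta)}\sum_\omega F_a(\omega)r_\omega-\sum_{j}c_j^{\lambda_j}(a_j)\}$.

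Next I would handle a joint deviation in which agent $i$ reports $\lambda_i'$ (possibly $\neq\lambda_i$) and then plays $a_i'$ (possibly differing from the recommendation $a_i^{(\lambda_i',\lambda_{-i})}$). The principal then uses the payment scheme $p_\delta^{(\lambda_i',\lambda_{-i}),i}$ and the other agents are recommended $a_{-i}^{(\lambda_i',\lambda_{-i})}$; the induced outcome distribution is $F_{(a_i', a_{-i}^{(\lambda_i',\lambda_{-i})})}$ but evaluated against the \emph{true} cost $c_i^{\lambda_i}$. Writing out agent $i$'s deviation utility and substituting the payment formula, the term $-\sum_{j\ne i}c_j^{\lambda_j'}(a_j^{(\lambda_i',\lambda_{-i})})$ combines with $-c_i^{\lambda_i}(a_i')$; the crux is that this deviation utility is at most $\max_{a\in\A}\{\frac{1}{n(1+\delta)}\sum_\omega F_a(\omega)r_\omega - c_i^{\lambda_i}(a_i)-\sum_{j\ne i}c_j^{\lambda_j}(a_j)\}$ — here I must be slightly careful, since the costs of the other agents appear with their \emph{reported} types $\lambda_j=\lambda_j'$ for $j\ne i$ (which equal the true ones, as only $i$ deviates), so the bound is clean: the deviation utility is bounded by the same maximization that truthful behavior attains. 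Hence no profitable deviation exists, giving DSIC (Constraint~\eqref{eq:bay-optrc_ic} specialized to deterministic contracts, i.e.\ \eqref{eq:bayoptdc_ic}).

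For IR (Constraint~\eqref{eq:bayoptdc_ir}), I would note that the truthful expected utility computed above equals $\max_{a\in\A}\{\frac{1}{n(1+\delta)}\sum_\omega F_a(\omega)r_\omega-\sum_j c_j^{\lambda_j}(a_j)\}$, and since the action $a^\circ$ with zero cost for every agent is available (by the standing assumption $c_i(a^\circ)=0$, which in the single-dimensional case gives $c_i^{\lambda_i}(a^\circ)=\lambda_i\cdot 0=0$, and more generally we may invoke the zero-cost action assumption), choosing $a=(a^\circ,\dots,a^\circ)$ shows this max is at least $\frac{1}{n(1+\delta)}\sum_\omega F_{a^\circ}(\omega)r_\omega\ge 0$; since $r_\omega\ge0$, the expected payment to $i$ is at least $c_i^{\lambda_i}(a_i^\lambda)$, which is \eqref{eq:bayoptdc_ir}. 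Finally, that LL may fail is immediate: the payment $p_\delta^{\lambda,i}(\omega)=\frac{1}{n(1+\delta)}r_\omega-\sum_{j\ne i}c_j^{\lambda_j}(a_j^\lambda)$ can be negative when the rewards are small relative to the other agents' costs, so one only needs to exhibit (or point to) such a parameter choice.

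The main obstacle I anticipate is bookkeeping in the double-deviation (report-then-act) argument: one must carefully track which type index governs the recommendation to the other agents ($\lambda_i'$ in the $i$-th slot), which governs their costs inside the payment formula, and which governs agent $i$'s true cost, and then argue that the telescoping of the $-\sum_{j\ne i}c_j(\cdot)$ terms with agent $i$'s own cost reconstitutes exactly the $n(1+\delta)$-virtual welfare objective, so that the deviation cannot beat the recommended action. Once the algebra is organized so that both sides are expressed as $\frac{1}{n(1+\delta)}\sum_\omega F_{(\cdot)}(\omega)r_\omega - \sum_{j\in N}c_j^{\lambda_j}(\cdot_j)$ evaluated at different action profiles, the inequality is just the optimality of $a^{(\lambda_i',\lambda_{-i})}_\delta$ (and then of $a^\lambda_\delta$) in its defining $\arg\max$.
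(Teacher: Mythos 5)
Your proposal is correct and takes essentially the same route as the paper's proof: express both the truthful utility and any report-then-act deviation utility as the true-type objective $\tfrac{1}{n(1+\delta)}\sum_{\omega}F_a(\omega)r_\omega-\sum_{j\in N}c_j^{\lambda_j}(a_j)$ evaluated at different action profiles, so DSIC follows from the optimality of $a_\delta^\lambda$, IR from the nonnegativity of that maximum (via the zero-cost action profile), and the failure of LL from the explicit payment formula. One minor remark: in the final inequality only the optimality of $a_\delta^\lambda$ at the true type profile is needed; the optimality of $a_\delta^{(\lambda_i',\lambda_{-i})}$ that you mention at the end plays no role, since the deviation profile is simply some element of $\A$.
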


Then, it is possible to show that, for any profile of types $\lambda\in\Lambda$, the deterministic contract $\detcontr$ guarantees an $\delta/(1+\delta)$-approximation of the $n(1+\delta)$-virtual social welfare. 

\begin{restatable}{lemma}{lemmaapproxvswbay}\label{le:lemmaapproxvswbay}
	For any $\delta>0$ and $\lambda\in\Lambda$, $\detcontr$ guarantees that
	\[
		\sum_{\omega\in\Omega} F_{a_\delta^\lambda}(\omega)\left(r_\omega - \sum_{i\in N}p_\delta^{\lambda, i}(\omega)\right) \ge \frac{\delta}{1+\delta} \vsw^{n(1+\delta)}(\lambda). 
	\]
\end{restatable}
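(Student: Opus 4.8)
The plan is to compute directly the principal's expected utility under $\detcontr$ for a fixed type profile $\lambda$ and show it equals exactly $\frac{\delta}{1+\delta}\vsw^{n(1+\delta)}(\lambda)$, which is even stronger than the claimed inequality. First I would substitute the explicit payment functions $p_\delta^{\lambda,i}(\omega) = \frac{1}{n(1+\delta)}r_\omega - \sum_{j\neq i}c_j^{\lambda_j}(a_j^\lambda)$ into the principal's utility expression $\sum_{\omega}F_{a_\delta^\lambda}(\omega)(r_\omega - \sum_i p_\delta^{\lambda,i}(\omega))$. Summing the payments over $i\in N$ gives $\sum_i p_\delta^{\lambda,i}(\omega) = \frac{n}{n(1+\delta)}r_\omega - \sum_i\sum_{j\neq i}c_j^{\lambda_j}(a_j^\lambda) = \frac{1}{1+\delta}r_\omega - (n-1)\sum_{j\in N}c_j^{\lambda_j}(a_j^\lambda)$, since each cost term $c_j^{\lambda_j}(a_j^\lambda)$ appears exactly $n-1$ times in the double sum.

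Next I would plug this back in: the principal's utility becomes $\sum_\omega F_{a_\delta^\lambda}(\omega)\left(r_\omega - \frac{1}{1+\delta}r_\omega + (n-1)\sum_{j}c_j^{\lambda_j}(a_j^\lambda)\right) = \frac{\delta}{1+\delta}\sum_\omega F_{a_\delta^\lambda}(\omega)r_\omega + (n-1)\sum_j c_j^{\lambda_j}(a_j^\lambda)$, using that $1 - \frac{1}{1+\delta} = \frac{\delta}{1+\delta}$ and that $\sum_\omega F_{a_\delta^\lambda}(\omega) = 1$. The goal is then to show this is at least $\frac{\delta}{1+\delta}\vsw^{n(1+\delta)}(\lambda) = \frac{\delta}{1+\delta}\left(\sum_\omega F_{a_\delta^\lambda}(\omega)r_\omega - n(1+\delta)\sum_j c_j^{\lambda_j}(a_j^\lambda)\right)$, where I use the fact that $a_\delta^\lambda$ is by definition the maximizer of $\frac{1}{n(1+\delta)}\sum_\omega F_a(\omega)r_\omega - \sum_i c_i^{\lambda_i}(a_i)$, which (after multiplying by $n(1+\delta)$) is exactly the $n(1+\delta)$-virtual welfare, so $a_\delta^\lambda$ also maximizes $\vsw^{n(1+\delta)}(\lambda)$.

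The remaining step is purely arithmetic: I need $\frac{\delta}{1+\delta}\sum_\omega F_{a_\delta^\lambda}(\omega)r_\omega + (n-1)\sum_j c_j^{\lambda_j}(a_j^\lambda) \ge \frac{\delta}{1+\delta}\sum_\omega F_{a_\delta^\lambda}(\omega)r_\omega - \frac{\delta n(1+\delta)}{1+\delta}\sum_j c_j^{\lambda_j}(a_j^\lambda)$, i.e. $(n-1)\sum_j c_j^{\lambda_j}(a_j^\lambda) \ge -\delta n\sum_j c_j^{\lambda_j}(a_j^\lambda)$, which holds trivially since both $n-1 \ge 0$ and $\delta n \ge 0$ and all costs are nonnegative; in fact the left side is already nonnegative and the right side is nonpositive. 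I do not anticipate a genuine obstacle here — the lemma is essentially a bookkeeping verification. The one point requiring a little care is making sure the identification ``$a_\delta^\lambda$ maximizes $\vsw^{n(1+\delta)}(\lambda)$'' is spelled out cleanly, since $\vsw^{n(1+\delta)}(\lambda) = \max_a [\sum_\omega F_a(\omega)r_\omega - n(1+\delta)\sum_i c_i^{\lambda_i}(a_i)]$ and the argmax defining $a_\delta^\lambda$ uses the same objective scaled by $\frac{1}{n(1+\delta)}$, so the two argmax sets coincide and $\vsw^{n(1+\delta)}(\lambda) = \sum_\omega F_{a_\delta^\lambda}(\omega)r_\omega - n(1+\delta)\sum_i c_i^{\lambda_i}(a_i^\lambda)$.
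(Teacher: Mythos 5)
Your proof is correct and follows essentially the same route as the paper: substitute the affine payments (each cost appearing $n-1$ times in the double sum), use nonnegativity of the costs, and invoke that $a_\delta^\lambda$ maximizes the $n(1+\delta)$-virtual welfare — the paper merely drops the nonnegative cost term first and then uses $\sum_{\omega\in\Omega} F_{a_\delta^\lambda}(\omega)r_\omega \ge \vsw^{n(1+\delta)}(\lambda)$, which is an equivalent rearrangement of your final step. The only slip is your opening claim that the utility \emph{equals} $\frac{\delta}{1+\delta}\vsw^{n(1+\delta)}(\lambda)$: it equals $\frac{\delta}{1+\delta}\sum_{\omega\in\Omega}F_{a_\delta^\lambda}(\omega)r_\omega + (n-1)\sum_{j\in N}c_j^{\lambda_j}(a_{\delta,j}^\lambda)$, which is in general strictly larger, but the body of your argument proves exactly the required inequality.
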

Thus, we can use \cref{le:lemmaapproxvswbay} to prove that $\detcontr$ does indeed provide a good approximation of the $n(1 + \delta)$ virtual social welfare.

\begin{theorem}\label{th:thapproxvswbay}
	For any $\delta>0$, $\detcontr$ guarantees that
	\[
	\mathbb{E}_{\lambda\sim G}\left[\sum_{\omega\in\Omega} F_{a_\delta^\lambda}(\omega)\left(r_\omega - \sum_{i\in N}p_\delta^{\lambda, i}(\omega)\right)\right] \ge \frac{\delta}{1+\delta}\vsw^{n(1+\delta)}.
	\]
\end{theorem}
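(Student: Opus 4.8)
\textbf{Proof plan for \cref{th:thapproxvswbay}.}

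The plan is to derive the statement as an almost immediate consequence of \cref{le:lemmaapproxvswbay} by taking expectations over the type profile. First I would recall that \cref{le:lemmaapproxvswbay} already establishes, pointwise in $\lambda\in\Lambda$, the inequality
\[
\sum_{\omega\in\Omega} F_{a_\delta^\lambda}(\omega)\left(r_\omega - \sum_{i\in N}p_\delta^{\lambda, i}(\omega)\right) \ge \frac{\delta}{1+\delta}\,\vsw^{n(1+\delta)}(\lambda).
\]
Since this holds for every realization $\lambda$, and since $\detcontr$ is a valid contract (it is DSIC and IR by \cref{le:lemmadsicbay}, so the principal's expected utility is indeed the expectation of the per-type quantity on the left), I would take $\mathbb{E}_{\lambda\sim G}$ on both sides and invoke monotonicity and linearity of expectation:
\[
\mathbb{E}_{\lambda\sim G}\left[\sum_{\omega\in\Omega} F_{a_\delta^\lambda}(\omega)\left(r_\omega - \sum_{i\in N}p_\delta^{\lambda, i}(\omega)\right)\right] \ge \frac{\delta}{1+\delta}\,\mathbb{E}_{\lambda\sim G}\left[\vsw^{n(1+\delta)}(\lambda)\right].
\]
Finally I would identify the right-hand side with $\frac{\delta}{1+\delta}\vsw^{n(1+\delta)}$ using the definition of the Bayesian virtual social welfare, namely $\vsw^{\phi} \coloneqq \mathbb{E}_{\lambda\sim G}\vsw^{\phi}(\lambda)$ with $\phi(x) = n(1+\delta)x$.

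The one genuine verification point — and arguably the only non-bookkeeping step — is confirming that the left-hand side of the theorem really equals the principal's expected utility under $\detcontr$ \emph{given that the agents behave as the contract intends}, i.e.\ that truthful type reporting and obedient action play is the behavior we may assume. This is exactly what \cref{le:lemmadsicbay} buys us: DSIC guarantees truthful reporting is a dominant strategy and that following the action recommendation is optimal, and IR guarantees participation, so the induced outcome distribution in state $\lambda$ is $F_{a_\delta^\lambda}$ and the realized transfers are $p_\delta^{\lambda,i}$. Hence there is no subtlety in passing from the per-type guarantee to the ex-ante guarantee; no obstacle of substance remains, and the proof is essentially a two-line application of \cref{le:lemmadsicbay} and \cref{le:lemmaapproxvswbay} followed by taking expectations.
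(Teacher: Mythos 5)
Your proposal is correct and matches the paper's proof exactly: the paper also takes the pointwise guarantee of \cref{le:lemmaapproxvswbay}, applies $\mathbb{E}_{\lambda\sim G}$ to both sides, and identifies the resulting expectation with $\vsw^{n(1+\delta)}$ via its definition. Your extra remark that \cref{le:lemmadsicbay} justifies interpreting the left-hand side as the principal's expected utility under truthful, obedient behavior is a sensible (if implicit in the paper) bookkeeping point, not a divergence in method.
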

\begin{proof}
	The result follows trivially from \cref{le:lemmaapproxvswbay} and the definition of virtual social welfare:
	\[
		\mathbb{E}_{\lambda\sim G}\left[\sum_{\omega\in\Omega} F_{a_\delta^\lambda}(\omega)\left(r_\omega - \sum_{i\in N}p_\delta^{\lambda, i}(\omega)\right)\right] \ge \frac{\delta}{1+\delta}\mathbb{E}_{\lambda\sim G}\left[\vsw^{n(1+\delta)}(\lambda)\right] = \frac{\delta}{1+\delta}\vsw^{n(1+\delta)}.
	\]
\end{proof}

Finally, given the result of \cref{th:thapproxvswbay} and the fact that the deterministic contract $\detcontr$ satisfies both DSIC and IR constraints (\cref{le:lemmadsicbay}), we can obtain the following corollary giving an approximation result for \nollboptdc. 

\begin{corollary}\label{cor:corapproxvswbay}
	For any $\delta>0$, it holds that
	\[
	\nollboptdc \ge \frac{\delta}{1+\delta}\vsw^{n(1+\delta)}.
	\]
\end{corollary}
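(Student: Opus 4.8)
The plan is to exhibit $\detcontr$ as a feasible point of the optimization problem that defines $\nollboptdc$ and then lower bound its objective value directly using the two preceding lemmas, so that no new estimates are needed.

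First I would recall that $\nollboptdc$ is the optimal value of Program~\ref{progr:bay-optdc} after deleting the limited-liability constraint~\eqref{eq:bayoptdc_ll} and appending the individual-rationality constraint~\eqref{eq:bayoptdc_ir}. Consequently, its feasible region consists precisely of the deterministic contracts $(\mathfrak{a}, p)$ that satisfy the DSIC constraint~\eqref{eq:bayoptdc_ic}, the IR constraint~\eqref{eq:bayoptdc_ir}, and $a^\lambda\in\A$ for all $\lambda\in\Lambda$, with no sign restriction imposed on the payments $p^{\lambda,i}(\omega)$. Next I would invoke \cref{le:lemmadsicbay}, which states that $\detcontr=(\mathfrak{a}_\delta, p_\delta)$ is DSIC and IR, while it may only fail limited liability. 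Since LL has been dropped in the definition of $\nollboptdc$, this is exactly the assertion that $\detcontr$ belongs to the feasible region described above.

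Finally, the objective value attained by $\detcontr$ in this program is $\mathbb{E}_{\lambda\sim G}\big[\sum_{\omega\in\Omega} F_{a_\delta^\lambda}(\omega)\big(r_\omega - \sum_{i\in N} p_\delta^{\lambda,i}(\omega)\big)\big]$, which by \cref{th:thapproxvswbay} is at least $\tfrac{\delta}{1+\delta}\vsw^{n(1+\delta)}$. Because $\nollboptdc$ is the supremum of the objective over all feasible contracts, and $\detcontr$ is one such feasible contract achieving at least this value, we obtain $\nollboptdc \ge \tfrac{\delta}{1+\delta}\vsw^{n(1+\delta)}$, as claimed.

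There is no substantive obstacle here: all the technical content has already been discharged by \cref{le:lemmadsicbay} (feasibility) and \cref{th:thapproxvswbay} (the value bound). The only point that warrants a line of care is the bookkeeping around which constraints enter the definition of $\nollboptdc$ — namely, checking that the word ``DSIC'' in \cref{le:lemmadsicbay} refers to exactly constraint~\eqref{eq:bayoptdc_ic}, that the ``IR'' there is exactly the added constraint~\eqref{eq:bayoptdc_ir}, and that dropping LL removes any need to verify nonnegativity of $p_\delta^{\lambda,i}(\omega)$ — all of which are immediate from the definitions.
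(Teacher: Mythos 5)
Your proposal is correct and matches the paper's own argument: the corollary is obtained exactly by noting that $\detcontr$ is feasible for the no-LL program thanks to \cref{le:lemmadsicbay} (DSIC and IR) and achieves value at least $\tfrac{\delta}{1+\delta}\vsw^{n(1+\delta)}$ by \cref{th:thapproxvswbay}, so the optimum can only be larger. No gaps; the bookkeeping you flag about which constraints define $\nollboptdc$ is indeed the only point to check, and you check it correctly.
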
	

To strengthen this statement, we complement the result of \cref{cor:corapproxvswbay}, showing that the lower bound for $\nollboptdc$ is tight in its dependency from $n$.
	
\begin{restatable}{proposition}{proplowerboundbay}\label{prop:proplowerboundbay}
	For any $\alpha>0$ and $\eps \in (0,1)$, there exists an instance of the Bayesian principal-multi-agent problem where 
	\[
		\frac{\nollboptdc}{\vsw^{n^{1-\alpha}}} \le \eps.
	\]
\end{restatable}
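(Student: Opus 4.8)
The plan is to construct an explicit family of Bayesian principal-multi-agent instances, parametrized by a growing number of agents $n$, in which the $n^{1-\alpha}$-virtual social welfare is bounded away from zero while $\nollboptdc$ decays to zero as $n\to\infty$. The natural template to adapt is the instance used in \cref{prop:propvirtual} (the tightness example for $\optdc\ge\optsing^n$ in the non-Bayesian case), augmented with a minimal type structure that creates the additional obstruction specific to the Bayesian setting. First I would recall why that non-Bayesian construction works: one designs an instance where, because payments must be split among the $n$ agents and each agent only sees the outcome (not the others' actions), to incentivize a "good" joint action one must pay each agent enough to cover a per-agent incentive gap, and the total payment needed scales like $n$ times what a single combinatorial agent would need — so with costs blown up only by $n^{1-\alpha}$ in the benchmark, the benchmark still has large welfare but the multi-agent principal's utility collapses. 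Even without limited liability, the IR constraint \eqref{eq:bayoptdc_ir} forces the principal to leave each agent at least their cost in expectation, and with $n$ agents this aggregate transfer is what kills the utility; dropping LL removes the nonnegativity-per-outcome constraint but not the need to compensate each agent for participation.

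The key steps, in order, would be: (i) fix a base single-agent/combinatorial instance with $n$ "binary-like" agents where action $a^\circ_i$ has cost $0$ and an alternative action has a small cost $\gamma$, and where only the all-alternative joint action yields high reward while any single agent deviating to $a^\circ$ drops the reward sharply (a "AND"-type technology, as in \citep{babaioff2012combinatorial}); (ii) choose $\gamma$ so that $\vsw^{n^{1-\alpha}}$ — which subtracts $n^{1-\alpha}\gamma$ from the full reward — is still a constant bounded away from $0$; (iii) add a trivial type distribution (even a single type per agent suffices, or a two-point distribution if needed to activate the reporting channel) so that the instance is formally Bayesian but the combinatorial structure is the one from step (i); (iv) lower-bound, for \emph{every} feasible $(\mathfrak a, p)$ for the no-LL program, the total expected payment $\sum_{i\in N}\sum_\omega F_{a^\lambda}(\omega)p^{\lambda,i}(\omega)$ by something that grows with $n$ relative to the reward, using the IC constraints \eqref{eq:bayoptdc_ic} summed over a single-agent deviation to $a^\circ_i$ together with the IR constraint \eqref{eq:bayoptdc_ir}; (v) conclude $\boptdc$ with no LL is at most $O(1/n^{\alpha})$ (or at most $\eps$ for $n$ large) times $\vsw^{n^{1-\alpha}}$, hence the ratio is $\le\eps$ for a suitable choice of $n=n(\alpha,\eps)$.

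Concretely, for step (iv) the mechanism is: if the principal wants to induce a high-reward joint action $a$, then for each agent $i$ the IC constraint comparing $a_i$ to the zero-cost action $a^\circ_i$ gives
\[
\sum_{\omega\in\Omega}p^{\lambda,i}(\omega)\bigl(F_{a}(\omega)-F_{(a^\circ_i,a_{-i})}(\omega)\bigr)\ \ge\ c_i^{\lambda_i}(a_i),
\]
and if the technology is such that the "detectability" of agent $i$'s deviation is tiny — i.e. $\|F_{a}-F_{(a^\circ_i,a_{-i})}\|$ is small, or the statistical signal is weak — then $p^{\lambda,i}$ must be large in sup-norm, which combined with even a modest positive-probability outcome forces a large expected payment to agent $i$; summing over $i\in N$ gives a total payment $\Omega(n)\cdot(\text{reward})$ and thus negative or near-zero principal utility. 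Alternatively, and more robustly, one routes the lower bound through the IR constraint: summing \eqref{eq:bayoptdc_ir} over all $i$ shows the principal pays out at least $\sum_i c_i^{\lambda_i}(a^\lambda_i)$ in expectation, and the construction ensures the only action profiles with reward comparable to $\vsw^{n^{1-\alpha}}$ have aggregate true cost $\approx n\gamma$, which dominates the reward budget. The main obstacle I expect is step (iv) done carefully: one must rule out \emph{all} ways the principal might extract value — including inducing \emph{low}-reward but cheap action profiles, exploiting the type-reporting stage, or using negative payments cleverly — and show none beats $\eps\cdot\vsw^{n^{1-\alpha}}$; getting a clean technology $F$ for which every profitable-looking action profile is provably too expensive to incentivize (in the aggregate-over-agents sense), while keeping the single-agent virtual benchmark large, is the delicate part, and it is where the factor $n$ versus $n^{1-\alpha}$ gap must be pinned down exactly.
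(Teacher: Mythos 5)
There is a genuine gap, and it is conceptual: your plan reuses the non-Bayesian moral-hazard obstruction from \cref{prop:propvirtual}, treating the type structure as an afterthought (``even a single type per agent suffices''), but no single-type instance can possibly prove this statement. Without limited liability (but with IR), in a non-Bayesian multi-agent instance the principal can give every agent the full residual claim $p^i(\omega)=r_\omega-t_i$ with the fixed fee $t_i$ chosen so IR binds at the welfare-maximizing profile $a^\star$; this is IC (each agent then maximizes exactly $\sum_\omega F_{(a_i,a^\star_{-i})}(\omega)r_\omega-c_i(a_i)$), IR-feasible, and extracts the full social welfare, so $\nollboptdc=\sw\ge\vsw^{n^{1-\alpha}}$ and the ratio is at least $1$, never $\le\eps$. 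For the same reason both of your routes in step (iv) collapse once LL is dropped: weak detectability only forces payments that are large in sup-norm, which the principal neutralizes with negative fixed fees so that the \emph{expected} payment equals the cost; and the aggregate-IR bound only caps the principal at $\sw$, which is again at least the benchmark $\vsw^{n^{1-\alpha}}$ (virtual costs are \emph{increased}, so $\vsw^{n^{1-\alpha}}\le\sw$). Your opening claim that ``with $n$ agents this aggregate transfer is what kills the utility'' is also false in any instance of the required kind: the per-profile aggregate cost must be tiny for the virtual benchmark to stay large, so IR transfers are negligible.

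What actually drives the paper's proof is a Bayesian \emph{information rent}, engineered through a correlated type distribution: with probability $\approx n^{-\alpha/2}$ all agents have the zero type, and with probability $\approx 1/n$ each, exactly one agent has the high type; the technology is AND-like ($\omega_1$ only if everyone plays $\aone$) and the high-type cost is $\frac{1}{2n^{1-\alpha}}$. If the principal wants the reward in a profile $\lambda^{(i)}$ where agent $i$ is high-type, then IR in that profile plus the type-misreporting IC of the all-zero profile (a zero-cost agent can report $\lambdatwo$, play the same action at zero cost, and pocket the compensation) force the principal to pay agent $i$ at least $\frac{1}{2n^{1-\alpha}}$ also in the all-zero profile. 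Summing these rents over the set $N^\circ$ of incentivized agents and trading off its size $\kappa$ against the $\frac{1}{n\gamma}$-probability gain from each profile $\lambda^{(j)}$, $j\in N^\circ$, shows the optimum is $\kappa=0$ and $\nollboptdc\le n^{-\alpha/2}/\gamma$, while $\vsw^{n^{1-\alpha}}\ge 1/(2\gamma)$. This cross-type coupling of IC and IR, together with the specific scaling of the type distribution, is the missing idea in your proposal; without it, the ``factor $n$ versus $n^{1-\alpha}$'' gap you hope to pin down cannot even get started against the no-LL benchmark.
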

In other words, \cref{prop:proplowerboundbay} shows that for any choice of sublinear (in $n$) virtual costs, the utility that can be extracted by an optimal deterministic contract can be arbitrarily smaller than the virtual social welfare, even without enforcing limited liability. Again, this result highlights the advantages of $\detcontr$, which, despite its low computational burden, offers approximation guarantees of the virtual social welfare that are worst-case optimal. 
\subsection{Approximation Guarantees for Social Welfare}
We conclude this section deriving some approximation results for $\detcontr$ in terms of the social welfare $\sw$.To do so, we will start from the result of \cref{th:thapproxvswbay} and investigate the relation between the virtual social welfare $\vsw^{n(1+\delta)}$ and the social welfare $\sw$. 

Deriving approximation results for the social welfare is arguably more challenges. Indeed, in general the gap between the virtual social welfare and the social welfare can be arbitrarily large.
Following the work of \citet{alon2023bayesian} on single-agent contract design, we will restrict our attention to instances of the Bayesian principal-multi-agent problem exhibiting some regularity. 
In particular, we focus on \emph{single-dimensional} types. Moreover, we assume the following two conditions.

%
\begin{assumption}[Small-tail assumption]\label{ass:smalltail}
	The problem instance is $(\delta, \eta)$-small-tail, \emph{i.e.,} it is such that
	\[
		\int_{n(1+\delta)\Lambda\cap\Lambda} \sw(\lambda)g(\lambda) d\lambda \ge \eta\sw,
	\]
	where $n(1+\delta)\Lambda \coloneqq \left\{(n(1+\delta)\lambda_i)_{i\in N}: \lambda\in\Lambda\right\}$.
\end{assumption}
\begin{assumption}[Non-increasing density]\label{ass:noninc}
	The probability density function $g$ associated to distribution $G$ is non-increasing, \emph{i.e.,}
	\(
		g(\lambda) \ge g(\lambda^\prime),
	\)
	for each $\lambda,\lambda^\prime\in\Lambda$ such that $\lambda_i \le \lambda_i^\prime$ for all $i\in N$.
\end{assumption}

Notice that both \cref{ass:smalltail} and \cref{ass:noninc} generalize to multi-agent settings the small-tail and non-increasing density assumptions that were already studied for single-agent problems by \citet{alon2023bayesian}. At an high level, the small-tail assumption serves the purpose of determining how much of the social welfare is distributed around large types. Intuitively, the closer $\eta$ is to $1$, the more social welfare can be extracted in correspondence of types with large values. Moreover, let us point out that if the sets of types $\Lambda_i$ take the form $[0, d]$ for some value of $d>0$, then the small-tail assumption is trivially satisfied for each value of $\eta\in [0,1]$. The non-increasing density assumption, instead, enforces that sampling lower types is more probable than sampling higher ones, thus ruling out all those cases in which the agents' costs are more likely to be high. Then, we can show that, under such assumptions on the problem instance, the deterministic contract $\detcontr$ guarantees a good approximation of the social welfare. 

\begin{restatable}{theorem}{thsocialwelfarebayesian}\label{th:thsocialwelfarebayesian}
	Assume that Assumptions \ref{ass:smalltail} and \ref{ass:noninc} hold. Then, $\detcontr$ guarantees that
	\[
			\mathbb{E}_{\lambda\sim G}\left[\sum_{\omega\in\Omega} F_{a_\delta^\lambda}(\omega)\left(r_\omega - \sum_{i\in N}p_\delta^{\lambda, i}(\omega)\right)\right] \ge \eta \frac{\delta}{n^n(1+\delta)^{n+1}}\sw.
		\]
\end{restatable}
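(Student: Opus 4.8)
The plan is to start from the guarantee of Theorem~\ref{th:thapproxvswbay}, which gives
\[
\mathbb{E}_{\lambda\sim G}\left[\sum_{\omega\in\Omega} F_{a_\delta^\lambda}(\omega)\left(r_\omega - \sum_{i\in N}p_\delta^{\lambda, i}(\omega)\right)\right] \ge \frac{\delta}{1+\delta}\vsw^{n(1+\delta)},
\]
and to lower bound $\vsw^{n(1+\delta)}$ in terms of $\sw$ using the single-dimensional structure together with Assumptions~\ref{ass:smalltail} and~\ref{ass:noninc}. The core observation is a scaling identity: with single-dimensional types, the instance with type profile $\lambda$ and costs scaled by the factor $n(1+\delta)$ is \emph{exactly} the instance with type profile $n(1+\delta)\lambda$, because $c_i^{\lambda_i}(a_i)=\lambda_i c_i(a_i)$ and the outcome distributions do not depend on $\lambda$. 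Hence $\vsw^{n(1+\delta)}(\lambda)=\sw\bigl(n(1+\delta)\lambda\bigr)$ whenever $n(1+\delta)\lambda\in\Lambda$.

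First I would write $\vsw^{n(1+\delta)}=\int_\Lambda \vsw^{n(1+\delta)}(\lambda)g(\lambda)\,d\lambda\ge \int_{\Lambda'} \sw(n(1+\delta)\lambda)g(\lambda)\,d\lambda$, where $\Lambda'\coloneqq\{\lambda\in\Lambda : n(1+\delta)\lambda\in\Lambda\}$ (on this set the scaling identity applies and $\vsw^{n(1+\delta)}\ge 0$ elsewhere). Then I would change variables $\nu=n(1+\delta)\lambda$; the Jacobian contributes a factor $\bigl(n(1+\delta)\bigr)^{-n}$ since there are $n$ coordinates, turning the integral into $\bigl(n(1+\delta)\bigr)^{-n}\int_{n(1+\delta)\Lambda'\,'}\sw(\nu)g\bigl(\nu/(n(1+\delta))\bigr)\,d\nu$ over the appropriate image set, which is exactly $n(1+\delta)\Lambda\cap\Lambda$ (the set appearing in Assumption~\ref{ass:smalltail}). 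Next, Assumption~\ref{ass:noninc} (non-increasing density) gives $g\bigl(\nu/(n(1+\delta))\bigr)\ge g(\nu)$ since $\nu/(n(1+\delta))\le \nu$ coordinatewise (here $n(1+\delta)\ge 1$). Finally, Assumption~\ref{ass:smalltail} gives $\int_{n(1+\delta)\Lambda\cap\Lambda}\sw(\nu)g(\nu)\,d\nu\ge \eta\,\sw$. Chaining these: $\vsw^{n(1+\delta)}\ge \eta\,\sw/\bigl(n(1+\delta)\bigr)^n$, and combining with Theorem~\ref{th:thapproxvswbay} yields
\[
\mathbb{E}_{\lambda\sim G}\left[\sum_{\omega\in\Omega} F_{a_\delta^\lambda}(\omega)\left(r_\omega - \sum_{i\in N}p_\delta^{\lambda, i}(\omega)\right)\right] \ge \frac{\delta}{1+\delta}\cdot\frac{\eta\,\sw}{\bigl(n(1+\delta)\bigr)^n} = \eta\,\frac{\delta}{n^n(1+\delta)^{n+1}}\sw,
\]
which is the claimed bound.

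The main obstacle I anticipate is bookkeeping the domain restrictions carefully: the scaling identity $\vsw^{n(1+\delta)}(\lambda)=\sw(n(1+\delta)\lambda)$ only holds when the scaled type stays inside $\Lambda$, so I must be sure that discarding the complement only loses nonnegative mass (true because $\vsw^{n(1+\delta)}(\lambda)\ge 0$), and that after the change of variables the integration domain is precisely $n(1+\delta)\Lambda\cap\Lambda$ so that Assumption~\ref{ass:smalltail} applies verbatim. A secondary subtlety is verifying that $g$ is the density of $G$ and that the substitution is legitimate (e.g. $G$ admits a density, which is implicit in Assumption~\ref{ass:noninc}); I would also note that the monotonicity of $g$ is invoked with the coordinatewise order, which matches the statement of Assumption~\ref{ass:noninc}. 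Everything else is a routine change of variables plus the two assumptions applied in sequence.
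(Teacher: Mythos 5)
Your proposal is correct and follows essentially the same route as the paper's proof: apply Theorem~\ref{th:thapproxvswbay}, use the single-dimensional scaling identity $\vsw^{n(1+\delta)}(\lambda)=\sw(n(1+\delta)\lambda)$, change variables with Jacobian factor $(n(1+\delta))^{-n}$, and then invoke the non-increasing density and small-tail assumptions in sequence. The only cosmetic difference is that you restrict to $\Lambda'=\{\lambda\in\Lambda: n(1+\delta)\lambda\in\Lambda\}$ before the substitution, while the paper integrates over all of $\Lambda$ and discards the mass on $n(1+\delta)\Lambda\setminus\Lambda$ after the substitution; both are valid by nonnegativity of the (virtual) social welfare.
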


We remark that the most interesting applications of our results are with a small set of agents. This is due both to our negative results such as \cref{prop:proplowerboundbay} and computational reasons. Indeed, the representation of the instance grows exponentially in $n$. This makes the exponential dependency in $n$ of the previous result tolerable. 

%
We can use \cref{th:thsocialwelfarebayesian} and \cref{le:lemmadsicbay} to derive the following result for $\nollboptdc$.
\begin{corollary}
	Assume that Assumptions \ref{ass:smalltail} and \ref{ass:noninc} hold. Then,
	\[
		\nollboptdc \ge \eta \frac{\delta}{n^n(1+\delta)^{n+1}} \sw.
	\]
\end{corollary}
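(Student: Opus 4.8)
The plan is to combine \cref{th:thsocialwelfarebayesian} with \cref{le:lemmadsicbay} in the same way that \cref{cor:corapproxvswbay} follows from \cref{th:thapproxvswbay} and \cref{le:lemmadsicbay}. First I would recall that \cref{le:lemmadsicbay} guarantees that, for any $\delta>0$, the affine contract $\detcontr = (\mathfrak{a}_\delta, p_\delta)$ is DSIC and IR (though possibly not LL), and hence it is a feasible solution of the program defining $\nollboptdc$ (namely Program~\ref{progr:bay-optdc} with the LL constraint \eqref{eq:bayoptdc_ll} removed and the IR constraint \eqref{eq:bayoptdc_ir} added). Therefore the value of the program $\nollboptdc$, being a supremum over all feasible deterministic contracts without LL, is at least the principal's expected utility under $\detcontr$.

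Next I would invoke \cref{th:thsocialwelfarebayesian}, which, under Assumptions \ref{ass:smalltail} and \ref{ass:noninc}, lower-bounds precisely this expected utility:
\[
	\mathbb{E}_{\lambda\sim G}\left[\sum_{\omega\in\Omega} F_{a_\delta^\lambda}(\omega)\left(r_\omega - \sum_{i\in N}p_\delta^{\lambda, i}(\omega)\right)\right] \ge \eta \frac{\delta}{n^n(1+\delta)^{n+1}}\sw.
\]
Chaining the two inequalities yields $\nollboptdc \ge \eta \frac{\delta}{n^n(1+\delta)^{n+1}} \sw$, which is exactly the claim.

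There is essentially no obstacle here: the corollary is a one-line consequence of the two cited results, and the only thing to be careful about is that $\detcontr$ is genuinely \emph{admissible} for the relaxed program $\nollboptdc$ — that is, that dropping LL and adding the explicit IR constraint \eqref{eq:bayoptdc_ir} is exactly the right feasibility notion, which is guaranteed by \cref{le:lemmadsicbay} asserting DSIC and IR. Thus the proof is simply: ``The result follows from \cref{th:thsocialwelfarebayesian} and \cref{le:lemmadsicbay}, since $\detcontr$ is a feasible (DSIC and IR, without LL) deterministic contract, and hence $\nollboptdc \ge \mathbb{E}_{\lambda\sim G}\left[\sum_{\omega\in\Omega} F_{a_\delta^\lambda}(\omega)\left(r_\omega - \sum_{i\in N}p_\delta^{\lambda, i}(\omega)\right)\right] \ge \eta \frac{\delta}{n^n(1+\delta)^{n+1}} \sw$.''
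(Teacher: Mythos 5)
Your proposal is correct and matches the paper's own (implicit) argument exactly: the paper derives this corollary precisely by combining \cref{th:thsocialwelfarebayesian} with the feasibility of \detcontr for the no-LL program guaranteed by \cref{le:lemmadsicbay}. Nothing is missing.
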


We conclude by showing a simple example in which it is possible to leverage the approximation result of \cref{th:thsocialwelfarebayesian}. 

\begin{corollary}\label{cor:exampleapp}
	If the types have support in $[0, d]^n$ for some $d>0$ and the distribution has non-increasing density, then \detcontr guarantees that
	\[
	\mathbb{E}_{\lambda\sim G}\left[\sum_{\omega\in\Omega} F_{a_\delta^\lambda}(\omega)\left(r_\omega - \sum_{i\in N}p_\delta^{\lambda, i}(\omega)\right)\right] \ge  \frac{\delta}{n^n(1+\delta)^{n+1}} \sw.
	\]
\end{corollary}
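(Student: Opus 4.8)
The plan is to derive \cref{cor:exampleapp} as an immediate specialization of \cref{th:thsocialwelfarebayesian}. First I would observe that the hypothesis ``types have support in $[0,d]^n$ for some $d>0$'' is exactly the situation mentioned in the discussion right after \cref{ass:smalltail}: when each $\Lambda_i = [0,d]$, the set $n(1+\delta)\Lambda \cap \Lambda$ coincides with $\Lambda$ itself, because $\Lambda = [0,d]^n$ is ``downward closed'' in the sense that scaling a point of $\Lambda$ by $n(1+\delta) \ge 1$ and then intersecting with $\Lambda$ recovers all of $\Lambda$. Concretely, $n(1+\delta)[0,d]^n = [0, n(1+\delta)d]^n \supseteq [0,d]^n = \Lambda$, so $n(1+\delta)\Lambda \cap \Lambda = \Lambda$, and hence
\[
\int_{n(1+\delta)\Lambda \cap \Lambda} \sw(\lambda) g(\lambda)\, d\lambda = \int_\Lambda \sw(\lambda) g(\lambda)\, d\lambda = \sw.
\]
This shows the instance is $(\delta, \eta)$-small-tail with $\eta = 1$ (indeed with $\eta$ equal to anything in $[0,1]$, but $\eta=1$ is the best constant), so \cref{ass:smalltail} holds with $\eta = 1$.

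Second, the hypothesis ``the distribution has non-increasing density'' is verbatim \cref{ass:noninc}, so that assumption holds directly with no further work. With both \cref{ass:smalltail} (with $\eta = 1$) and \cref{ass:noninc} in force, \cref{th:thsocialwelfarebayesian} applies and yields
\[
\mathbb{E}_{\lambda\sim G}\left[\sum_{\omega\in\Omega} F_{a_\delta^\lambda}(\omega)\left(r_\omega - \sum_{i\in N}p_\delta^{\lambda, i}(\omega)\right)\right] \ge \eta\, \frac{\delta}{n^n(1+\delta)^{n+1}}\,\sw = \frac{\delta}{n^n(1+\delta)^{n+1}}\,\sw,
\]
which is exactly the claimed bound.

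There is essentially no obstacle here: the only mild point requiring care is the set-theoretic identity $n(1+\delta)\Lambda \cap \Lambda = \Lambda$ for product-of-intervals type spaces anchored at $0$, which is where the ``trivially satisfied'' remark after \cref{ass:smalltail} is cashed out. One should double-check the definition $n(1+\delta)\Lambda \coloneqq \{(n(1+\delta)\lambda_i)_{i\in N} : \lambda \in \Lambda\}$ to confirm it is the coordinate-wise scaling of the whole set (it is), so that scaling $[0,d]^n$ gives $[0, n(1+\delta)d]^n$; since $\delta > 0$ and $n \ge 1$ we have $n(1+\delta) > 1$, so the scaled box contains the original box and the intersection is all of $\Lambda$. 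Everything else is a direct invocation of \cref{th:thsocialwelfarebayesian} with $\eta=1$, so the corollary follows ``trivially'' as the paper's surrounding corollaries phrase it.
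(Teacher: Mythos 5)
Your proposal is correct and matches the paper's intended argument: the paper's remark after \cref{ass:smalltail} already notes that type spaces $\Lambda_i=[0,d]$ make the small-tail assumption hold for every $\eta\in[0,1]$ (your verification that $n(1+\delta)\Lambda\cap\Lambda=\Lambda$ is exactly the reason), and the corollary is then \cref{th:thsocialwelfarebayesian} instantiated with $\eta=1$. Nothing further is needed.
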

\cref{cor:exampleapp} captures only a small family of instances in which our simple affine contract guarantees to extract a constant approximation factor of the social welfare when the number of agents is constant. A special case is the one in which the types are uniformly distributed on $[0, d]^n$ for some $d>0$. Moreover, similar approximation results can be obtained for other problem instances  by relaxing the non-increasing density assumption and enlarging the scope to \emph{slowly-increasing} density functions (we refer the interested reader to \cite{alon2023bayesian} for more details).

\section*{Acknowledgements}
This paper is supported by the FAIR (Future Artificial Intelligence Research) project, funded by the NextGenerationEU program within the PNRR-PE-AI scheme (M4C2, Investment 1.3, Line on Artificial Intelligence), and by the EU Horizon project ELIAS (European Lighthouse of AI for Sustainability, No. 101120237).

\bibliographystyle{ACM-Reference-Format}
\bibliography{sample-bibliography}

\newpage
\appendix
\section{Proofs Omitted from Section \ref{sec:optcrc}}
\propgap*
\begin{proof}
	Consider an instance of the principal-multi-agent problem where $N=\{1,2\}$,  $\Omega=\{\omega_1,\omega_2\}$, and $A_1=A_2=\{\aone,\atwo\}$. The rewards are defined such that $r_{\omega_1} = 1$ and $r_{\omega_2} = 0$. The outcome probabilities are defined as in the following table: 
	\begin{table}[H]
		\centering
		\begin{tabular}{|c | c c |}
			\hline 
			$F_a(\omega)$ & $\omega_1$ & $\omega_2$ \\
			\hline
			$\aone\aone$ & $1$ & $0$ \\
			$\aone\atwo$ & $2/5$ & $3/5$ \\
			$\atwo\aone$ & $2/5$ & $3/5$ \\
			$\atwo\atwo$ & $0$ & $1$ \\
			\hline 
		\end{tabular}
	\end{table}
	Additionally, the action costs are $c_i(\aone) = 2/5$ and $c_i(\atwo) = 0$ for all $i\in N$. To prove the proposition, we first show that in the consider instance the utility of the optimal deterministic contract is $0$, and then prove that $\optcrc>0$. 
	
	\paragraph{Upper bound on $\optdc$.} 
	Let $(a, p)$ be an optimal deterministic contract, \emph{i.e.,} such that 
	\[
	\sum_{\omega\in\Omega} F_a(\omega) \left(r_\omega - \sum_{i\in N} p^i(\omega)\right) = \optdc.
	\]
	By  IC  with respect to action $\atwo$, the expected payment received by an agent to which is recommended action $\aone$ must be at least equal to the action cost (otherwise the agent would play $\atwo$). Formally, for each $i\in N$ such that $a_i=\aone$, it must hold that $\sum_{\omega\in\Omega} F_a(\omega) p^i(\omega) \geq c_i(\aone)$, which gives the following:
	\begin{equation}\label{eq:bounddc_propgap}
		\optdc = \sum_{\omega\in\Omega} F_a(\omega) \left(r_\omega - \sum_{i\in N} p^i(\omega)\right) \leq \sum_{\omega\in\Omega} F_a(\omega)r_\omega  - \sum_{i\in N} c_i(a_i).
	\end{equation}
	From \cref{eq:bounddc_propgap}, it follows that the only action profile that can achieve a strictly positive upper bound on $\optdc$ is $\aone\aone$, while all the others action profiles guarantee non-positive principal's expected utility. Hence, let us assume that the incentivized action is $a=\aone\aone$. Then, we have that the payment scheme $p$ must satisfy the following IC constraint for each $i\in N$: 
	\[
	p^i(\omega_1) - \frac{2}{5} \geq \frac{2}{5}p^i(\omega_1) + \frac{3}{5}p^i(\omega_2).
	\]
	Rearranging, we get
	\[
	\frac{3}{5}p^i(\omega_1) \ge \frac{2}{5} + \frac{3}{5}p^i(\omega_2) \ge \frac{2}{5}.
	\]
	This implies $p^i(\omega_1) \geq 2/3$ for each $i\in N$. As a consequence, we have that if a contract incentivizes action profile $\aone\aone$, then the utility for the principal must be
	\[
	\sum_{\omega\in\Omega} F_{(\aone,\aone)}(\omega) r_\omega - \sum_{i\in N} p^i(\omega) \leq 1 - \frac{4}{3} \leq 0,
	\]
	which proves that $\optdc\leq0$.
	
	\paragraph{Lower bound on $\optcrc$.} Consider the randomized contract $(\mu, \pi)$ where
	\[
	\mu(a) = \begin{cases}
		1/3 & \textnormal{if } a \neq \atwo\atwo \\
		0 &  \textnormal{if } a \neq \atwo\atwo,
	\end{cases} \quad\quad\textnormal{and}\quad\quad \pi^i_a(\omega) = \begin{cases}
		2	& \textnormal{if } a_i = \aone\,\wedge\, a_{-i}=\atwo\,\wedge\, \omega=\omega_1 \\
		0 & \textnormal{otherwise}
	\end{cases}\quad\forall i\in N.
	\]
	First, notice that the randomized contract $(\mu, \pi)$ is IC. In particular, the IC constraint when action $\aone$ is recommended to the first agent is:
	\begin{align*}
		\sum_{a_2\in A_2}& \mu(\aone, a_2) \left(U_1^{\mu, \pi}(\aone) - U_1^{\mu, \pi}(\aone\to\atwo) \right) \\
		&= \sum_{a_2\in A_2}\mu(\aone,a_2)\sum_{\omega\in\Omega} \pi^i_{(\aone, a_2)}(\omega)\left(F_{(\aone, a_2)}(\omega) - F_{(\atwo, a_2)}(\omega)\right) - \sum_{a_2\in A_2} \mu(\aone, a_2)\left(c_1(\aone) - c_1(\atwo)\right) \\
		&= \mu(\aone,\atwo)\sum_{\omega\in\Omega} \pi^1_{(\aone,\atwo)}(\omega)\left(F_{(\aone,\atwo)}(\omega) - F_{(\atwo,\atwo)}(\omega)\right) - \sum_{a_2\in A_2} \mu(\aone, a_2)\left(c_1(\aone) - c_1(\atwo)\right) \\
		&= \frac{4}{15} - \frac{4}{15} \geq 0.
	\end{align*}
	Similarly, the IC constraint is trivially satisfied when agent $i$ is recommended to play action $\atwo$, since in such a case her expected payment is $0$. This proves that $(\mu, \pi)$ is IC for agent $1$ (and hence also for agent $2$, since the agents are symmetric). To conclude the proof notice that since $(\mu, \pi)$ is a feasible contract, \emph{i.e.,} a feasible solution to \cref{prog:optrc}, the following holds: 
	\begin{align*}
		\optcrc &\ge \sum_{a\in\A}\sum_{\omega\in\Omega} \mu(a) F_a(\omega)\left(r_\omega - \sum_{i\in N} \pi^i_a(\omega)\right) = \frac{1}{3} + \frac{1}{3}\left[\frac{2}{5} - \frac{4}{5}\right] + \frac{1}{3}\left[\frac{2}{5} - \frac{4}{5}\right] = \frac{1}{15} > 0.
	\end{align*}
	This concludes the proof.
\end{proof}

\lemmasup*
\begin{proof}
	Consider an instance of a principal-multi-agent problem where $N=\{1,2\}$, $A_1=A_2=\{\aone,\atwo\}$, and $\Omega=\{\omega_1,\omega_2,\omega_3,\omega_4\}$. The rewards are defined such that  $r_\omega = 0$ for all $\omega\in\{\omega_1,\omega_2,\omega_3\}$ and $r_{\omega_4} = 1$. The outcome probabilities are defined as in the following table:
	\begin{table}[H]
		\centering
		\begin{tabular}{|c | c c c c|}
			\hline
			$F_a(\omega)$ & $\omega_1$ & $\omega_2$ & $\omega_3$ & $\omega_4$ \\
			\hline
			$\aone\aone$ & 1 & 0 & 0 & 0 \\
			$\aone\atwo$ & 0 & 1 & 0 & 0 \\
			$\atwo\aone$ & 0 & 0 & 0 & 1 \\
			$\atwo\atwo$ & 0 & 0 & 1/2 & 1/2 \\
			\hline
		\end{tabular}
	\end{table}
	Finally, the action costs are such that $c_1(\aone) = c_1(\atwo) = c_2(\atwo) = 0$ and $c_2(\aone) = 1/4$.
	
	First, fix any $\eps \in (0,1]$ and let us consider the randomized contract $(\mu,\pi)$, where 
	\begin{align*}
					\mu(a)=
		\begin{cases}
			\epsilon &\text{if}\quad a=(\aone,\aone)\\
			1-\epsilon &\text{if}\quad a=(\atwo,\aone)\\
			0&\text{otherwise}
		\end{cases}
	\end{align*}
	%
	while the payment schemes $\pi$ are defined as
	\[
	\pi^2_a(\omega) = \begin{cases}
		\frac{1}{4\eps} & \textnormal{if } a = \aone\aone\,\wedge\, \omega=\omega_1 \\
		0 & \textnormal{otherwise}
	\end{cases}\quad\quad\textnormal{and}\quad\quad \pi^1_a(\omega) = 0 \quad\forall a\in\A,\,\forall\omega\in\Omega.
	\]
	Notice that the randomized contract $(\mu, \pi)$ is trivially IC for agent $1$, since she never gets payed and her action costs are always $0$. Now, consider the IC constraint for agent $2$ when she is recommended to play action $\aone$ and she deviates by playing action $\atwo$. Notice that there is no need to consider the opposite case since the principal never recommends to agent $2$ to play action $\atwo$. It holds that: 
	\begin{align*}
		U^{\mu, \pi}_2&( \aone) - U^{\mu, \pi}_{2} (\aone\shortto\atwo) \\ 
		&= \sum_{a_1\in A_1}\mu(a_1,\aone) \sum_{\omega\in\Omega}\pi^2_{a_1,\aone}(\omega)\left(F_{(a_1,\aone)}(\omega) - F_{(a_1,\atwo)}(\omega)\right) + \sum_{a_1\in A_1}\mu(a_1,\aone)\left(c_2(\atwo) - c_2(\aone)\right) \\
		&=  \frac{1}{4} - \frac{1}{4} = 0,
	\end{align*} 
	which implies that $(\mu, \pi)$ is IC. The utility achieved by the principal when committing to $(\mu, \pi)$ is 
	\begin{align*}
		\sum_{a\in\A}\sum_{\omega\in\Omega}\mu(a)F_a(\omega)\left(r_\omega - \sum_{i\in N} \pi^i_a(\omega)\right) &= \mu(\atwo,\aone)F_{\atwo.\aone}(\omega_4) r_{\omega_4} - \mu(\aone,\aone) F_{\aone,\aone}(\omega_1)\pi^2_{\aone,\aone}(\omega_1) \\
		&= 1 - \eps - \eps \frac{1}{4\eps} = \frac{3}{4} - \eps.
	\end{align*}
	This proves that $\optcrc\ge\frac 34$.

	To conclude the proof, we need to show that no feasible randomized contract that can guarantee a principal's utility equal to $3/4$. To this extent, notice that by  IC  with respect to action $\atwo$ the expected payment received by an agent must be at least equal to the expected cost (otherwise the agent would play $\atwo$).
	Formally, 
	\[
	\sum_{a\in\A}\sum_{\omega\in\Omega} \mu(a)F_a(\omega)\pi^i_a(\omega) \geq \sum_{a\in\A}\mu(a) c_i(a_i) \quad\forall i\in N.
	\]
	Thus, we can upper bound the principal's expected utility as follows: 
	\begin{align*}
		\sum_{a\in\A}\sum_{\omega\in\Omega}\mu(a)F_a(\omega)\left(r_\omega - \sum_{i\in N} \pi^i_a(\omega)\right) 
		&\leq \sum_{a\in\A}\mu(a)\left(\sum_{\omega\in\Omega}F_a(\omega)r_\omega - \sum_{i\in N} c_i(a_i)\right)\\
		&= -\frac{1}{4}\mu(\aone,\aone) + \frac{3}{4}\mu(\atwo,\aone) + \frac{1}{2}\mu(\atwo,\atwo).
	\end{align*}
	As a consequence, the only way to guarantee a principal's utility equal to $3/4$ would be to select a contract $(\bar\mu, \bar\pi)$ such that $\bar\mu(\atwo,\aone) = 1$. However, notice that, in such a case, the IC constraint for agent $2$ would require the following
	\begin{align*}
		0 & \leq U_2^{\bar\mu, \bar\pi}(\aone) - U_2^{\bar\mu, \bar\pi}(\aone\to\atwo) \tag{IC}\\
		&= \sum_{\omega\in\Omega}\bar\pi^2_{(\atwo,\aone)}(\omega)\left(F_{(\atwo,\aone)}(\omega) - F_{(\atwo,\atwo)}(\omega)\right) + c_2(\atwo) - c_2(\aone)\tag{$\bar\mu(\atwo,\aone)=1$} \\
		&= \frac{1}{2}\bar\pi^2_{(\atwo,\aone)}(\omega_4)-\frac{1}{2}\bar\pi^2_{(\atwo,\aone)}(\omega_3) - \frac{1}{4}.  
	\end{align*}
	Rearranging, we get 
	\[
	\bar\pi^2_{(\atwo,\aone)}(\omega_4) \geq \frac{1}{2} + \bar\pi^2_{(\atwo,\aone)}(\omega_3) \geq \frac{1}{2},
	\]
	which implies that the expected utility for the principal under such a randomized contract is
	\begin{align*}
	\sum_{a\in\A}\sum_{\omega\in\Omega}\bar\mu(a)F_a(\omega)\left(r_\omega - \sum_{i\in N}\bar\pi^i_a(\omega)\right) &\leq \bar\mu(\atwo,\aone)\left(1- \bar\pi_{(\atwo,\aone)}^i(\omega_4)\right)\\
	&\leq \frac{1}{2},
	\end{align*}
	where in the last inequality we used that $\bar\mu(\atwo,\aone)=1$ and $\bar\pi^2_{(\atwo,\aone)}\ge 1/2$.
	This shows that no randomized contract that is IC can attain an expected utility for the principal equal to $3/4$, thus giving the desired result.	
\end{proof}

\lemmarelaxation*
\begin{proof}
	In order to prove the result, it is enough to show that for each $(\mu, \pi)$ feasible for \Cref{prog:lprc}, there exists a $(\mu, x)$ that is feasible for $\lprc(\AindR)$ achieving the same  value. To this extent, let $(\mu, \pi)$ be a feasible solution for  \Cref{prog:lprc} and define $x$ as:
	\[
	x_a^i(\omega)\coloneqq\mu(a)x_a^i(\omega)\,\,\forall i\in N, \forall a\in\A, \omega\in \Omega.
	\]
%
	First, let us focus on the objective function and notice that by definition of $x$ it holds that:
	\begin{align*}
		\sum_{a\in\A}\sum_{\omega\in\Omega} \left(\mu(a) F_a(\omega)r_\omega- \sum_{i\in N} \mu(a)F_a(\omega)\pi_a^i(\omega)\right) &= \sum_{a\in\A}\sum_{\omega\in\Omega} \left(\mu(a) F_a(\omega)r_\omega- \sum_{i\in N} F_a(\omega)x_a^i(\omega)\right),
	\end{align*}
	which proves the equivalence in the objective function of \Cref{prog:optrc} in $(\mu,\pi)$ and \Cref{prog:lprc} in $(\mu, x)$.
	
	Now, let us prove the feasibility of $(\mu, x)$ for \Cref{prog:lprc}.  For all $i\in N$ and for all $a_i, a_i^\prime\in A_i$, it holds that
	\begin{subequations}
	\begin{align}
		\sum_{a_{-i}\in\A_{-i}}&\left[\sum_{\omega\in\Omega}\left(x^i_{a_i, a_{-i}}(\omega)F_{a_i, a_{-i}}(\omega)\right) - \mu(a_i, a_{-i})c_i(a_i)\right]\notag \\
		&= \sum_{a_{-i}\in\A_{-i}} \mu(a_i, a_{-i}) \left(\sum_{\omega\in\Omega}F_{(a_i, a_{-i})}(\omega)\pi^i_{(a_i, a_{-i})}(\omega) - c_i(a_i)\right)\label{eq:relax_defx1} \\
		&\geq \sum_{a_{-i}\in\A_{-i}} \mu(a_i, a_{-i}) \left(\sum_{\omega\in\Omega}F_{(a_i^\prime, a_{-i})}(\omega)\pi^i_{(a_i, a_{-i})}(\omega) - c_i(a_i^\prime)\right)\label{eq:relax_feasib} \\
		&= \sum_{a_{-i}\in\A_{-i}}\left[\sum_{\omega\in\Omega}\left(x^i_{a_i, a_{-i}}(\omega)F_{a_i^\prime, a_{-i}}(\omega)\right) - \mu(a_i, a_{-i})c_i(a_i^\prime)\right]\label{eq:relax_defx2},
	\end{align}
	\end{subequations}
	where Equations \eqref{eq:relax_defx1} and \eqref{eq:relax_defx2} follow from the definition of $x$, while \cref{eq:relax_feasib} follows from feasibility of $(\mu, \pi)$. Hence, we can conclude that Constraint \eqref{eq:lprc_ic} is satisfied by $(\mu, x)$.
	Additionally, notice that Constraints~\eqref{eq:lprc_bound} and~\eqref{eq:lprc_defmu} are trivially satisfied by $(\mu, x)$. 
	Finally, by definition of the set of inducible actions $\AindR$ we have that for every non inducible action profile $a\in \A\setminus\AindR$ we have that $\mu(a)=0$ and thus by definition of $x$:
	\[
	x_a^i(\omega)=0,\quad\forall a\in\A\setminus\AindR,\forall i\in N, \forall\omega\in\Omega.
	\]
%
%
	which proves that Constraints~\eqref{eq:lprc_restr1} and~\eqref{eq:lprc_restr2} are satisfied.
	Thus, $(\mu,x)$ is feasible for \Cref{prog:lprc} and this concludes the proof.
\end{proof}

%

\lemmatransformation*
\begin{proof}
	To prove this lemma, we first construct $(\bar\mu, \bar x)$ starting from $(\mu, x)$. Then, we show that $(\bar\mu, \bar x)$ is feasible for $\lprc(M, \AindR)$. Finally, we conclude the proof bounding the gap of the objective values achieved by $(\mu, x)$ and $(\bar\mu, \bar x)$.
	
	\paragraph{Definition of $(\bar\mu, \bar x)$.}
	For all $a\in\AindR$, let us introduce the following linear program, which we denote as $\lpaux(a)$:
	\begin{program}\label{eq:lpaux_trans}
		\begin{align} 
			&\max_{\mu, x} \,\, \mu(a) \hspace{10cm}\textnormal{s.t.} \nonumber\\
			&\sum_{a_{-i}\in\A_{-i}}\left[\sum_{\omega\in\Omega}\left(x^i_{(a_i, a_{-i})}(\omega)F_{(a_i, a_{-i})}(\omega)\right) \hspace{-0.05cm}-\hspace{-0.05cm} \mu(a_i, a_{-i})c_i(a_i)\right] \nonumber \\
			&\hspace{1.05cm}\geq \hspace{-0.2cm}\sum_{a_{-i}\hspace{-0.05cm}\in\A_{-i}}\left[\sum_{\omega\in\Omega}\left(x^i_{(a_i, a_{-i})}(\omega)F_{(a_i^\prime, a_{-i})}(\omega)\right) \hspace{-0.05cm}-\hspace{-0.05cm} \mu(a_i, a_{-i})c_i(a_i^\prime)\right]\hspace{0.3cm} \forall i\in N,\forall a_i,a_i^\prime\in A_i \label{eq:lpaux_trans_ic}\\
			& \mu(a) = 0\hspace{9.2cm} \forall a\in \A\setminus\AindR\label{eq:lpaux_trans_restr1} \\
			& x_a^i(\omega) = 0\hspace{7.cm} \forall i\in N, a\in\A\setminus\AindR,\omega\in\Omega\label{eq:lpaux_trans_restr2}\\
			& x_a^i(\omega)\geq 0\hspace{7.4cm} \forall i\in N,\forall a\in\A,\forall\omega\in\Omega\label{eq:lpaux_trans_defx} \\
			& \mu \in \Delta(\A)\label{eq:lpaux_trans_defmu}.
		\end{align}
	\end{program}
	We make the following considerations about the bit-complexity of the solutions of the LPs $\lpaux(a)$.
	Denote with $|I|$ the instance size of the principal-multi-agent problem, and for all $\bar a \in\AindR$ define as $(\bar \mu^{\bar a}, \bar x^{\bar a})$ the solution of $\lpaux(\bar a)$. 
	Note that we are solving $\lpaux(\bar a)$ only for $\bar a\in\AindR$. Thus, there are feasible solutions to \Cref{prog:optrc} with $\mu(\bar a)>0$  and hence to $\lpaux(\bar a)$ with $\mu(\bar a)>0$.
	Then, it is well known that for all $\bar a \in \AindR$ there exists a solution $(\bar \mu^{\bar a}, \bar x^{\bar a})$ such that $\lVert \bar x^{\bar a}\rVert_\infty\le \tau(|I|)$ and $\bar \mu^{\bar a}(\bar a)\ge \tau(|I|)^{-1}$, where $\tau:\mathbb{N}\to\mathbb{R}^+$ is a function such that $\tau (x)\in 2^{\poly(x)}$ \citep{bertsimas1997introduction}.
	 
	 We define $(\bar \mu,\bar x)$ as follows:
	\begin{align}
		&\bar \mu(a) := \alpha \mu(a) + (1-\alpha) \frac{1}{|\AindR|}\sum_{\bar a\in \AindR}\mu^{\bar a}(a)&\forall a\in\A\label{eq:defmubar_trans}\\
		&\bar x_a^i(\omega) := \alpha x_a^i(\omega) + (1-\alpha) \frac{1}{|\AindR|}\sum_{\bar a\in \AindR}x^{\bar a, i}_a(\omega)&\forall i\in N,\,\forall a\in\A,\,\forall\omega\in\Omega\label{eq:defxbar_trans},
	\end{align}
	where $(\mu,x)$ is a feasible solution to $\lprc(\AindR)$ and $\alpha:=\frac{2|\AindR|n\tau(|I|)-\eps}{2|\AindR|n\tau(|I|)}\in[0,1]$.

	\paragraph{Feasibility of $(\bar\mu, \bar x)$.} We now proceed to prove that $(\bar\mu, \bar x)$ is feasible for $\lprc(M, \AindR)$, where $M:={2 |\A|^2 n \tau(|I|)^3K\eps^{-1}}{}$.
	First, note that for each $\bar a\in\AindR$, constraints \eqref{eq:lpaux_trans_ic}-\eqref{eq:lpaux_trans_restr2} and \eqref{eq:lpaux_trans_defmu} in $\lpaux(\bar a)$ are equivalent to the constraints \eqref{eq:lprc_ic}, and \eqref{eq:lprc_restr1}-\eqref{eq:lprc_defmu} of $\lprc(M, \AindR)$. This shows that Constraints \eqref{eq:lprc_ic} and \eqref{eq:lprc_restr1}-\eqref{eq:lprc_defmu} of $\lprc(M, \AindR)$ are satisfied by $(\mu^{\bar a}, x^{\bar a})$ for every $\bar a\in\AindR$. $(\mu, x)$ satisfies the same constraints by construction (it is the solution of $\lprc(\AindR)$ that includes the same constraints). 
	Then, the same constraints of $\lprc(M, \AindR)$ are satisfied by $(\bar \mu, \bar x)$, as it is a convex combination of $(\mu, x)$ and $(\mu^{\bar a}, x^{\bar a})_{\bar a\in\AindR}$.
	Now, we need to verify that Constraint~\eqref{eq:lprc_bound} of $\lprc(M, \AindR)$ is satisfied by $(\bar \mu,\bar x)$. This can be shown by the following inequalities:
	\begin{align*}
		\frac{\bar x_a^i(\omega)}{\bar\mu(a)} &:= \frac{\alpha x_a^i(\omega) + (1-\alpha) \sum_{\bar a\in \AindR}x^{\bar a,i}_a(\omega)/|\AindR|}{\mu(a) + (1-\alpha) \sum_{\bar a\in \AindR}\mu^{\bar a}(a)/|\AindR|}\\
		&\leq \frac{\alpha K + (1-\alpha) \sum_{\bar a\in \AindR}x^{\bar a,i}_a(\omega)/|\AindR|}{(1-\alpha) \sum_{\bar a\in \AindR} \mu^{\bar a}(a)/|\AindR|} \tag{$x_a^i(\omega)\le K$ and $\mu(a)\ge 0$}\\
		&\leq \frac{\tau(|I|)^2 K |\AindR|}{(1-\alpha)} \tag{$\sum_{\bar a\in\AindR}\mu^{\bar a}(a)\ge 1/\tau(|I|)$ and $x_a^{\bar a, i}(\omega)\le \tau(|I|)$}\\
		&= \frac{2 |\AindR|^2 N \tau(|I|)^3K}{\eps} \tag{Definition of $\alpha$}\\
		&= M.\tag{Definition of $M$ and $\AindR\subseteq \A$}
	\end{align*}
	This proves that $(\bar\mu,\bar x)$ is feasible for $\lprc(M, \AindR)$.
	
	\paragraph{Gap in the Objective Functions.} We now study the relationship in the objective function of the just defined $(\bar\mu,\bar x)$ and $(\mu, x)$. In particular, we will show that $(\bar\mu,\bar x)$ is only slightly worst (order of $O(\eps)$) than $(\mu, x)$.
	
	By definition of $(\bar\mu,\bar x)$ we have:
	\begin{align*}
		\sum_{a\in\A}\sum_{\omega\in\Omega}F_a(\omega)&\left(\bar\mu(a)r_{\omega} - \sum_{i\in N} \bar x_a^i(\omega)\right)\notag \\
		&:= \underbrace{\alpha\left[\sum_{a\in\A}\sum_{\omega\in\Omega}F_a(\omega)\left(\mu(a)r_{\omega} - \sum_{i\in N} x_a^i(\omega)\right)\right] }_{(A)}\\
		&+ \underbrace{(1-\alpha)\left[\sum_{a\in\A}\sum_{\omega\in\Omega}F_a(\omega)\frac{1}{|\AindR|}\sum\limits_{\bar a\in\AindR}\left(\mu^{\bar a}(a)r_{\omega} - \sum_{i\in N}  x_a^{\bar a,i}(\omega)\right)\right]}_{(B)}. \nonumber
	\end{align*}
	
	We analyze the two terms separately. For $(A)$, define $V:=\sum_{a\in\A}\sum_{\omega\in\Omega}F_a(\omega)\left(\mu(a)r_{\omega} - \sum_{i\in N} x_a^i(\omega)\right)$ and consider the following inequalities:
	\begin{align*}
		(A)&=\alpha V\\
		&\ge V-(1-\alpha)\tag{$V\le 1$}\\
		&= V - \frac{\eps}{2|\AindR|N\tau(|I|)}\tag{Definition of $\alpha$}\\
		&\ge V-\eps/2.
	\end{align*}
	
	For $(B)$, consider the following inequalities:
	\begin{align*}
		(B)&=(1-\alpha) \sum\limits_{a\in\AindR}\sum\limits_{\omega\in\Omega}\frac{F_a(\omega)}{|\AindR|}\sum\limits_{\bar a\in\AindR}\left(\mu^{\bar a}(a)r_\omega-\sum\limits_{i\in N}x_a^{\bar a,i}(\omega)\right)\tag{$\mu^{\bar a}(a)=x_a^{\bar a,i}(\omega)=0,\forall a\in\A\setminus \AindR$}\\
		&\ge-(1-\alpha) \sum\limits_{a\in\AindR}\sum\limits_{\omega\in\Omega}F_a(\omega)\frac{1}{|\AindR|}\sum\limits_{\bar a\in\AindR}\sum\limits_{i\in N}x_a^{\bar a,i}(\omega)\tag{$r_\omega\ge 0$}\\
		&\ge -(1-\alpha)\sum\limits_{a\in\AindR}\sum\limits_{\omega\in\Omega}F_a(\omega)\sum\limits_{i\in N}\tau(|I|)\tag{$\tau(|I|)\ge x_a^{\bar a,i}$}\\
		&= -(1-\alpha)|\AindR|n\tau(|I|)\tag{$F_a\in\Delta(\Omega)$}\\
		&=-\eps/2\tag{Definition of $\alpha$},
	\end{align*}
	
	Which proves that 
	\[
	(A)+(B)=\sum_{a\in\A}\sum_{\omega\in\Omega}F_a(\omega)\left(\bar\mu(a)r_{\omega} - \sum_{i\in N} \bar x_a^i(\omega)\right)\ge \sum_{a\in\A}\sum_{\omega\in\Omega}F_a(\omega)\left(\mu(a)r_{\omega} - \sum_{i\in N}  x_a^i(\omega)\right)-\eps,
	\]
	as desired.
\end{proof}

\theoremRandom*

\begin{proof}
	Let $(\mu, x)$ be an optimal solution to $\lprc(M, \A)$ where $M=M(|I|,\varepsilon, K)$ is defined as in \Cref{le:lemmatrans} and $K=\tau({|I|})$. Note that $(\mu, x)$ can be find in time polynomial in   $\poly(|I|, \log(M))=\poly(|I|, \varepsilon^{-1})$ solving the linear program $\lprc(M, \A)$. Indeed, it is well known that there exists a solution $( \mu,  x)$ such that $\lVert  x\rVert_\infty\le \tau(|I|)$, where $\tau:\mathbb{N}\to\mathbb{R}^+$ is the same function of \Cref{le:lemmatrans} such that $\tau (x)\in 2^{\poly(x)}$, and this solution can be found in polynomial time \citep{bertsimas1997introduction}. 
	
	By using \Cref{le:lemmarelax} and \Cref{le:lemmatrans}, the following inequalities hold:
	\begin{align}
		\opt_{\lprc(M, \A)}&\ge \opt_{\lprc(M, \AindR)}\tag{$\AindR\subseteq\A$}\\
		&\ge \opt_{\lprc(\AindR)}-\varepsilon\tag{\Cref{le:lemmatrans}}\\
		&\ge \opt_{\mathsf{R}}-\varepsilon\tag{\Cref{le:lemmarelax}}.
	\end{align}
	
	We used \Cref{le:lemmatrans} with $K=\tau(|I|)$ as we apply it to the optimal solution of $\lprc(\AindR)$, which is guaranteed to be bounded by $K=\tau(|I|)$.
	Then,  define $(\mu, \pi)$ from $(\mu, x)$, where:
	\[
	\pi^i_a(\omega) \coloneq \begin{cases}
		\frac{x^i_a(\omega)}{\mu(a)} & \textnormal{if } \mu(a) > 0 \\
		0 & \textnormal{otherwise}
	\end{cases}\quad\forall i\in N,\,\forall a\in\A,\,\forall\omega\in\Omega.
	\]
	It follows that:
	\[
	\sum_{a\in\A}\sum_{\omega\in\Omega} \mu(a) F_a(\omega) \left( r_\omega - \sum_{i\in N} \pi^i_a(\omega)\right) = \sum_{a\in\A}\sum_{\omega\in\Omega}F_a(\omega)\left(\mu(a) r_{\omega} - \sum_{i\in N} x_a^i(\omega)\right)\eqqcolon\opt_{\lprc(M, \A)}.
	\]
	This, together with the previous inequalities, shows that:
	\[
	\sum_{a\in\A}\sum_{\omega\in\Omega} \mu(a) F_a(\omega) \left( r_\omega - \sum_{i\in N} \pi^i_a(\omega)\right)\ge \opt_\mathsf{R}-\varepsilon.
	\]
	
	Finally, note that $(\mu,\pi)$ is feasible for \Cref{prog:optrc} since when $\mu(a)=0$ then $x_a^i(\omega)=0$ for all $i\in N$ and $\omega\in\Omega$. This guarantees that $x^i_a(\omega)=\mu(a)\pi^i_a(\omega) $ for each $i \in N$, $a \in \A$, and $\omega \in \Omega$. 
%
	%
\end{proof}
\section{Proofs Omitted from Section \ref{sec:combinatorial}}

\lemmavirtual*
\begin{proof}
	Let us fix a payment function $p: \Omega\to\Reals^+$ for the single-agent problem and take $a\in \B^n(p)$. To prove the lemma we will show that $a\in E(\bar p)$.
	
	By definition of $\B^n(p)$, the following inequality holds for each $i\in N$ and for each $a_i^\prime\in A_i$:
	\[
	\sum_{\omega\in\Omega} F_a(\omega) p(\omega) - n\sum_{j\in N} c_j(a_j) \geq \sum_{\omega\in\Omega} F_{(a_i^\prime, a_{-i})}(\omega) p(\omega) - n\sum_{j\in N\setminus\{i\}} c_j(a_j) - nc_i(a_i^\prime),
	\]
	which in turn implies that: 
	\[
	\sum_{\omega\in\Omega} F_a(\omega) p(\omega) - n c_i(a_i) \geq \sum_{\omega\in\Omega} F_{(a_i^\prime, a_{-i})}(\omega) p(\omega) - nc_i(a_i^\prime).
	\]
	Dividing the above inequality by $n$, by definition of $\bar p$ we get:
	\[
	\sum_{\omega\in\Omega} F_a(\omega) \bar p^i(\omega) - c_i(a_i) \geq \sum_{\omega\in\Omega} F_{(a_i^\prime, a_{-i})}(\omega) \bar p^i(\omega) - c_i(a_i^\prime),
	\]
	which proves that $a\in E(\bar p)$ as desired. 
\end{proof}

\thvirtual*
\begin{proof}
	Let $(a, p)$ be an optimal contract for the single-agent problem with virtual costs increased by an $n$ factor, \emph{i.e.,} such that
	\[
	\sum_{\omega\in\Omega}F_{a}(\omega)[r_\omega - p(\omega)] = \optsing^n.
	\]
	Furthermore, let $\bar p=(\bar p^i)_{i\in N}$, where $\bar p^i(\omega) = p(\omega)/n$ for each $i\in N$ and for each $\omega\in\Omega$. By \cref{le:lemmavirtual}, it holds that $a\in E(\bar p)$, which is equivalent to saying that $(a, \bar p)$ is a feasible solution to \cref{prog:optdc}. As a consequence, we have that:
	\begin{align*}
		\optdc &\geq \sum_{\omega\in\Omega} F_a(\omega)\left(r_\omega - \sum_{i\in N} \bar p^i(\omega)\right) \\
		&=\sum_{\omega\in\Omega} F_a(\omega) \left(r_\omega - p(\omega)\right) \\
		&= \optsing^n,
	\end{align*}
	which yields the result.
\end{proof}

\propvirtual*
\begin{proof}
	Fix $\alpha > 0$, $\eps > 0$, and take $n= \lceil \eps^{-1/\alpha}\rceil$. Consider the principal-multi-agent problem with $N=\range{n}$ agents where, for each $i\in N$, the actions available are $A_i=\{\aone, \atwo\}$ with costs
	\[
	c_i(\aone) = \frac{1}{2n^{2-\alpha}}\quad\quad\textnormal{and}\quad\quad c_i(\atwo) = 0.
	\]
	The set of outcomes is $\Omega = \{\omega_1, \omega_2\}$, with $r_{\omega_1} = 1$ and $r_{\omega_2}$ = 0. The outcomes probabilities are defined so that 
	\[
	F_a(\omega_1) = 1 - \frac{\sum_{i\in N} \mathbbm{1}\left\{a_i = \atwo\right\}}{n}\quad\quad\forall a\in \A,
	\]
	where $\mathbbm{1}\{\cdot\}$ is the indicator function. Intuitively, the probability of inducing outcome $\omega_1$ decreases linearly with the number of agents playing the action with null cost $\atwo$.
	
	We prove the proposition in two steps. First, we show that $\optsing^{n^{1-\alpha}} \geq 1/2$, and then we show that $\optdc \leq \eps/2$, thus implying the desired result. 
	
	\paragraph{Lower bound on $\optsing^{n^{1-\alpha}}$.} Consider the payment function $p$ such that $p(\omega_1)=1/2$ and $p(\omega_2)=0$. Notice that the action profile $\hat a\in\A$, where $\hat a_i=\aone$ for all $i\in N$, is a best response to $p$ when the virtual cost function is $n^{1-\alpha}$, \emph{i.e.,} $\hat a\in B^{n^{1-\alpha}}(p)$. Indeed, we have that the utility for the agent when playing $\hat a$ is
	\[
	\sum_{\omega\in\Omega}F_{\hat a}(\omega)p(\omega) - n^{1-\alpha}\sum_{i\in N}c_i(\hat a_i) = \frac{1}{2} - n^{1-\alpha} \sum_{i\in N} \frac{1}{2n^{2-\alpha}} = 0.
	\] 
	On there other hand, for each action profile $a^\prime\in \A$, letting $\kappa = \sum_{i\in N} \mathbbm{1}\left\{a_i^\prime = \atwo\right\}$, we have that
	\[
	\sum_{\omega\in\Omega}F_{a^\prime}(\omega) p(\omega) - n^{1-\alpha}\sum_{i\in N} c_i(a_i) = \frac{1}{2}\left(1 - \frac{\kappa}{n}\right) - n^{1-\alpha} \frac{n-\kappa}{2n^{2-\alpha}} = 0.	
	\]
	Hence, we can conclude that
	\[
	\optsing^{n^{1-\alpha}} \geq \sum_{\omega\in\Omega}F_a(\omega)\left(r_{\omega} - p(\omega)\right) = \frac{1}{2}.
	\]
	
	\paragraph{Upper bound on $\optdc$.} Let $(\bar a, \bar p)$ be an optimal deterministic contract, \emph{i.e.,} a solution to \cref{prog:optdc}. By definition, it holds that
	\[
	\optdc = \sum_{\omega\in\Omega} F_{\bar a}(\omega)\left(r_{\omega} - \sum_{i\in N} \bar p^i(\omega)\right).
	\]
	Let $\None = \left\{i\in N: \bar a_i = \aone\right\}$, $\Ntwo = \left\{i\in N : \bar a_i = \atwo\right\}$ and $\kappa = |\Ntwo|$. If $\kappa = n$ (\emph{i.e.,} all agents play action $\atwo$), then the principal's utility is trivially at most $0$. Assume now $k < n$. The IC constraint imposes that for each $i\in \None$ the following must hold:
	\begin{align*}
		0 &\leq \sum_{\omega\in\Omega} \bar p^i(\omega)\left( F_{\bar a}(\omega) - F_{\atwo, \bar a_{-i}}(\omega) \right)- c_i(\aone) \\
		&= \left(1 - \frac{\kappa}{n}\right)\bar p^i(\omega_1) + \frac{\kappa}{n} \bar p^i(\omega_2) - \left(1 - \frac{\kappa+1}{n}\right)\bar p^i(\omega_1) - \frac{\kappa + 1}{n}\bar p^i(\omega_2) - \frac{1}{2n^{2-\alpha}}.
	\end{align*}
	Rearranging and using the fact that $\bar p^i(\omega_2) \geq 0$, we get
	\[
	\bar p^i(\omega_1) \geq \frac{1}{2 n^{1-\alpha}} \quad\forall i\in \None.
	\]
	Thus, the principal's utility satisfies the following upper bound:
	\begin{align*}	
		\optdc &= F_{\bar a}(\omega_1) \left(1 - \sum_{i\in\None}\bar p^i(\omega_1) - \sum_{i\in\Ntwo}\bar p^i(\omega_1) \right) - F_{\bar a}(\omega_2)\left(\sum_{i\in\None}\bar p^i(\omega_2) + \sum_{i\in\Ntwo}\bar p^i(\omega_2)\right) \\
		&\leq F_{\bar a}(\omega_1) \left(1 - \sum_{i\in\None}\bar p^i(\omega_1)\right) \\
		&\leq \left(1 - \frac{\kappa}{n}\right)\left(1 - \frac{n - \kappa}{2n^{1-\alpha}}\right).
	\end{align*}
	Then, we can conclude that the principal's utility satisfies: 
	\[
	\optdc \leq \max\left\{0, \max_{\kappa\in\range{n-1}}\left\{\left(1 - \frac{\kappa}{n}\right)\left(1 - \frac{n - \kappa}{2n^{1-\alpha}}\right)\right\}\right\}.
	\]
	To conclude the proof, it suffices to bound $\max_{\kappa\in\range{n-1}}\left\{\left(1 - \frac{\kappa}{n}\right)\left(1 - \frac{n - \kappa}{2n^{1-\alpha}}\right)\right\}$. Setting the derivative with respect to $\kappa$ equal to $0$, we get that the optimal unconstrained $\kappa^\star$ must satisfy 
	\[
	\frac{1}{n^{1-\alpha}} - \frac{1}{n} - \frac{\kappa^\star}{n^{2-\alpha}} = 0,
	\] 
	which gives $\kappa^\star = n - n^{1-\alpha}$, implying 
	\[
	\max_{\kappa\in\Reals}\left\{\left(1 - \frac{\kappa}{n}\right)\left(1 - \frac{n - \kappa}{2n^{1-\alpha}}\right)\right\} = \left(1 - \frac{n - n^{1-\alpha}}{n}\right)\frac{1}{2} = \frac{1}{2n^\alpha} \leq \frac{\eps}{2},
	\]
	where the last equation follows from the fact that, by definition of $n$, $n\geq \eps^{-1/\alpha}$. Then, we can conclude that
	\begin{align*}
		\optdc &\leq \max\left\{0, \max_{\kappa\in\range{n-1}}\left\{\left(1 - \frac{\kappa}{n}\right)\left(1 - \frac{n - \kappa}{2n^{1-\alpha}}\right)\right\}\right\} \\
		&\leq \max\left\{0, \max_{\kappa\in\Reals}\left\{\left(1 - \frac{\kappa}{n}\right)\left(1 - \frac{n - \kappa}{2n^{1-\alpha}}\right)\right\}\right\} \\
		&= \max\left\{0, \frac{\eps}{2}\right\}\\
		&\leq \frac{\eps}{2},
	\end{align*}
	as desired.
	This concludes the proof.
\end{proof}

\corollaryApprinsind*

\begin{proof}
	Consider the linear contract $p$ for the single-agent problem, where $p(\omega) = \frac{1}{1+\delta} r_\omega$ for each $\omega \in \Omega$. Let $a^\star$ be any action in $B^n(p)$, \emph{i.e.,} such that
	\[
	a^\star\in arg\max_{a\in\A} \left\{\sum_{\omega\in\Omega}F_a(\omega) \frac{r_\omega}{1+\delta}- n\sum_{i\in N}c_i(a_i)\right\}.
	\]
	Then, the following holds: 
	\begin{subequations}\label{corollaryAppinsindEq}
	\begin{align}
		\frac{1}{1+\delta}\sum_{\omega\in\Omega} F_{a^\star}(\omega)r_\omega &\geq \frac{1}{1+\delta}\sum_{\omega\in\Omega} F_{a^\star}(\omega)r_\omega - n\sum_{i\in N} c_i(a_i^\star)\\
		&= \max_{a\in\A} \left\{\sum_{\omega\in\Omega}F_a(\omega) \frac{r_\omega}{1+\delta}- n\sum_{i\in N}c_i(a_i)\right\} \\
		&= \frac{1}{1+\delta} \max_{a\in\A} \left\{\sum_{\omega\in\Omega}F_a(\omega) r_\omega- n(1+\delta)\sum_{i\in N}c_i(a_i)\right\} \\
		&=\frac{1}{1+\delta} \vsw^{n(1+\delta)}.
	\end{align}
	\end{subequations}
	Hence, the principal's utility in the single-agent problem is 
	\[
	\optsing^n \geq \sum_{\omega\in\Omega}F_{a^\star}(\omega)\left(r_\omega- p(\omega)\right) = \sum_{\omega\in\Omega}F_{a^\star}(\omega)\left(r_\omega - \frac{1}{1+\delta}r_\omega\right) = \frac{\delta}{1+\delta} \sum_{\omega\in\Omega}F_{a^\star}(\omega)r_\omega \geq \frac{\delta}{1+\delta}\vsw^{n(1+\delta)},
	\]
	where the first inequality follows from $ a^\star \in \B^n(p) $ and the last inequality from Equation~\eqref{corollaryAppinsindEq}.
	Then, the Corollary follows since by \cref{th:thvirtual} $\optdc\geq\optsing^n \ge  \frac{\delta}{1+\delta}\vsw^{n(1+\delta)}$.
	
	Moreover, note that the designed contract is linear.
\end{proof}

\leapprsw*

\begin{proof}
	Let $a^\star\in arg\max_{a\in\A}\left\{\sum_{\omega\in\Omega}F_a(\omega)r_\omega - c(a)\right\}$. Then, for all $\alpha > 0$, it holds that
	\begin{align*}
		\vsw^\alpha &= \max_{a\in\A} \left\{\sum_{\omega\in\Omega} F_a(\omega)r_\omega - \alpha c(a)\right\} \\
		&\geq \sum_{\omega\in\Omega} F_{a^\star}(\omega)r_\omega - \alpha c(a^\star) \\
		&= \left(\frac{\sum_{\omega\in\Omega} F_{a^\star}(\omega)r_\omega - \alpha c(a^\star)}{\sum_{\omega\in\Omega}F_{a^\star}(\omega)r_\omega - c(a^\star)}\right) \left(\sum_{\omega\in\Omega}F_{a^\star}(\omega)r_\omega - c(a^\star)\right)\\
		&= \frac{\sum_{\omega\in\Omega} F_{a^\star}(\omega)r_\omega - \alpha c(a^\star) + c(a^\star) - c(a^\star)}{\sum_{\omega\in\Omega}F_{a^\star}(\omega)r_\omega - c(a^\star)}\sw \\
		&= \left(1 - \frac{(\alpha - 1) c(a^\star)}{\sum_{\omega\in\Omega}F_{a^\star}(\omega)r_\omega - c(a^\star)}\right)\sw \\
		&\geq  (1 - (\alpha-1)\beta)\sw,
	\end{align*}
	which yields the result.
\end{proof}
\section{Proof of Theorem \ref{th:thbayoptcrc}}\label{app:app_bayoptcrc}
In this section, we prove \cref{th:thbayoptcrc}. In particular, we prove that there exists a polynomial-time algorithm that finds an $\eps$-optimal randomized contract in Bayesian principal-multi-agent problems for any $\epsilon>0$. 

First, given a profile of types $\lambda\in\Lambda$, we define the set of action profiles that are inducible by a randomized contract when the types are $\lambda$, which is defined as
\[
	\BAindR(\lambda)\coloneqq \left\{a: \exists (\mu, \pi)\,\textnormal{s.t.}\, \mu^\lambda(a)>0,(\mu, \pi) \textnormal{ is feasible for Program \ref{progr:bay-optrc}}\right\}.
\]
Moreover, let $\BAindR \coloneqq (\BAindR(\lambda))_{\lambda\in\Lambda}$.
Similarly to our procedure for finding an approximately optimal randomized contract in non-Bayesian settings, we will rely on an optimal solution $(\mu, x, z)$ to $\baylp(M, \A)$ for some $M >0$, and then we will use this solution to obtain a randomize contract $(\mu, \pi)$ where $\pi$ satisfies the following:
\begin{equation}\label{eq:bayconversion_app}
	\pi^{\lambda, i}_a(\omega) = \begin{cases}
		\frac{x^{\lambda, i}_a(\omega)}{\mu^\lambda(a)} &\textnormal{if } \mu^\lambda(a) > 0 \\
		0 &\textnormal{otherwise} 
	\end{cases}\quad\quad\forall i\in N,\forall \lambda\in\Lambda,\forall a\in\A,\forall\omega\in\Omega.
\end{equation}

By definition, $(\mu, \pi)$ guarantees an expected utility to the principal equal to $\opt_{\baylp(M, \A)}$. 
In order to prove the theorem, we follow two steps. First, we relate  $\opt_{\baylp(M, \BAindR)}$ to $\opt_{\baylp(\BAindR)}$. 

\begin{lemma}\label{le:lemmatransbay}
	For each $\eps>0$ and for each $(\mu, x, z)$ feasible for $\baylp(\BAindR)$, there exists $(\bar\mu, \bar x, \bar z)$ feasible for $\baylp(M, \BAindR)$ such that
	\[
	\sum_{\lambda\in\Lambda}\sum_{a\in\A}\sum_{\omega\in\Omega}G(\lambda)F_a(\omega)\left(\bar\mu(a)r_{\omega} - \sum_{i\in N} \bar x_a^i(\omega)\right) \geq \sum_{\lambda\in\Lambda}\sum_{a\in\A}\sum_{\omega\in\Omega}G(\lambda)F_a(\omega)\left(\mu(a)r_{\omega} - \sum_{i\in N} x_a^i(\omega)\right) - \eps,
	\]
	where $M\coloneqq M(|I|,\varepsilon,K)\in\poly(2^{|I|}, \nicefrac1\varepsilon,K)$ and $K=\max_{i\in N, a \in \A, \omega \in \Omega,\lambda\in\Lambda} x^{i,\lambda}_a(\omega)$.
\end{lemma}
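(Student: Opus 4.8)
The plan is to replay the proof of the non-Bayesian \Cref{le:lemmatrans} almost verbatim, while accounting for the two features specific to $\baylp$: the auxiliary variables $z$, and the fact that the incentive constraints~\eqref{eq:baylprc_ic}--\eqref{eq:baylprc_max} couple variables sitting at different type profiles (the bound on $z^i(\lambda,\lambda_i',a_i)$ refers to the $x$-variables at the perturbed profile $(\lambda_i',\lambda_{-i})$). The observation that neutralizes this coupling is that a \emph{complete} solution of the auxiliary linear program introduced below is itself a feasible point of $\baylp(\BAindR)$, so the final object will just be a convex combination of feasible points of $\baylp(\BAindR)$, plus some bookkeeping for the box constraint~\eqref{eq:baylprc_bound}.

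First I would introduce, for every $\lambda\in\Lambda$ and every $a\in\BAindR(\lambda)$, the linear program $\lpaux(\lambda,a)$ that maximizes $\mu^\lambda(a)$ subject to all constraints of $\baylp(\BAindR)=\baylp(+\infty,\BAindR)$, namely \eqref{eq:baylprc_ic}, \eqref{eq:baylprc_max}, \eqref{eq:baylprc_restr1}, \eqref{eq:baylprc_restr2}, \eqref{eq:baylprc_defmu} and $x\geq 0$. By the routine Bayesian analog of \Cref{le:lemmarelax}, a randomized contract feasible for Program~\ref{progr:bay-optrc} with $\mu^\lambda(a)>0$ (one exists because $a\in\BAindR(\lambda)$) yields, via the substitution $x_a^{\lambda,i}(\omega)=\mu^\lambda(a)\pi_a^{\lambda,i}(\omega)$ and $z^i(\lambda,\lambda_i',a_i)=\max_{a_i'}[\cdots]$, a feasible point of $\lpaux(\lambda,a)$ with strictly positive objective; hence $\lpaux(\lambda,a)$ has a positive optimum and, by the standard bit-complexity bound for basic LP solutions~\citep{bertsimas1997introduction}, admits an optimal solution $(\mu^{\lambda,a},x^{\lambda,a},z^{\lambda,a})$ with $\lVert x^{\lambda,a}\rVert_\infty,\lVert z^{\lambda,a}\rVert_\infty\leq\tau(|I|)$ and with its $\mu$-mass on $a$ at profile $\lambda$ at least $\tau(|I|)^{-1}$, where $\tau(x)\in 2^{\poly(x)}$. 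Each such triple is in particular feasible for $\baylp(\BAindR)$.

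Next I would set $L:=\sum_{\lambda\in\Lambda}|\BAindR(\lambda)|\leq|\Lambda|\,|\A|$ and $\alpha:=\frac{2n\tau(|I|)|\A|-\eps}{2n\tau(|I|)|\A|}\in[0,1]$, and define
\[
(\bar\mu,\bar x,\bar z)\ :=\ \alpha\,(\mu,x,z)\ +\ \frac{1-\alpha}{L}\sum_{\lambda'\in\Lambda}\ \sum_{a'\in\BAindR(\lambda')}(\mu^{\lambda',a'},x^{\lambda',a'},z^{\lambda',a'}).
\]
Convexity of the feasible region of $\baylp(\BAindR)$ gives \eqref{eq:baylprc_ic}, \eqref{eq:baylprc_max}, \eqref{eq:baylprc_restr1}, \eqref{eq:baylprc_restr2}, \eqref{eq:baylprc_defmu} for free; for \eqref{eq:baylprc_bound} I would argue that if $\bar x_a^{\lambda,i}(\omega)>0$ then \eqref{eq:baylprc_restr2} (obeyed by every triple in the mixture) forces $a\in\BAindR(\lambda)$, so $\mu^{\lambda,a}$ is one of the averaged solutions and $\bar\mu^\lambda(a)\geq\frac{1-\alpha}{L}\tau(|I|)^{-1}$, while $\bar x_a^{\lambda,i}(\omega)\leq\alpha K+(1-\alpha)\tau(|I|)\leq K+\tau(|I|)$; hence it is enough to take $M:=\frac{2n|\A|\,L\,(K+\tau(|I|))\,\tau(|I|)^2}{\eps}$, which lies in $\poly(2^{|I|},\nicefrac1\eps,K)$ since $n,|\A|,|\Lambda|,\tau(|I|)\in 2^{\poly(|I|)}$. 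Since the objective is linear in $(\mu,x)$ and independent of $z$, I would finish by writing $\mathrm{obj}(\bar\mu,\bar x)=\alpha\,\mathrm{obj}(\mu,x)+\frac{1-\alpha}{L}\sum_{\lambda',a'}\mathrm{obj}(\mu^{\lambda',a'},x^{\lambda',a'})$, bounding the first term by $\mathrm{obj}(\mu,x)-(1-\alpha)$ via $\mathrm{obj}\leq 1$ and each of the remaining $L$ terms by $-n\tau(|I|)|\A|$ via $x^{\lambda',a'}\equiv 0$ off $\BAindR(\lambda)$, $\lVert x^{\lambda',a'}\rVert_\infty\leq\tau(|I|)$, $\sum_\omega F_a(\omega)=\sum_\lambda G(\lambda)=1$ and $r\geq 0$; adding the two estimates and substituting $\alpha$ yields $\mathrm{obj}(\bar\mu,\bar x)\geq\mathrm{obj}(\mu,x)-2(1-\alpha)n\tau(|I|)|\A|=\mathrm{obj}(\mu,x)-\eps$.

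The main obstacle, and the only real departure from \Cref{le:lemmatrans}, is precisely the cross-type-profile entanglement in \eqref{eq:baylprc_ic}--\eqref{eq:baylprc_max}, which rules out a naive ``repair each $\lambda$ in isolation'' construction; once one observes that each auxiliary solution is already globally feasible for $\baylp(\BAindR)$, the proof collapses to the same convex-combination-plus-bounds arithmetic as in the non-Bayesian case, the only secondary routine ingredient being the Bayesian counterpart of the relaxation \Cref{le:lemmarelax} used to guarantee that the auxiliary programs are nonempty with positive optimum.
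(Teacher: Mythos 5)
Your proposal is correct and follows essentially the same route as the paper's proof: the same auxiliary LPs $\lpaux(\bar a,\bar\lambda)$ maximizing $\mu^{\bar\lambda}(\bar a)$ with bit-complexity bounds $\tau(|I|)$, the same convex combination with weight $\alpha$ close to $1$, the same verification of the box constraint via the $\tau(|I|)^{-1}$ mass lower bound, and the same $\eps$-splitting of the objective loss. The only deviations are cosmetic (uniform weights over all pairs $(\lambda',a')$ instead of the paper's nested uniform averaging, and a slightly different but equally valid choice of $M$), and your remark that each auxiliary solution is globally feasible for $\baylp(\BAindR)$, which neutralizes the cross-type coupling, is exactly the mechanism the paper relies on.
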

\begin{proof}
	To prove this lemma, we first construct $(\bar\mu, \bar x, \bar z)$ starting from $(\mu, x, z)$. Then, we show that $(\bar\mu, \bar x, \bar z)$ is feasible for $\baylp(M, \BAindR)$. Finally, we conclude the proof bounding the gap of the objective values achieved by $(\mu, x, z)$ and $(\bar\mu, \bar x, \bar z)$.
	
	\paragraph{Definition of $(\bar\mu, \bar x, \bar z)$.}
	For all $\lambda\in\Lambda$ and $a\in\BAindR(\lambda)$, let us introduce the following linear program, which we denote as $\lpaux(a, \lambda)$:
	\begin{program}\label{eq:lpaux_trans}
		\begin{align} 
			&\max_{\mu, x, z} \,\, \mu^\lambda(a) \hspace{1cm}\textnormal{s.t.}\hspace{9.7cm}\, \nonumber\\
			& \specialcell{\hspace{2mm}\sum_{a\in\A} x_a^{\lambda, i}(\omega)F_a^\lambda(\omega) - c_i^{\lambda_i}(a_i) \geq \sum_{a_i\in A_i} z^i(\lambda_i, \lambda_i^\prime, a_i)\hfill\forall i\in N,\,\forall \lambda\in\Lambda,\,\forall\lambda_i^\prime\in\Lambda_i}\label{eq:baylpaux_trans_ic}\\
			& \hspace{2mm}z^i(\lambda, \lambda_i^\prime, a_i) \nonumber\\
			&\specialcell{\hspace{.6cm} \geq\hspace{-.3cm}\sum_{a_{-i}\in\A_{-i}} \hspace{-.3mm}\sum_{\omega\in\Omega} \hspace{-.4mm} x^{(\lambda_i^\prime, \lambda_{-i}), i}_{(a_i, a_{-i})}\hspace{-.5mm}(\omega) F^{\lambda}_{(a_i^\prime, a_i)}\hspace{-.5mm}(\omega) - c_i^{\lambda_i}\hspace{-.5mm}(a_i^\prime)\hfill\forall i\in N,\forall\lambda\in\Lambda,\forall\lambda_i^\prime\in\Lambda_i,\forall a_i, a_i^\prime\in A_i} \label{eq:baylpaux_trans_max}\\
			&\specialcell{\hspace{2mm}\mu^\lambda(a) = 0  \hfill\forall\lambda\in\Lambda,\forall a\in \A\setminus\A^\prime(\lambda)}\label{eq:baylpaux_trans_restr1} \\
			&\specialcell{\hspace{2mm}x_a^{\lambda,i}(\omega) = 0  \hfill \forall i\in N,\forall\lambda\in\Lambda,\forall a\in\A\setminus\A^\prime(\lambda),\forall\omega\in\Omega}\label{eq:baylpaux_trans_restr2}\\
			&\specialcell{\hspace{2mm}x^{\lambda,i}_a(\omega) \geq 0 \hfill\forall i\in N,\forall \lambda\in\Lambda,\forall a\in\A,\forall\omega\in\Omega}\\
			&\specialcell{\hspace{2mm}\mu^{\lambda} \in \Delta(\A)\hfill\forall\lambda\in\Lambda}\label{eq:baylpaux_trans_defmu}.
		\end{align}
	\end{program}
	We make the following considerations about the bit-complexity of the solutions of  $\lpaux(a, \lambda)$.
	Denote with $|I|$ the instance size of the Bayesian principal-multi-agent problem, and for all $\bar\lambda\in\Lambda$ and $\bar a \in\BAindR(\bar\lambda)$ define as $(\bar \mu^{(\bar\lambda,\bar a)}, \bar x^{(\bar\lambda,\bar a)}, \bar z^{(\bar\lambda,\bar a)})$ the solution of $\lpaux(\bar a, \bar\lambda)$. 
	Note that we are solving $\lpaux(\bar a,\bar\lambda)$ only for $\bar a\in\AindR(\bar\lambda)$. Thus, there are feasible solutions to Program \ref{progr:bay-optrc} with $\mu^{\bar\lambda}(\bar a)>0$,  and hence solutions to $\lpaux(\bar a,\bar\lambda)$ with $\mu^{\bar\lambda}(\bar a)>0$.
	Then, it is well known that for all $\bar\lambda\in\Lambda$ and for all $\bar a \in \BAindR(\bar\lambda)$ there exists a solution $(\bar \mu^{(\bar\lambda,\bar a)}, \bar x^{(\bar\lambda,\bar a)}, \bar z^{(\bar\lambda,\bar a)})$ such that $\lVert \bar x^{(\bar\lambda,\bar a)}\rVert_\infty\le \tau(|I|)$ and $\bar \mu^{(\bar\lambda,\bar a), \bar\lambda}(\bar a)\ge \tau(|I|)^{-1}$, where $\tau:\mathbb{N}\to\mathbb{R}^+$ is a function such that $\tau (x)\in 2^{\poly(x)}$ \citep{bertsimas1997introduction}.

	We define $(\bar \mu,\bar x, \bar z)$ as follows:
	\begin{align}
		&\bar \mu^\lambda(a) := \alpha \mu^\lambda(a) + (1-\alpha) \sum_{\bar\lambda\in\Lambda}\frac{1}{|\Lambda||\BAindR(\lambda)|}\sum_{\bar a\in \BAindR(\lambda)}\mu^{(\bar\lambda,\bar a),\lambda}(a)\hspace{2.1cm}\forall a\in\A,\forall\lambda\in\Lambda,\label{eq:defmubar_transbay}\\
		&\specialcell{\bar x_a^i(\omega) := \alpha x_a^i(\omega) + (1-\alpha) \sum_{\bar\lambda\in\Lambda} \frac{1}{|\Lambda||\BAindR(\lambda)|}\sum_{\bar a\in \AindR}x^{(\bar\lambda,\bar a), i}_a(\omega)\hfill\forall i\in N,\,\forall a\in\A,\,\forall\omega\in\Omega}\label{eq:defxbar_transbay},\\
		&\bar z^i(\lambda, \lambda_i^\prime, a_i) := \alpha z^i(\lambda, \lambda_i^\prime, a_i) \nonumber\\
		&\specialcell{\hspace{2cm}+ (1-\alpha) \sum_{\bar\lambda\in\Lambda} \sum_{\bar a\in \AindR}\frac{\bar z^{(\bar\lambda,\bar a),i}(\lambda, \lambda_i^\prime, a_i)}{|\Lambda||\BAindR(\lambda)|}\hfill\forall i\in N,\forall a_i\in A_i,\forall\lambda\in\Lambda,\forall\lambda_i^\prime\in\Lambda_i,}\label{eq:defzbar_transbay}
	\end{align}
	where $(\mu,x, z)$ is a feasible solution to $\baylp(\BAindR)$ and $\alpha:=\frac{2|\A|n\tau(|I|)-\eps}{2|\A|n\tau(|I|)}\in[0,1]$.

	\paragraph{Feasibility of $(\bar\mu, \bar x, \bar z)$.} We now proceed to prove that $(\bar\mu, \bar x, \bar z)$ is feasible for $\baylp(M, \BAindR)$, where $M:={2 |\Lambda||\A|^2 n \tau(|I|)^3K\eps^{-1}}{}$.
	First, note that for each $\bar\lambda\in\Lambda$ and for each $\bar a\in\BAindR(\bar\lambda)$ Constraints \eqref{eq:baylpaux_trans_ic}-\eqref{eq:baylpaux_trans_restr2} and \eqref{eq:baylpaux_trans_defmu} in $\lpaux(\bar a)$ are equivalent to Constraints \eqref{eq:baylprc_ic}, \eqref{eq:baylprc_max}, and \eqref{eq:baylprc_restr1}-\eqref{eq:baylprc_defmu} of $\baylp(M, \BAindR)$. This shows that Constraints \eqref{eq:baylprc_ic} and \eqref{eq:baylprc_restr1}-\eqref{eq:baylprc_defmu} of $\baylp(M, \BAindR)$ are satisfied by $(\mu^{(\bar\lambda,\bar a)}, x^{(\bar\lambda,\bar a)}, z^{(\bar\lambda,\bar a)})$ for every $\lambda\in\Lambda$ and for every $\bar a\in\BAindR(\lambda)$. $(\mu, x, z)$ satisfies the same constraints by construction (it is the solution of $\baylp(\BAindR)$ that includes the same constraints). 
	Then, the same constraints of $\baylp(M, \BAindR)$ are satisfied by $(\bar \mu, \bar x, \bar z)$, as it is a convex combination of $(\mu, x, z)$ and $(\mu^{(\bar\lambda,\bar a)}, x^{(\bar\lambda,\bar a)}, z^{(\bar\lambda,\bar a)})_{\bar\lambda\in\Lambda, \bar a\in\BAindR(\lambda)}$. Now, we need to verify that also Constraint~\eqref{eq:baylprc_bound} of $\baylp(M, \BAindR)$ is satisfied by $(\bar \mu,\bar x, \bar z)$. This can be showed by the following inequalities:
	\begin{align*}
		&\frac{\bar x_a^{\lambda, i}(\omega)}{\bar\mu^\lambda(a)} := \frac{\alpha x_a^{\lambda, i}(\omega) + (1-\alpha) \sum_{\bar\lambda\in\Lambda}\sum_{\bar a\in \BAindR(\bar\lambda)}x^{(\bar\lambda,\bar a),\lambda,i}_a(\omega)/(|\Lambda||\BAindR(\bar\lambda)|)}{\mu^\lambda(a) + (1-\alpha)\sum_{\bar\lambda\in\Lambda} \sum_{\bar a\in \BAindR(\bar\lambda)}\mu^{(\bar\lambda,\bar a),\lambda}(a)/(|\Lambda||\BAindR(\bar\lambda)|)}\hspace{3.5cm}\,\\
		&\specialcell{\hspace{4mm}\leq \frac{\alpha K + (1-\alpha) \sum_{\bar\lambda\in\Lambda}\sum_{\bar a\in \BAindR(\bar\lambda)}x^{(\bar\lambda,\bar a),\lambda,i}_a(\omega)/(|\Lambda||\BAindR(\bar\lambda)|)}{(1-\alpha)\sum_{\bar\lambda\in\Lambda} \sum_{\bar a\in \BAindR(\bar\lambda)}\mu^{(\bar\lambda,\bar a),\lambda}(a)/(|\Lambda||\BAindR(\bar\lambda)|)} \hfill (x_a^{\lambda,i}(\omega)\le K \textnormal{ and } \mu^\lambda(a)\ge 0)}\\
		&\specialcell{\hspace{4mm}\leq \frac{|\Lambda||\A|(\alpha K + (1-\alpha) \tau(|I|))}{(1-\alpha)\sum_{\bar\lambda\in\Lambda} \sum_{\bar a\in \BAindR(\bar\lambda)}\mu^{(\bar\lambda,\bar a),\lambda}(a)} \hfill (x_a^{(\bar\lambda,\bar a), \lambda, i}(\omega) \leq \tau(|I|) \textnormal{ and } \BAindR(\bar\lambda)\subseteq\A)}\\
		&\specialcell{\hspace{4mm}\leq \frac{\tau(|I|)^2K|\Lambda||\A|}{(1-\alpha)}\hfill \makebox[0.5\linewidth][r]{\text{($\sum_{\bar\lambda\in\Lambda} \sum_{\bar a\in \BAindR(\bar\lambda)}\mu^{(\bar\lambda,\bar a),\lambda}(a)\geq 1/\tau(|I|)$)}}}\\
		&\specialcell{\hspace{4mm}=\frac{2 \tau(|I|)^3nK|\Lambda||\A|^2}{\eps}\hfill (\textnormal{Definition of }\alpha)}\\
		&\specialcell{\hspace{4mm}=M \hfill (\textnormal{Definition of } M)}
	\end{align*}
	This proves that $(\bar\mu,\bar x, \bar z)$ is feasible for $\baylp(M, \BAindR)$.
	
	\paragraph{Gap in the Objective Functions.} We now study the relationship in the objective function of the just defined $(\bar\mu,\bar x,\bar z)$ and $(\mu, x, z)$. In particular, we will show that $(\bar\mu,\bar x,\bar z)$ is only slightly worse (order of $O(\eps)$)  than $(\mu, x, z)$.
	
	By definition of $(\bar\mu,\bar x, \bar z)$ we have:
	\begin{align*}
		&\sum_{\lambda\in\Lambda}\sum_{a\in\A}\sum_{\omega\in\Omega}G(\lambda)F_a^\lambda(\omega)\left(\bar\mu^\lambda(a)r_{\omega} - \sum_{i\in N} \bar x_a^{\lambda,i}(\omega)\right)\notag \\
		&:= \underbrace{\alpha\left[\sum_{\lambda\in\Lambda}\sum_{a\in\A}\sum_{\omega\in\Omega}G(\lambda)F_a^\lambda(\omega)\left(\mu^\lambda(a)r_{\omega} - \sum_{i\in N} x_a^{\lambda,i}(\omega)\right)\right] }_{(A)} \\
		&\hspace{6mm}+ \underbrace{(1-\alpha)\left[\sum_{\lambda\in\Lambda}\sum_{a\in\A}\sum_{\omega\in\Omega}G(\lambda)F_a^\lambda(\omega)\sum_{\bar\lambda\in\Lambda}\sum\limits_{\bar a\in\BAindR(\bar\lambda)}\frac{1}{|\Lambda||\BAindR(\bar\lambda)|}\left(\mu^{(\bar\lambda,\bar a),\lambda}(a)r_{\omega} - \sum_{i\in N}  x_a^{(\bar\lambda,\bar a),\lambda,i}(\omega)\right)\right]}_{(B)}. \nonumber
	\end{align*}
	We analyze the two terms separately.
	For $(A)$, define \[V:=\sum_{\lambda\in\Lambda}\sum_{a\in\A}\sum_{\omega\in\Omega}G(\lambda)F_a^\lambda(\omega)\left(\mu^\lambda(a)r_{\omega} - \sum_{i\in N} x_a^{\lambda, i}(\omega)\right),\] and consider the following inequalities:
	\begin{align*}
		(A)&=\alpha V\\
		&\ge V-(1-\alpha)\tag{$V\le 1$}\\
		&= V - \frac{\eps}{2|\A|n\tau(|I|)}\tag{Definition of $\alpha$}\\
		&\ge V-\eps/2.
	\end{align*}
	
	For $(B)$, consider the following inequalities:
	\begin{align*}
		(B)&= (1-\alpha)\left[\sum_{\lambda\in\Lambda}\sum_{a\in\A}\sum_{\omega\in\Omega}G(\lambda)F_a^\lambda(\omega)\sum_{\bar\lambda\in\Lambda}\sum\limits_{\bar a\in\BAindR(\bar\lambda)}\frac{1}{|\Lambda||\BAindR(\bar\lambda)|}\left(\mu^{(\bar\lambda,\bar a),\lambda}(a)r_{\omega} - \sum_{i\in N}  x_a^{(\bar\lambda,\bar a),\lambda,i}(\omega)\right)\right]\\
		&\specialcell{\ge-(1-\alpha)\sum_{\lambda\in\Lambda}\sum_{a\in\A}\sum_{\omega\in\Omega}\sum_{i\in N}G(\lambda)F_a^\lambda(\omega)\sum_{\bar\lambda\in\Lambda}\sum\limits_{\bar a\in\BAindR(\bar\lambda)}\frac{x_a^{(\bar\lambda,\bar a),\lambda,i}(\omega)}{|\Lambda||\BAindR(\bar\lambda)|}  \hfill (r_\omega\ge 0)}\\
		&\specialcell{\ge -(1-\alpha)\sum_{\lambda\in\Lambda}\sum_{a\in\A}\sum_{\omega\in\Omega}\sum_{i\in N}G(\lambda)F_a^\lambda(\omega) \tau(|I|)\hfill (x_a^{(\bar\lambda,\bar a),\lambda,i}(\omega) \le \tau(|I|))} \\
		&\specialcell{\ge -(1-\alpha) |\A|n\tau(|I|)} \hfill\,\\
		&\specialcell{\ge -\eps/2,\hfill (\textnormal{Definition of }\alpha)}
	\end{align*}
	
	which proves that 
	\[
	(A)+(B)\ge \sum_{\lambda\in\Lambda}\sum_{a\in\A}\sum_{\omega\in\Omega}G(\lambda)F_a^\lambda(\omega)\left(\mu^\lambda(a)r_{\omega} - \sum_{i\in N}  x_a^{\lambda, i}(\omega)\right)-\eps,
	\]
	as desired.
\end{proof}

The result of \cref{le:lemmatransbay} implies that $\opt_{\baylp(M, \BAindR)} \ge \opt_{\baylp(\BAindR)}$. As the next step, we show that $\baylp(\BAindR)$ yields a relaxation of Program \ref{prog:baylprc}.

\begin{lemma}\label{le:lemmarelaxationbay}
	It holds that $\opt_{\baylp(\BAindR)}\ge \boptcrc$.
\end{lemma}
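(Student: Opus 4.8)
The plan is to mirror the proof of \Cref{le:lemmarelax}: I would show that $\baylp(\BAindR)$, which by convention is $\baylp(+\infty, \BAindR)$, is a genuine relaxation of the quadratic Program~\ref{progr:bay-optrc}. Concretely, I would take an arbitrary feasible solution $(\mu, \pi)$ of Program~\ref{progr:bay-optrc} and exhibit a feasible solution $(\mu, x, z)$ of $\baylp(\BAindR)$ attaining the same objective value. Since this holds for every feasible $(\mu, \pi)$, taking the supremum over such contracts yields $\opt_{\baylp(\BAindR)}\ge\boptcrc$.

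For the construction, I would set $x_a^{\lambda, i}(\omega):=\mu^\lambda(a)\,\pi_a^{\lambda, i}(\omega)$ for all $i\in N$, $\lambda\in\Lambda$, $a\in\A$, $\omega\in\Omega$ (the inverse of the recovery map~\eqref{eq:bayconversion_app}), and set the auxiliary variables to the tightest admissible value, namely
\[
z^i(\lambda, \lambda_i^\prime, a_i):=\max_{a_i^\prime\in A_i}\ U_i^{\mu, \pi}\!\left(\lambda_i\to\lambda_i^\prime,\, a_i\to a_i^\prime\,\vert\,\lambda_{-i}\right),
\]
rewritten in terms of $x$ using the identity $x=\mu\pi$. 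With these choices, constraint~\eqref{eq:baylprc_max} holds by the definition of the maximum, while constraint~\eqref{eq:baylprc_ic} reduces, after the substitution $x=\mu\pi$, exactly to the DSIC constraint~\eqref{eq:bay-optrc_ic} satisfied by $(\mu, \pi)$. Non-negativity and the bound $x\le M\mu$ in~\eqref{eq:baylprc_bound} are immediate since $M=+\infty$ and $\mu, \pi\ge 0$; the distributional constraint~\eqref{eq:baylprc_defmu} is inherited verbatim. The support restrictions~\eqref{eq:baylprc_restr1} and~\eqref{eq:baylprc_restr2} follow from the very definition of $\BAindR(\lambda)$: if $a\notin\BAindR(\lambda)$, then no feasible contract places positive probability on $a$ at the type profile $\lambda$, so $\mu^\lambda(a)=0$, and hence $x_a^{\lambda, i}(\omega)=\mu^\lambda(a)\pi_a^{\lambda, i}(\omega)=0$. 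Finally, substituting $x=\mu\pi$ back into the LP objective gives $\sum_{\lambda\in\Lambda} G(\lambda)\sum_{a\in\A}\sum_{\omega\in\Omega}\mu^\lambda(a) F_a^\lambda(\omega)\bigl(r_\omega-\sum_{i\in N}\pi_a^{\lambda, i}(\omega)\bigr)$, which is precisely the principal's expected utility under $(\mu, \pi)$ in Program~\ref{progr:bay-optrc}.

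The step that requires the most care — though still routine — is verifying that this single choice of $z$ makes constraints~\eqref{eq:baylprc_ic} and~\eqref{eq:baylprc_max} hold \emph{simultaneously}; this is the only genuinely new ingredient relative to \Cref{le:lemmarelax}, where no such linearization of the inner maximum was needed. One observes that setting $z^i(\lambda,\lambda_i^\prime,a_i)$ equal to the pointwise maximum over deviations $a_i^\prime$ is both an upper bound on each individual deviation term (so~\eqref{eq:baylprc_max} holds) and small enough that summing over $a_i\in A_i$ preserves the slack already present in the DSIC inequality~\eqref{eq:bay-optrc_ic} (so~\eqref{eq:baylprc_ic} holds). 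Everything else is a direct transcription of the non-Bayesian argument, carrying the outer expectation over $\lambda\sim G$ along unchanged.
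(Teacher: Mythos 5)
Your proposal is correct and follows essentially the same route as the paper's proof: the paper likewise sets $x_a^{\lambda,i}(\omega)=\mu^\lambda(a)\pi_a^{\lambda,i}(\omega)$ and $z^i(\lambda,\lambda_i^\prime,a_i)=\max_{a_i^\prime\in A_i}U_i^{\mu,\pi}(\lambda_i\to\lambda_i^\prime,a_i\to a_i^\prime\vert\lambda_{-i})$, checks that~\eqref{eq:baylprc_ic} collapses to the DSIC constraint~\eqref{eq:bay-optrc_ic} while~\eqref{eq:baylprc_max} holds by definition of the maximum, and handles the support restrictions via the definition of $\BAindR(\lambda)$ and the objective via the substitution $x=\mu\pi$, exactly as you do.
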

\begin{proof}
	In order to prove the result, it is enough to show that for each $(\mu, \pi)$ feasible to Program \ref{prog:baylprc}, there exists a $(\mu, x, z)$ that is feasible to $\baylp(\BAindR)$ achieving the same  value. To this extent, let $(\mu, \pi)$ be a feasible solution to Program \ref{prog:baylprc} and define $x$ and $z$ as:
	\begin{align*}
		&x_a^{\lambda, i}(\omega)\coloneqq\mu^\lambda(a)\pi_a^{\lambda, i}(\omega)\hspace{6cm}\,\forall i\in N,\forall\lambda\in\Lambda \forall a\in\A, \omega\in \Omega \\
		&\specialcell{z^i(\lambda, \lambda_i^\prime, a_i) \coloneqq \max_{a_i^\prime\in A_i} U_i^{\mu, \pi}\left(\lambda_i\to\lambda_i^\prime, a_i\to a_i^\prime\vert\lambda_{-i}\right) \hfill\forall i\in N,\forall\lambda\in\Lambda,\forall\lambda_i^\prime\in\Lambda_i,\forall a_i\in A_i}.
	\end{align*}
	First, let us focus on the objective function and notice that by definition of $x$ it holds that:
	\begin{align*}
		\sum_{\lambda\in\Lambda}&\sum_{a\in\A}\sum_{\omega\in\Omega} G(\lambda)\left(\mu^\lambda(a) F^\lambda_a(\omega)r_\omega- \sum_{i\in N} \mu^\lambda(a)F^\lambda_a(\omega)\pi_a^{\lambda, i}(\omega)\right) \\
		&= \sum_{\lambda\in\Lambda}\sum_{a\in\A}\sum_{\omega\in\Omega} G(\lambda)\left(\mu^\lambda(a) F^\lambda_a(\omega)r_\omega- \sum_{i\in N} F^\lambda_a(\omega)x_a^{\lambda, i}(\omega)\right),
	\end{align*}
	which proves the equivalence in the objective function of Program \ref{progr:bay-optrc} in $(\mu,\pi)$ and Program \ref{prog:baylprc} in $(\mu, x, z)$.
	
	Now, let us prove the feasibility of $(\mu, x, z)$ for Program \ref{progr:bay-optrc}.  For all $i\in N$, for all $\lambda\in\Lambda$, and for all $\lambda_i^\prime\in\Lambda_i$, it holds that
	\begin{align*}
		\sum_{a\in\A}&x_a^{\lambda, i}(\omega)F_a^\lambda(\omega) - c_i^{\lambda_i}(a_i) \hspace{10cm}\,\\
		&\specialcell{= \sum_{a_i\in A_i}\left(\sum_{a_{-i}\in A_{-i}} F^\lambda_{(a_i, a_{-i})}(\omega) \mu^{\lambda}(a_i, a_{-i}) \pi^{\lambda, i}_{(a_i, a_{-i})}(\omega) - c_i^{\lambda_i}(a_i)\right) \hfill (\textnormal{Definition of } x)} \\
		&\specialcell{= \sum_{a_i\in A_i} U_i^{\mu, \pi}(\lambda_i, a_i\vert\lambda_{-i})\hfill(\textnormal{Definition of } U_i^{\mu, \pi})} \\
		&\specialcell{\ge \sum_{a_i\in A_i}\max_{a_i^\prime\in A_i} U_i^{\mu, \pi}(\lambda_i\to\lambda_i^\prime, a_i\to a_i^\prime\vert\lambda_{-i})\hfill(\textnormal{Feasibility of } (\mu, \pi))} \\
		&\specialcell{=\sum_{a_i\in A_i} z^i(\lambda, \lambda_i^\prime, a_i),\hfill(\textnormal{Definition of $z$})}
	\end{align*}
	which proves that $(\mu, x, z)$ satisfies Constraint \eqref{eq:baylprc_ic} of Program \ref{prog:baylprc}. Moreover, notice that for all $i\in N$, for all $\lambda\in\Lambda$, for all $\lambda_i^\prime\in\Lambda_i$, and for all $a_i,a_i^\prime\in A_i$ it holds that
	\begin{align*}
		z^i(\lambda, \lambda_i^\prime, a_i) &= \max_{a_i^\star\in A_i} U_i^{\mu, \pi}(\lambda_i\to\lambda_i^\prime, a_i\to a_i^\star\vert\lambda_{-i}) \hspace{6.7cm} (\textnormal{Definition of } z)\\
		&\ge U_i^{\mu, \pi}(\lambda_i\to\lambda_i^\prime, a_i\to a_i^\prime\vert\lambda_{-i}) \\
		&\specialcell{= \hspace{-2mm}\sum_{a_{-i}\in\A_{-i}}\hspace{-2mm} \mu^{(\lambda_i^\prime, \lambda_{-i})}(a_i, a_{-i})\left(\sum_{\omega\in\Omega} \pi^{(\lambda_i^\prime, \lambda_{-i}), i}_{(a_i, a_{-i})}(\omega)F^{(\lambda_i, \lambda_{-i})}_{(a_i^\prime, a_{-i})}(\omega) - c_i^{\lambda_i}(a_i^\prime)\right) \hfill(\textnormal{Definition of } U_i^{\mu, \pi})} \\
		&\specialcell{= \sum_{a_{-i}\in\A_{-i}} \sum_{\omega\in\Omega} x^{(\lambda_i^\prime, \lambda_{-i}), i}(\omega) F^{(\lambda_i, \lambda_{-i})}_{(a_i^\prime, a_{-i})}(\omega) - c_i^{\lambda_i}(a_i^\prime), \hfill (\textnormal{Definition of } x)}
	\end{align*}
	which proves that also Constraint~\eqref{eq:baylprc_max} of Program~\ref{prog:baylprc} is satisfied by $(\mu, x, z)$.
	
	Additionally, notice that Constraints~\eqref{eq:baylprc_bound} and \eqref{eq:baylprc_defmu} of $\baylp(\BAindR)$ are trivially satisfied by $\mu$ and $x$. Similarly, by definition of $\BAindR$, we have that for every $\lambda\in\Lambda$ and for every $a\in\BAindR(\lambda)$ it holds $\mu^\lambda(a) = 0$. Thus, $x_a^{\lambda, i}(\omega) = 0$ for all $i\in N$ and $\omega\in\Omega$. This proves that also Constraints~\eqref{eq:baylprc_restr1} and \eqref{eq:baylprc_restr2} are satisfied by $(\mu, x, z)$. 
	Thus, we can conclude that $(\mu, x, z)$ is feasible to $\baylp(\BAindR)$. This concludes the proof.
\end{proof}

Now, we can combine the results of \cref{le:lemmatransbay} and \cref{le:lemmarelaxationbay} to obtain our main result of this section. 

\thbayoptcrc*
\begin{proof}
	Let $(\mu, x, z)$ be an optimal solution to $\baylp(M, \A)$, where $M=M(|I|,\varepsilon, K)$ is defined as per \Cref{le:lemmarelaxationbay} and $K=\tau({|I|})$. Note that $(\mu, x, z)$ can be find in time $\poly(|I|, \log(M))=\poly(|I|, \varepsilon^{-1})$ solving the linear program $\lprc(M, \A)$. Indeed, it is well known that there exists a solution $( \mu,  x, z)$ such that $\lVert  x\rVert_\infty\le \tau(|I|)$, where $\tau:\mathbb{N}\to\mathbb{R}^+$ is the same function of \Cref{le:lemmatransbay} such that $\tau (x)\in 2^{\poly(x)}$, and this solution can be found in polynomial time \citep{bertsimas1997introduction}.

	By using \Cref{le:lemmarelaxationbay} and \Cref{le:lemmatransbay}, we can obtain the following inequalities:
	\begin{align}
		\opt_{\baylp(M, \A)}&\ge \opt_{\baylp(M, \BAindR)}\tag{$\BAindR(\lambda)\subseteq\A\,\,\forall\lambda\in\Lambda$}\\
		&\ge \opt_{\baylp(\BAindR)}-\varepsilon\tag{\Cref{le:lemmatrans}}\\
		&\ge \boptcrc-\varepsilon\tag{\Cref{le:lemmarelax}}.
	\end{align}
	
	We used \Cref{le:lemmatransbay} with $K=\tau(|I|)$ as we used it for the optimal solution of $\lprc(\BAindR)$, which is guaranteed to be bounded by $K=\tau(|I|)$.
	
	Then,  define $(\mu, \pi)$ from $(\mu, x, z)$, where:
	\[
	\pi^{\lambda, i}_a(\omega) \coloneq \begin{cases}
		\frac{x^{\lambda, i}_a(\omega)}{\mu^\lambda(a)} & \textnormal{if } \mu^\lambda(a) > 0 \\
		0 & \textnormal{otherwise}
	\end{cases}\quad\forall i\in N,\forall\lambda\in\Lambda,\forall a\in\A,\forall\omega\in\Omega.
	\]
	It follows that:
	\begin{align*}
		\opt_{\baylp(M, \A)} &\coloneq \sum_{\lambda\in\Lambda}\sum_{a\in\A}\sum_{\omega\in\Omega}G(\lambda)F^{\lambda}_a(\omega)\left(\mu^\lambda(a)r_\omega - \sum_{i\in N} x_a^{\lambda, i}(\omega)\right) \\ 
		&= \sum_{\lambda\in\Lambda}\sum_{a\in\A}\sum_{\omega\in\Omega}G(\lambda)F^{\lambda}_a(\omega)\mu^\lambda(a)\left(r_\omega - \sum_{i\in N} \pi_a^{\lambda, i}(\omega)\right).
	\end{align*}
	This, together with the previous inequalities, shows that:
	\[
	\sum_{\lambda\in\Lambda}\sum_{a\in\A}\sum_{\omega\in\Omega}G(\lambda) \mu^\lambda(a) F_a^\lambda(\omega) \left( r_\omega - \sum_{i\in N} \pi^{\lambda, i}_a(\omega)\right)\ge \boptcrc-\varepsilon.
	\]
	
	Finally, note that $(\mu,\pi)$ is feasible for Program \ref{progr:bay-optrc}, thanks to the fact that when $\mu^\lambda(a)=0$ then $x_a^{\lambda, i}(\omega)=0$ for all $\lambda\in\Lambda$, $i\in N$ and $\omega\in\Omega$. This guarantees that $x^{\lambda, i}_a(\omega)=\mu^\lambda(a)\pi^{\lambda, i}_a(\omega) $ for each $\lambda\in\Lambda$, $i \in N$, $a \in \A$, and $\omega \in \Omega$. 
\end{proof}

\section{Proofs Omitted from Section \ref{sec:approx}}

\propbaysw*

\begin{proof}
	Consider the instance of the Bayesian principal-multi-agent problem with $n=2$ agents, in which each agent $i$ has two possible types $\Lambda_i=\{\lambdaone, \lambdatwo\}$, with $\lambdaone=1$, $\lambdatwo=3/\alpha$.  Distributions $G_i$ are defined for all $i\in N$ as follows:
	\[
	G_i(\lambdaone) = \frac{\alpha^2/9}{1 - \alpha/3 + \alpha^2/9}\quad\quad\textnormal{and}\quad\quad G_i(\lambdatwo) = \frac{1 - \alpha/3}{1 - \alpha/3 + \alpha^2/9}.
	\]
	The agents' types are drawn independently, hence the distribution $G$ is defined such that $G(\lambda) = G_1(\lambda_1)G_2(\lambda_2)$ for all $\lambda\in\Lambda$.
	For each agent, the action set is composed by two distinct actions $A_i=\{\aone,\atwo\}$, with costs $c_i^{\lambda_i}(\aone) = \lambda_i\alpha/3$ and $c_i^{\lambda_i}(\atwo)=0$, for each $\lambda_i\in \Lambda_i$. The set of outcomes is $\Omega=\{\omega_1, \omega_2\}$, with $r_{\omega_1} = 1$ and $r_{\omega_2}=0$. Finally, the outcome probabilities are independent from the agents' types and are defined as follows: 
	\begin{table}[H]
		\centering
		\begin{tabular}{|c | c c |}
			\hline
			$F_a(\omega)$ & $\omega_1$ & $\omega_2$  \\
			\hline
			$\aone\aone$ & 1 & 0\\
			$\aone\atwo$ & $\alpha/3$ & $1-\alpha/3$  \\
			$\atwo\aone$ & $\alpha/3$ & $1-\alpha/3$  \\
			$\atwo\atwo$ & 0 & 1 \\
			\hline
		\end{tabular}
	\end{table}
	We prove the proposition in two steps: first we show that $\vsw^{1/\alpha} > 0$, and then we show that $\boptdc = 0$. 
	
	\paragraph{Virtual social welfare.} By the definition of virtual social welfare, we have that $\vsw^{1/\alpha} = \mathbb{E}_{\lambda\sim G} \vsw^{1/\alpha}(\lambda)$, where $\vsw^{1/\alpha}(\lambda) = \max_{a\in\A}\left\{\sum_{\omega\in\Omega}F_a(\omega)r_\omega - \sum_{i\in N}c_i^{\lambda_i}(a_i)/\alpha\right\}\geq 0$. Then, it is enough to notice that $G(\lambdaone\lambdaone) > 0$ and
	\begin{align*}
		\vsw^{1/\alpha}(\lambdaone,\lambdaone) &= \max\left\{\frac{1}{3}, \frac{\alpha - 1}{3}, 0\right\} = \frac{1}{3} > 0 
	\end{align*}
	which implies that $\vsw^{1/\alpha} > 0$. 
	
	\paragraph{Optimal deterministic contract.} Now, we move to analyze the value of $\boptdc$. For each $\lambda\in\Lambda$, let $(\mathfrak{a}, p)$ be an optimal deterministic contract (\emph{i.e.,} an optimal solution to Program \ref{progr:bay-optdc}) and, for each $\lambda\in\Lambda$, let $\boptdc(\lambda)$ be the expected utility for the principal when they commit to $(\mathfrak{a}, p)$ and the profile of types is $\lambda$, \emph{i.e.,}
	\[
	\boptdc(\lambda) = \sum_{\omega\in\Omega} F_{a^\lambda}(\omega)\left(r_\omega - \sum_{i\in N}p^{\lambda, i}(\omega)\right).
	\]
	Then, we can write $\boptdc = \mathbb{E}_{\lambda\sim G} \boptdc(\lambda)$. By individual rationality, it must hold: 
	\[
	\boptdc(\lambda) \leq \sum_{\omega\in\Omega}F_{a^\lambda}(\omega)r_\omega - \sum_{i\in N} c_i^{\lambda_i}(a_i^\lambda) \leq \sw(\lambda),
	\]
	where $\sw(\lambda)$ is the social welfare of the  non-Bayesian principal-multi-agent problem instance $(N, \Omega, (A_i)_{i\in N}, (c_i^{\lambda_i})_{i\in N}, F, r)$ defined by the profile of types $\lambda$. Let us  consider first the instances defined by types $(\lambdaone,\lambdatwo)$, $(\lambdatwo, \lambdaone)$ and $(\lambdatwo,\lambdatwo)$. Notice that
	\begin{align}
		\sw(\lambdaone, \lambdatwo) &= \sw(\lambdatwo, \lambdaone) = \max\left\{-\frac{\alpha}{3}, \frac{\alpha}{3} - 1, 0\right\} = 0 \label{eq:boundeqsw12_propgap}\\
		\sw(\lambdatwo, \lambdatwo) &= \max\left\{-1, \frac{\alpha}{3}-1, 0\right\} = 0,\label{eq:boundeqsw22_propgap}
	\end{align}
	which implies that $\boptdc(\lambdatwo,\lambdaone) \le 0$,  $\boptdc(\lambdaone, \lambdatwo)\le 0$, and $\boptdc(\lambdatwo,\lambdatwo) \le 0$.
	Now, focus on the instance defined by types $(\lambdaone, \lambdaone)$. By individual rationality, we have that
	\begin{align*}
		\boptdc(\lambdaone, \lambdaone) &= \sum_{\omega\in\Omega} F_{a^{(\lambdaone,\lambdaone)}}(\omega)\left(r_\omega - \sum_{i\in N} p^{(\lambdaone, \lambdaone), i}(\omega)\right)\\
		&\leq \sum_{\omega\in\Omega} F_{a^{(\lambdaone,\lambdaone)}}(\omega)r_{\omega} - \sum_{i\in N} c_i^{\lambdaone}(a_i) \\
		&= 
		\begin{cases}
			1 - \frac{2}{3}\alpha & \textnormal{if } a^{(\lambdaone,\lambdaone)}=(\aone,\aone) \\
			0 & \text{otherwise.}
		\end{cases}
	\end{align*}

	Thus, if $a^{(\lambdaone,\lambdaone)}\neq(\aone, \aone)$, we would have that
	\[
	\boptdc = \mathbb{E}_{\lambda\sim G}\boptdc(\lambda) \le 0,
	\]
	which gives the desired result. 
	
	Assume, instead, $a^{(\lambdaone,\lambdaone)}=(\aone, \aone)$. By Constraint \eqref{eq:bayoptdc_ic} of Program \ref{progr:bay-optdc}, we have that, for all $i\in N$, it holds
	\begin{align*}
		\sum_{\omega\in\Omega} F_{(\aone, \aone)}(\omega) p^{(\lambdaone,\lambdaone), i}(\omega) - c_i^{\lambdaone}(\aone) &= p^{(\lambdaone,\lambdaone), i}(\omega_1) - \frac{\alpha}{3} \\
		&\geq \sum_{\omega\in\Omega} F_{(\atwo, \aone)}(\omega) p^{(\lambdaone,\lambdaone), i}(\omega) - c_i^{\lambdaone}(\atwo) \\
		&= \frac{\alpha}{3} p^{(\lambdaone,\lambdaone), i}(\omega_1) + (1-\frac{\alpha}{3}) p^{(\lambdaone,\lambdaone), i}(\omega_2).
	\end{align*}
	Rearranging and using $p^{(\lambdaone,\lambdaone), i}(\omega_2)\ge 0$, we get 
	\begin{equation}\label{eq:lowerboundpay1_bay}
		p^{(\lambdaone,\lambdaone), i}(\omega_1) \ge \frac{\alpha/3}{1-\alpha/3}\quad\forall i\in N.
	\end{equation}
	Moreover, Constraint~\eqref{eq:bayoptdc_ic} guarantees that, for all $i\in N$, it holds:
	\begin{align}
		\sum_{\omega\in\Omega}& F_{a^{(\lambdatwo,\lambdaone)}}(\omega)p^{(\lambdatwo,\lambdaone), 1}(\omega) - c_1^{\lambdatwo}(a_i^{\lambdatwo})\ge \sum_{\omega\in\Omega} F_{(\atwo, \aone)}(\omega)p^{(\lambdaone,\lambdaone), 1}(\omega) - c_1^{\lambdatwo}(\atwo)\notag\\
		&= \frac{\alpha}{3} p^{(\lambdaone,\lambdaone), 1}(\omega_1) + \left(1 - \frac{\alpha}{3}\right) p^{(\lambdaone,\lambdaone), 1}(\omega_2) \notag \\
		&\ge \frac{\alpha}{3} p^{(\lambdaone,\lambdaone), 1}(\omega_1) \notag\\
		&\ge \frac{\alpha^2/9}{1-\alpha/3}, \label{eq:boundsw1_bayprop}
	\end{align}
	where we used $p^{(\lambdaone,\lambdaone), i}(\omega_2)\ge 0$ and \cref{eq:lowerboundpay1_bay}. 
	In a similar way, we can obtain the following inequality:
	\[
	\sum_{\omega\in\Omega} F_{a^{(\lambdaone,\lambdatwo)}}(\omega)p^{(\lambdaone,\lambdatwo), 2}(\omega) - c_2^{\lambdatwo}(a_i^{\lambdatwo}) \ge \frac{\alpha^2/9}{1-\alpha/3}.
	\]	 
	Then, to prove the proposition, it is enough to recover suitable bounds for the values of $\boptdc(\lambda)$ for each $\lambda\in\Lambda$. In particular, notice that
	\begin{align*}
		\boptdc(\lambdatwo, \lambdaone) &= \sum_{\omega\in\Omega}F_{a^{(\lambdatwo, \lambdaone)}}(\omega)\left(r_\omega - \sum_{i\in N}p^{(\lambdatwo,\lambdaone), i}(\omega) \right)\hspace{6cm}\, \\
		&\specialcell{\le \sum_{\omega\in\Omega}F_{a^{(\lambdatwo, \lambdaone)}}(\omega)r_\omega - \sum_{i\in N}c_i^{\lambda_i}(a_i^{(\lambdatwo,\lambdaone)})- \frac{\alpha^2/9}{1-\alpha/3}\hfill (\textnormal{\cref{eq:boundsw1_bayprop} and IR})}\\
		&\specialcell{\le \sw(\lambdaone,\lambdatwo) - \frac{\alpha^2/9}{1-\alpha/3} \hfill (\textnormal{Definition of }\sw)} \\
		&\specialcell{\le - \frac{\alpha^2/9}{1-\alpha/3}.\hfill(\textnormal{\cref{eq:boundeqsw12_propgap}})}
	\end{align*}
	Following similar steps, we can prove that
	\[
	\boptdc(\lambdaone,\lambdatwo)\le - \frac{\alpha^2/9}{1-\alpha/3}.
	\]
	Finally, using that $\boptdc(\lambdaone,\lambdaone)\le 1$ and that $\boptdc(\lambdatwo,\lambdatwo)\le\sw(\lambdatwo,\lambdatwo)\le 0$, we can conclude that
	\begin{align*}
		&\boptdc = \sum_{\lambda\in\Lambda}G(\lambda) \boptdc(\lambda) \\
		&\hspace{2mm}\le \left(\frac{\alpha^2/9}{1 - \alpha/3 + \alpha^2/9}\right)^2 \boptdc(\lambdaone,\lambdaone) + \frac{\alpha^2/9\left(1 - \alpha/3\right)}{(1 - \alpha/3 + \alpha^2/9)^2} \left(\boptdc(\lambdaone,\lambdatwo) + \boptdc(\lambdatwo,\lambdaone)\right) \\
		&\hspace{2mm}\le \frac{\alpha^4/81}{(1 - \alpha/3 + \alpha^2/9)^2} -  2\frac{\alpha^4/81(1-\alpha/3)}{(1 - \alpha/3 + \alpha^2/9)^2\left(1-\alpha/3\right)} \\
		&\hspace{2mm}= -  \frac{\alpha^4/81(1-\alpha/3)}{(1 - \alpha/3 + \alpha^2/9)^2\left(1-\alpha/3\right)} \\
		&\hspace{2mm}\le 0.
	\end{align*}
	Notice that $\boptdc=0$ follows trivially from the fact that the contract that never pays the agents and always recommend action profile $\atwo\atwo$ guarantees utility $0$ to the principal.
	
\end{proof}

\lemmadsicbay*
\begin{proof}
	Fix $\delta>0$.  First, notice that, since there always exists an action profile $a\in\A$ with $0$ costs and $0$ reward for the principal, it always holds that
	\[
	\sum_{\omega\in\Omega} F^\lambda_{a_\delta^\lambda}(\omega)r_\omega - n(1+\delta)\sum_{i\in N}c_i^{\lambda_i}(a_{\delta,i}^\lambda)\ge 0\quad\quad\forall\lambda\in\Lambda,
	\]
	which rearranged yields:
	\begin{equation}\label{eq:equation1lemmadsic}
		\frac{1}{n(1+\delta)} \sum_{\omega\in\Omega} F^\lambda_{a_\delta^\lambda}(\omega)r_\omega \ge \sum_{i\in N}c_i^{\lambda_i}(a_{\delta,i}^\lambda)\quad\quad\forall\lambda\in\Lambda.
	\end{equation}
	
	Then, let us consider the IR constraint. The following holds for all $i\in N$ and for all $\lambda\in\Lambda$:
	\begin{align*}
		\sum_{\omega\in\Omega}F^\lambda_{a_\delta^\lambda}(\omega)p_\delta^{\lambda,i}(\omega) &= \frac{1}{n(1+\delta)}\sum_{\omega\in\Omega}F^\lambda_{a_\delta^\lambda}(\omega)r_\omega - \sum_{j\in N\setminus\{i\}} c_j^{\lambda_j}(a_{\delta,j}^\lambda)\tag{Definition of $p_\delta$} \\
		&\ge \sum_{i\in N}c_i^{\lambda_i}(a_{\delta,i}^\lambda) - \sum_{j\in N\setminus\{i\}} c_j^{\lambda_j}(a_{\delta,j}^\lambda)\tag{\cref{eq:equation1lemmadsic}} \\
		&= c_i^{\lambda_i}(a_{\delta, i}^\lambda),
	\end{align*}
	which proves that $(\mathfrak{a}_\delta, p_\delta)$ satisfies constraint \eqref{eq:bayoptdc_ir}. 
	
	Now, consider the DSIC constraint (\cref{eq:bayoptdc_ic}). Let $\lambda$ be the true type profile, and consider the case in which agent $i$ misreport their type by communicating type $\lambda_i^\prime$ and then plays action $a_i$. Their expected utility after following this behavior satisfies the following: 
	\begin{align*}
		&\sum_{\omega\in\Omega} F^\lambda_{(a_i, a_{\delta,-i}^{(\lambda_i^\prime, \lambda_{-i})})}(\omega)p_\delta^{(\lambda_i^\prime, \lambda_{-i}), i}(\omega) - c_i^{\lambda_i}(a_i) \\
		&\hspace{2mm}= \frac{1}{n(1+\delta)} \sum_{\omega\in\Omega} F^\lambda_{(a_i, a_{\delta,-i}^{(\lambda_i^\prime, \lambda_{-i})})}(\omega)r_\omega - \sum_{j\in N\setminus \{i\}} c_j^{\lambda_j}(a_{\delta,j}^{(\lambda_i^\prime, \lambda_{-i})}) - c_i^{\lambda_i}(a_i) \tag{Definition of $p_\delta$}\\
		&\hspace{2mm}\le \frac{1}{n(1+\delta)} \sum_{\omega\in\Omega} F^\lambda_{a_\delta^\lambda}(\omega)r_\omega - \sum_{j\in N} c_j^{\lambda_j}(a_{\delta,j}^\lambda) \tag{Definition of $a_\delta^\lambda$} \\
		&\hspace{2mm}=\sum_{\omega\in\Omega} F^\lambda_{a_\delta^\lambda}(\omega) p_\delta^{\lambda,i}(\omega) - c_i^{\lambda_i}(a_{\delta,i}^\lambda)\tag{Definition of $p_\delta$}.
	\end{align*}
	This proves that $(\mathfrak{a}, p)$ satisfies also Constraint \eqref{eq:bayoptdc_ic}. This concludes the proof.
\end{proof}

\lemmaapproxvswbay*

\begin{proof}
	Fix $\delta>0$ and $\lambda\in\Lambda$. First, notice that the following holds:
	\begin{align*}
		\sum_{\omega\in\Omega} F^\lambda_{a_\delta^\lambda}(\omega) r_\omega &\ge \sum_{\omega\in\Omega} F^\lambda_{a_\delta^\lambda}(\omega) r_\omega - n(1+\delta)\sum_{i\in N} c_i^{\lambda_i}(a_{\delta, i}^\lambda) \tag{$c\ge0$}\\
		&= \max_{a\in \A}\left\{\sum_{\omega\in\Omega} F^\lambda_{a}(\omega) r_\omega - n(1+\delta)\sum_{i\in N} c_i^{\lambda_i}(a_i)\right\}\tag{Definition of $a_\delta^\lambda$} \\
		&= \vsw^{n(1+\delta)}(\lambda)\tag{Definition of $\vsw$}.
	\end{align*}
	 Then, we can use the previous inequality to conclude that
	\begin{align*}
		\sum_{\omega\in\Omega} F^\lambda_{a_\delta^\lambda}(\omega)\left(r_\omega - \sum_{i\in N}p_\delta^{\lambda, i}(\omega)\right) &= \frac{\delta}{1+\delta}\sum_{\omega\in\Omega} F^\lambda_{a_\delta^\lambda}(\omega)r_\omega + \sum_{i\in N}\sum_{j\in N\setminus\{i\}} c_j^{\lambda_j}(a_{\delta, j}^\lambda) \tag{Definition of $p_\delta$}\\
		&\ge \frac{\delta}{1+\delta}\sum_{\omega\in\Omega} F^\lambda_{a_\delta^\lambda}(\omega)r_\omega\tag{$c>0$} \\
		&\ge \frac{\delta}{1+\delta}\vsw^{n(1+\delta)}(\lambda).
	\end{align*}
	This concludes the proof.
\end{proof}

\proplowerboundbay*
\begin{proof}
	Fix $\alpha>0$ and $\eps\in (0,1)$ and consider the following instance of the Bayesian principal-multi-agent problem. The set of agents is $N=\range{n}$, where $n\coloneqq \left\lceil (2/\eps)^{2/\alpha}\right\rceil$, while the outcomes are $\Omega=\{\omega_1,\omega_2\}$, where $r_{\omega_1} = 1$ and $r_{\omega_2}=0$. The set of types $\Lambda_i$ for each agent $i$ is composed by two discrete types $\lambdaone=0$ and $\lambdatwo=1$ and the joint distribution $G$ is characterized as follows:
	\[
	G(\lambda) = \begin{cases}
		\frac{n^{-\alpha/2}}{\gamma} &\textnormal{if }\lambda_i=\lambdaone,\forall i\in N \\
		\frac{1}{n\gamma} & \textnormal{if } \exists i\in N\textnormal{ s.t. } \lambda_i=\lambdatwo,\wedge\, \lambda_j = \lambdaone,\forall j\in N\setminus\{i\}\\
		0 &\textnormal{otherwise}
	\end{cases}\quad\quad\forall\lambda\in\Lambda,
	\]
	where, for notational convenience, we let $\gamma \coloneqq n^{-\alpha/2} + 1$.
	
	Each agent $i$ can choose between two actions $A_i=\{\aone,\atwo\}$. The outcome distribution is independent from the agents' types and is specified as follows: 
	\[
	F_a(\omega_1) = \begin{cases}
		1 & \textnormal{if } a_i=\aone,\forall i\in N\\
		0 & \textnormal{otherwise}
	\end{cases}\quad\quad\forall a\in\A.
	\]
	Intuitively, the outcome $\omega_1$ guaranteeing positive reward to the principal is induced (deterministically) only if all agents play action $\aone$. The action costs are the following: 
	\[
	c_i^{\lambda_i}(a_i) = \begin{cases}
		\frac{\lambda_i}{2n^{1-\alpha}}&\textnormal{if } a_i=\aone \\
		0&\text{otherwise}
	\end{cases} \quad\quad\forall i\in N,\forall\lambda_i\in\Lambda_i,\forall a_i\in A_i.
	\]
	We prove this result in two steps: first we derive a lower bound for $\vsw^{n^{1-\alpha}}$, then we show that the principal's utility under an optimal deterministic contract is upper bounded by a suitably defined quantity.	
	
	\paragraph{Lower bound for $\vsw^{n^{1-\alpha}}$.}
	Notice that, letting $\bar a\in\A$ be such that $\bar a_i=\aone$ for all $i\in N$ and starting from the definition of virtual social welfare, we can prove the following: 
	\begin{align*}
		\vsw^{n^{1-\alpha}} &\coloneqq \mathbb{E}_{\lambda\sim G}\left[\max_{a\in\A}\left\{\sum_{\omega\in\Omega}F_a(\omega)r_\omega - n^{1-\alpha}\sum_{i\in N}c_i^{\lambda_i}(a_i)\right\}\right]\hspace{6cm} \\
		&\geq \mathbb{E}_{\lambda\sim G}\left[\sum_{\omega\in\Omega}F_{\bar a}(\omega)r_\omega - n^{1-\alpha}\sum_{i\in N}c_i^{\lambda_i}(\bar a_i)\right] \\
		&\specialcell{=1 - \frac{1}{2\gamma} \hfill (\textnormal{Definition of } G)} \\
		&\specialcell{\ge \frac{1}{2\gamma}.\hfill (\gamma \ge 1)}
	\end{align*}
	
	\paragraph{Upper bound for $\nollboptdc$.}
	Let $(\mathfrak{a}, p)$ be an optimal deterministic contract without limited liability (\emph{i.e.,} such that it achieves expected utility $\nollboptdc$ for the principal). With a slight abuse of notation, for each $i\in N$, we denote as $\lambda^{(i)}\in\Lambda$ the profile of types such that $\lambda^{(i)}_i = \lambdatwo$ and $\lambda_j^{(i)} = \lambdaone$ for each $j\in N\setminus \{i\}$. Then, for each $\lambda\in\Lambda$, let $N^\circ\subseteq N$ be the set of agents incentivized to play action $\aone$ when their type is $\lambdatwo$. Formally, \(N^\circ \coloneqq \{i\in N: a_i^{\lambda^{(i)}} = \aone\}\). Additionally, we denote as $\kappa\coloneqq |N^\circ|$ the number of agents belonging to $N^\circ$. 
	Then, the IC and IR constraints guarantee that for each $i\in N^\circ$
	\begin{align}
		\sum_{\omega\in\Omega}F_{a^{(\lambdaone,..,\lambdaone)}}(\omega)p^{(\lambdaone,..,\lambdaone), i}(\omega) &\ge \sum_{\omega\in\Omega} F_{a^{\lambda^{(i)}}}(\omega) p^{\lambda^{(i)}, i}(\omega) \tag{IC} \\
		&\ge c_i^{\lambdatwo}(a_i^{\lambda^{(i)}})\tag{IR} \\
		&= \frac{1}{2n^{1-\alpha}}.\label{eq:boundpay1_proplb} 
	\end{align}
	
	Moreover, notice that IR also implies that for each $i\in N$ and for each $\lambda\in\Lambda$ it holds
	\begin{equation}\label{eq:boundpay2_proplb} 
		\sum_{\omega\in\Omega}F_{a^{\lambda}}(\omega)p^{\lambda, i}(\omega) \ge c_i^{\lambda_i}(a_i^\lambda) \ge 0.
	\end{equation}	
	
	Finally, for each  $j\in N^\circ$, the following chain of inequalities holds: 
	\begin{align}
		\sum_{\omega\in\Omega} F_{a^{\lambda^{(j)}}}(\omega)\left(r_\omega - \sum_{i\in N} p^{\lambda^{(j)}, i}(\omega)\right) &\le 1 - \sum_{\omega\in\Omega} \sum_{i\in N} F_{a^{\lambda^{(j)}}}(\omega) p^{\lambda^{(j)}, i}(\omega) \notag \\
		&\le 1 - \sum_{\omega\in\Omega}F_{a^{\lambda^{(j)}}}(\omega)p^{\lambda^{(j)}, j}(\omega) \notag \\
		&\le 1 - \frac{1}{2n^{1-\alpha}},\label{eq:boundpay3_proplp}
	\end{align}
	where we used $r_\omega\le 1$ for each $\omega \in \Omega$, \cref{eq:boundpay2_proplb} and \cref{eq:boundpay1_proplb}. If $j\notin N^\circ$, the following holds:
	\begin{align}
		\sum_{\omega\in\Omega} F_{a^{\lambda^{(j)}}}(\omega)\left(r_\omega - \sum_{i\in N} p^{\lambda^{(j)}, i}(\omega)\right) &= - \sum_{\omega\in\Omega} \sum_{i\in N} F_{a^{\lambda^{(j)}}}(\omega) p^{\lambda^{(j)}, i}(\omega) \le 0,\label{eq:boundpay4_proplb}
	\end{align}
	where we used the fact that $F_{a^{\lambda^{(j)}}}(\omega_2) = 1$ when $j\notin N^\circ$ and \cref{eq:boundpay2_proplb}.
	
	Then, by definition of $G$, we can write the following:
	\begin{align*}
		\nollboptdc &= \sum_{\lambda\in\Lambda}\sum_{\omega\in\Omega} G(\lambda)F_{a^\lambda}(\omega)\left(r_\omega - \sum_{i\in N} p^{\lambda, i}(\omega) \right)\hspace{6.2cm}\,\\
		&= \underbrace{G(\lambdaone,..,\lambdaone)\sum_{\omega\in\Omega}F_{a^{(\lambdaone,..,\lambdaone)}}(\omega)\left(r_\omega - \sum_{i\in N} p^{(\lambdaone,...,\lambdaone),i}(\omega)\right)}_{(A)} \\
		&\hspace{6mm} + \underbrace{\sum_{j\in N} G(\lambda^{(j)})\sum_{\omega\in\Omega}\sum_{i\in N}F_{a^{\lambda^{(j)}}}(\omega)\left(r_\omega - \sum_{i\in N}p^{\lambda^{(j)}, i}(\omega)\right)}_{(B)}.
	\end{align*}
	We analyze the two terms separately. For (A), it holds that
	\begin{align*}
		(A) &= G(\lambdaone,..,\lambdaone)\sum_{\omega\in\Omega}F_{a^{(\lambdaone,..,\lambdaone)}}(\omega)\left(r_\omega - \sum_{i\in N} p^{(\lambdaone,...,\lambdaone),i}(\omega)\right) \hspace{5.5cm}\, \\
		&\specialcell{\le G(\lambdaone,..,\lambdaone) \left(1 - \sum_{\omega\in\Omega}F_{a^{(\lambdaone,..,\lambdaone)}}(\omega)\sum_{i\in N}p^{(\lambdaone,...,\lambdaone),i}(\omega)\right)\hfill (r\le 1)} \\
		&= G(\lambdaone,..,\lambdaone) \left[1 - \sum_{\omega\in\Omega}F_{a^{(\lambdaone,..,\lambdaone)}}(\omega)\left(\sum_{i\in N^\circ}p^{(\lambdaone,...,\lambdaone),i}(\omega) +\sum_{i\in N\setminus N^\circ}p^{(\lambdaone,...,\lambdaone),i}(\omega)\right)\right]\\
		&\specialcell{\le G(\lambdaone,..,\lambdaone)\left(1 - \frac{\kappa}{2n^{1-\alpha}}\right) \hfill (\textnormal{Eq. \eqref{eq:boundpay1_proplb} and \eqref{eq:boundpay2_proplb})}} \\
		&= \frac{n^{-\alpha/2}}{\gamma}\left(1 - \frac{\kappa}{2n^{1-\alpha}}\right).
	\end{align*}
	For $(B)$, it holds that 
	\begin{align*}
		(B) &= \sum_{j\in N} G(\lambda^{(j)})\sum_{\omega\in\Omega}\sum_{i\in N}F_{a^{\lambda^{(j)}}}(\omega)\left(r_\omega - \sum_{i\in N}p^{\lambda^{(j)}, i}(\omega)\right) \\
		&= \sum_{j\in N^\circ} G(\lambda^{(j)})\sum_{\omega\in\Omega}\sum_{i\in N}F_{a^{\lambda^{(j)}}}(\omega)\left(r_\omega - \sum_{i\in N}p^{\lambda^{(j)}, i}(\omega)\right) \\
		&\hspace{6mm}+ \sum_{j\in N\setminus N^\circ} G(\lambda^{(j)})\sum_{\omega\in\Omega}\sum_{i\in N}F_{a^{\lambda^{(j)}}}(\omega)\left(r_\omega - \sum_{i\in N}p^{\lambda^{(j)}, i}(\omega)\right) \\
		&\le \sum_{j\in N^\circ} G(\lambda^{(j)})\left(1 - \frac{1}{2n^{1-\alpha}}\right) \tag{Eq. \eqref{eq:boundpay3_proplp} and \eqref{eq:boundpay4_proplb}} \\
		&= \frac{\kappa}{n\gamma} \left(1 - \frac{1}{2n^{1-\alpha}}\right).
	\end{align*}
	Putting all together, we get
	\[
	\nollboptdc = (A) + (B)\le \frac{n^{-\alpha/2}}{\gamma}\left(1 - \frac{\kappa}{2n^{1-\alpha}}\right) + \frac{\kappa}{n\gamma} \left(1 - \frac{1}{2n^{1-\alpha}}\right).
	\]
	Taking the derivative with respect to $\kappa$, we get the following function: 
	\[
	-\frac{1}{2\gamma n^{1-\alpha/2}} + \frac{1}{n\gamma} - \frac{1}{2\gamma n^{2-\alpha}} \le \frac{1}{n\gamma} - \frac{1}{2\gamma n^{1-\alpha/2}} \le 0,
	\]
	where the last inequality follows from $n\ge (2/\eps)^{2/\alpha}\ge 2^{2/\alpha}$ 
	Then, since the derivative with respect to $\kappa$ is negative, the upper bound on \nollboptdc is maximized for $k=0$, yielding: 
	\[
	\nollboptdc \le \frac{n^{-\alpha/2}}{\gamma}.
	\]
	
	Thus, we can conclude that 
	\[
	\frac{\nollboptdc}{\vsw^{n^{1-\alpha}}} \le \frac{n^{-\alpha/2}}{\gamma}2\gamma = 2n^{-\alpha/2} \le \eps,
	\]
	as desired.	
\end{proof}

\thsocialwelfarebayesian*

\begin{proof}
	First, notice that for each $\lambda\in\Lambda$ and $\alpha>0$, the following holds:
	\begin{align}
		\vsw^{\alpha}(\lambda) &= \max_{a\in\A}\left\{\sum_{\omega\in\Omega}F_a(\omega) r_\omega - \alpha \sum_{i\in N}c_i^{\lambda_i}(a_i)\right\} \nonumber\\
		&=  \max_{a\in\A}\left\{\sum_{\omega\in\Omega}F_a(\omega) r_\omega - \sum_{i\in N}c_i^{\alpha\lambda_i}(a_i)\right\} \nonumber\\
		&= \sw(\alpha\lambda),\label{eq:thsocialwelfarebay1}
	\end{align}
	where we used the definition of $\vsw$, the fact that, since the types are single-dimensional, $\alpha c_i^{\lambda_i}(a_i) = c_i^{\alpha\lambda_i}(a_i)$, and the definition of $\sw$.
	Moreover, notice that for all $\alpha>0$ the following holds (see \emph{e.g.,} \citep[Theorem~12.6]{jacod2004probability}):
	\begin{align}
		\int_{\Lambda}\sw(\alpha\lambda)g(\lambda)d\lambda = \frac{1}{(\alpha)^n}\int_{\alpha\Lambda} \sw(\lambda)g\left(\frac{\lambda}{\alpha}\right)d\lambda.\label{eq:thsocialwelfarebay2}
	\end{align}

	Then, we can write the following chain of inequalities:
	\begin{align*}
		\mathbb{E}_{\lambda\sim G}&\left[\sum_{\omega\in\Omega} F_{a_\delta^\lambda}(\omega)\left(r_\omega - \sum_{i\in N}p_\delta^{\lambda, i}(\omega)\right)\right] \ge \frac{\delta}{1+\delta}\vsw^{n(1+\delta)} \hspace{3.8cm}(\textnormal{\cref{th:thapproxvswbay}})\\
		&= \frac{\delta}{1+\delta}\int_{\Lambda}\vsw^{n(1+\delta)}(\lambda)g(\lambda) d\lambda \\
		&\specialcell{= \frac{\delta}{1+\delta}\int_{\Lambda}\sw(n(1+\delta)\lambda)g(\lambda)d\lambda \hfill (\textnormal{\cref{eq:thsocialwelfarebay1}})}\\
		&\specialcell{=\frac{\delta}{1+\delta}\frac{1}{[n(1+\delta)]^n}\int_{n(1+\delta)\Lambda}\sw(\lambda)g\left(\frac{\lambda}{n(1+\delta)}\right)d\lambda\hfill(\textnormal{\cref{eq:thsocialwelfarebay2}})} \\
		&\specialcell{\ge \frac{\delta}{1+\delta}\frac{1}{[n(1+\delta)]^n}\int_{n(1+\delta)\Lambda}\sw(\lambda)g(\lambda)d\lambda\hfill(g \textnormal{ non-increasing})}	\\
		&= \frac{\delta}{1+\delta}\frac{1}{[n(1+\delta)]^n}\left(x\int_{n(1+\delta)\Lambda\cap \Lambda}\sw(\lambda)g(\lambda)d\lambda + \int_{n(1+\delta)\Lambda\setminus\Lambda}\sw(\lambda)g(\lambda)d\lambda\right) \\
		&\specialcell{\ge \frac{\delta}{1+\delta}\frac{1}{[n(1+\delta)]^n} \int_{n(1+\delta)\Lambda\cap \Lambda}\sw(\lambda)g(\lambda)d\lambda \hfill (\sw(\lambda) > 0)} \\
		&\specialcell{\ge \eta \frac{\delta}{1+\delta}\frac{1}{[n(1+\delta)]^n} \sw \hfill (\textnormal{Small-tail assumption})}.
	\end{align*}
	This concludes the proof.
\end{proof}

\end{document}